\newtheorem{theorem}{Theorem}
\newtheorem{lemma}[theorem]{Lemma}
\newtheorem{fact}{Fact}
\newtheorem{corollary}[theorem]{Corollary}
\newtheorem{proposition}[theorem]{Proposition}
\newtheorem{problem}{Problem}
\newtheorem{conjecture}{Conjecture}
\newtheorem*{claim*}{Claim}
\newtheorem*{definition*}{Definition}
\def\RR{\mathbb{R}}
\def\SS{\mathbb{S}}
\def\AA{{\cal A}}
\def\LL{{\cal L}}
\def\CC{{\cal C}}
\def\EE{{\cal E}}
\def\AAfiveA{{\cal N}_5^1}  
\def\AAfiveB{{\cal N}_5^2}  
\def\AAfiveC{{\cal N}_5^3}  
\def\AAfiveD{{\cal N}_5^4}  
\def\AAsixA{{\cal N}_6^\Delta}  
\def\AAsixB{{\cal N}_6^2}  
\def\AAsixC{{\cal N}_6^3}  
\def\AAsixER{{\cal N}_6^{ER}}
\def\AAsixMiscIntB{{\cal N}_{6}^{i6:2}}
\def\AAsixMiscIntC{{\cal N}_{6}^{i6:3}}
\def\AAsixMiscConVierA{{\cal N}_{6}^{c4:1}}
\def\AAsixMiscConVierB{{\cal N}_{6}^{c4:2}}
\def\AAsixMiscConAchtA{{\cal N}_{6}^{c8:1}}
\def\AAsixMiscConAchtB{{\cal N}_{6}^{c8:2}}
\def\AAsixMiscConVierundZwanzig{{\cal N}_{6}^{c24}}
\def\ITEMMACRO #1 ??? #2 ???{\smallskip\par\noindent
\hangindent=#2em\setbox0\hbox{#1 \kern5pt}%
\ifdim\wd0<\hangindent\setbox0\hbox to\hangindent{\hss#1\quad}\fi%
\box0\ignorespaces}
\def\Item(#1){\ITEMMACRO {\rm (#1)} ??? 1.8 ???}
\def\BrackItem[#1]{\ITEMMACRO [#1] ??? 1.8 ???}
\title{Arrangements of Pseudocircles:\\ On~Circularizability\footnote{
Partially supported by the DFG Grants FE~340/11-1 and FE~340/12-1.
Manfred Scheucher was partially supported by the ERC Advanced Research Grant no.~267165 (DISCONV).
We gratefully acknowledge the computing time granted 
by TBK Automatisierung und Messtechnik GmbH
and by the Institute of Software Technology, Graz University of Technology.
We also thank Günter Rote, Boris Springborn, and the anonymous reviewers for valuable comments. 
}}
\newcommand*\samethanks[1][\value{footnote}]{\footnotemark[#1]}
\author{Stefan Felsner\thanks{Institut f\"ur Mathematik, Technische Universit\"at Berlin, Germany, 
\texttt{\{felsner,scheucher\}@math.tu-berlin.de}}
\and Manfred Scheucher\samethanks}
\begin{document}
\maketitle

\begin{abstract}
  An arrangement of pseudocircles is a collection of simple closed
  curves on the sphere or in the plane such that any two of the curves
  are either disjoint or intersect in exactly two crossing points.
  We call an arrangement intersecting if every pair of pseudocircles
  intersects twice. An arrangement is circularizable if there is a
  combinatorially equivalent arrangement of circles.

  In this paper we present the results of the first thorough study of
  circularizability. We show that there are exactly four non-circularizable
  arrangements of $5$ pseudocircles (one of them was known before).
  In the set of 2131 digon-free intersecting arrangements
  of $6$ pseudocircles we identify the three non-circularizable
  examples. We also show non-circularizability of 8 additional 
  arrangements of $6$ pseudocircles which have a group of symmetries
  of size at least~4.
  
  Most of our non-circularizability proofs depend on incidence theorems like
  Miquel's. In other cases we contradict circularizability by considering a
  continuous deformation  
  where the circles of an assumed circle representation grow
  or shrink in a controlled way.

  The claims that we have all non-circularizable arrangements
  with the given properties are based on a program that generated all
  arrangements up to a certain size. Given the complete lists of arrangements,
  we used heuristics to find circle representations. Examples where the
  heuristics failed were examined by hand.
\end{abstract}

\section{Introduction}
\label{sec:introduction}

Arrangements of pseudocircles generalize arrangements of circles in
the same vein as arrangements of pseudolines generalize arrangements
of lines. The study of arrangements of pseudolines was initiated by
Levi~\cite{Levi26} in~1926. Since then arrangements of pseudolines
were intensively studied. The handbook article on the
topic~\cite{FelsnerGoodman2016} lists more than 100 references. To the
best of our knowledge the study of arrangements of pseudocircles was
initiated by Gr\"unbaum~\cite{Gr72} in the 1970s.

A \emph{pseudocircle} is a simple closed curve in the plane or on the
sphere. An \emph{arrangement of pseudocircles} is a collection of
pseudocircles with the property that the intersection of any two of the
pseudocircles is either empty or consists of two points where the curves
cross. Other authors also allow touching pseudocircles,
e.g.~\cite{anppss-lenses-04}. 
The \emph{(primal) graph of an arrangement} $\AA$ of
pseudocircles has the intersection points of pseudocircles as
\emph{vertices}, the vertices split each of the pseudocircles into
arcs, these are the \emph{edges} of the graph. Note that this graph
may have multiple edges and loop edges without vertices. 
The graph of an arrangement of pseudocircles
comes with a plane embedding, the faces of this embedding are the
\emph{cells} of the arrangement. 
A cell of the arrangement with $k$ crossings on
its boundary is a \emph{$k$-cell}. A 2-cell is also called a \emph{digon}
(some authors call it a \emph{lens}), and a 3-cell is also called a
\emph{triangle}.  An arrangement $\AA$ of pseudocircles is
\begin{description}
\item[\emph{simple},] if no three pseudocircles of $\AA$
  intersect in a common point;
\item[\emph{connected},] if the graph of the arrangement is connected;
\item[\emph{intersecting},] if any two pseudocircles of $\AA$
  intersect. 
\item[\emph{cylindrical},] if there are two cells of the
  arrangement $\AA$ which are separated by each of the pseudocircles.
\end{description}
Note that every intersecting arrangement is connected. 
In this paper we assume that arrangements are simple and connected. 
The few cases where these assumptions do not hold will be clearly marked.

Two arrangements $\AA$ and $\mathcal{B}$ are \emph{isomorphic}
if they induce homeomorphic cell decompositions of the compactified plane,
i.e., on the sphere. 
Stereographic projections can be used to map
between arrangements of pseudocircles in the plane and arrangements of
pseudocircles on the sphere. Such projections are also considered
isomorphisms.  
In particular, the isomorphism class of an
arrangement of pseudocircles in the plane is closed under changes of
the unbounded cell. 

Figure~\ref{fig:all3-conn} shows the three connected 
arrangements of three pseudocircles 
and Figure~\ref{fig:all4} shows the 21 connected 
arrangements of four pseudocircles. 
We call the unique digon-free
intersecting arrangement of three (pseudo)circles 
the \emph{Krupp}\footnote{This name refers to the logo of the Krupp AG, a
  German steel company. Krupp was the largest company in Europe at the
  beginning of the 20th century. 
  }.
The second intersecting arrangement 
is the \emph{NonKrupp}; this arrangement has digons.
The non-intersecting arrangement is the \emph{3-Chain}.

   \calc_figscale{18}
    \begin{figure}[htb]
    \centerline{\input{\path/all3-conn.pstex_t}}
    \caption{\label{fig:all3-conn}}
    \end{figure}
    VC
{The 3 connected arrangements of $n=3$ pseudocircles.
 (a)~\emph{Krupp}, (b)~\emph{NonKrupp}, (c)~\emph{3-Chain}.}

\begin{figure}[htp]
  \centering
  
  \begin{subfigure}[t]{.2\textwidth}
    \includegraphics[page=1,width=0.95\textwidth]{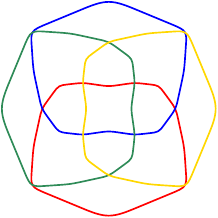}
    \caption{}
    \label{fig:all4_a}  
  \end{subfigure}
  \hfill
  \begin{subfigure}[t]{.2\textwidth}
    \includegraphics[page=1,width=0.95\textwidth]{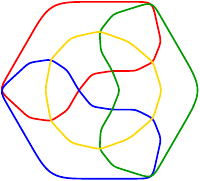}
    \caption{}
    \label{fig:all4_b}
  \end{subfigure}
  \hfill
  \begin{subfigure}[t]{.2\textwidth}
    \includegraphics[page=1,width=0.95\textwidth]{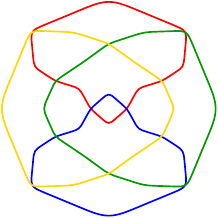}
    \caption{}
    \label{fig:all4_c}  
  \end{subfigure}
  \hfill
  \begin{subfigure}[t]{.2\textwidth}
    \includegraphics[page=1,width=0.95\textwidth]{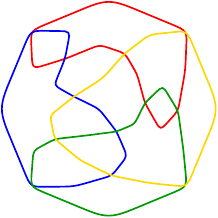}
    \caption{}
    \label{fig:all4_d}  
  \end{subfigure}

  \begin{subfigure}[t]{.2\textwidth}
    \includegraphics[page=1,width=0.95\textwidth]{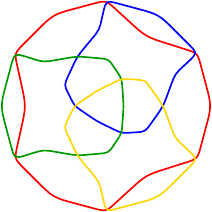}
    \caption{}
    \label{fig:all4_e}
  \end{subfigure}
  \hfill
  \begin{subfigure}[t]{.2\textwidth}
    \includegraphics[page=1,width=0.95\textwidth]{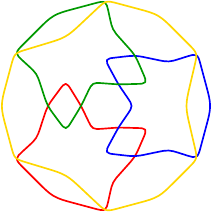}
    \caption{}
    \label{fig:all4_f}  
  \end{subfigure}
  \hfill
  \begin{subfigure}[t]{.2\textwidth}
    \includegraphics[page=1,width=0.95\textwidth]{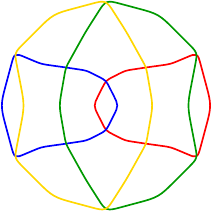}
    \caption{}
    \label{fig:all4_g}  
  \end{subfigure}
  \hfill
  \begin{subfigure}[t]{.2\textwidth}
    \includegraphics[page=1,width=0.95\textwidth]{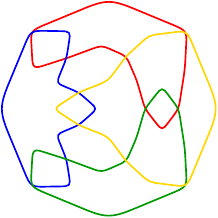}
    \caption{}
    \label{fig:all4_h}  
  \end{subfigure}

  \begin{subfigure}[t]{.2\textwidth}
    \includegraphics[page=1,width=0.95\textwidth]{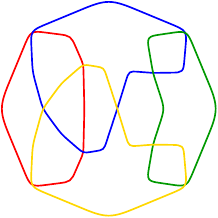}
    \caption{}
    \label{fig:all4_i}
  \end{subfigure}
  \hfill
  \begin{subfigure}[t]{.2\textwidth}
    \includegraphics[page=1,width=0.95\textwidth]{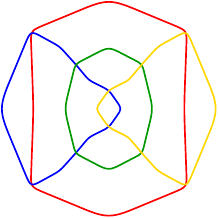}
    \caption{}
    \label{fig:all4_j}  
  \end{subfigure}
  \hfill
  \begin{subfigure}[t]{.2\textwidth}
    \includegraphics[page=1,width=0.95\textwidth]{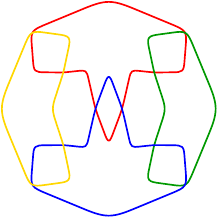}
    \caption{}
    \label{fig:all4_k}  
  \end{subfigure}
  \hfill
  \begin{subfigure}[t]{.2\textwidth}
    \includegraphics[page=1,width=0.95\textwidth]{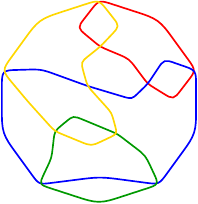}
    \caption{}
    \label{fig:all4_l}  
  \end{subfigure}

  \begin{subfigure}[t]{.2\textwidth}
    \includegraphics[page=1,width=0.95\textwidth]{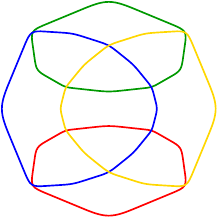}
    \caption{}
    \label{fig:all4_m}  
  \end{subfigure}
  \hfill
  \begin{subfigure}[t]{.2\textwidth}
    \includegraphics[page=1,width=0.95\textwidth]{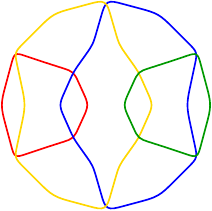}
    \caption{}
    \label{fig:all4_n}  
  \end{subfigure}
  \hfill
  \begin{subfigure}[t]{.2\textwidth}
    \includegraphics[page=1,width=0.95\textwidth]{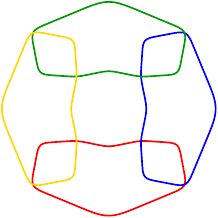}
    \caption{}
    \label{fig:all4_o}
  \end{subfigure}
  \hfill
  \begin{subfigure}[t]{.2\textwidth}
    \includegraphics[angle=-90,origin=c,page=1,width=0.95\textwidth]{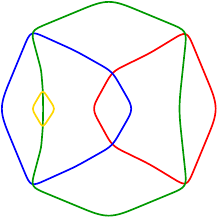}
    \caption{}
    \label{fig:all4_p}  
  \end{subfigure}

  \begin{subfigure}[t]{.2\textwidth}
    \includegraphics[angle=-90,origin=c,page=1,width=0.95\textwidth]{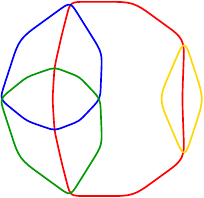}
    \caption{}
    \label{fig:all4_q}
  \end{subfigure}
  \hfill
  \begin{subfigure}[t]{.2\textwidth}
    \includegraphics[angle=-90,origin=c,page=1,width=0.95\textwidth]{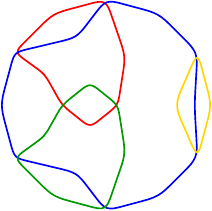}
    \caption{}
    \label{fig:all4_r}  
  \end{subfigure}
  \hfill
  \begin{subfigure}[t]{.2\textwidth}
    \includegraphics[angle=90,origin=c,page=1,width=0.95\textwidth]{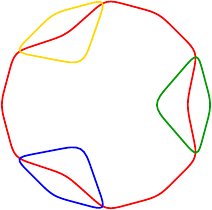}
    \caption{}
    \label{fig:all4_s}  
  \end{subfigure}
  \hfill
  \begin{subfigure}[t]{.2\textwidth}
    \includegraphics[angle=-90,origin=c,page=1,width=0.95\textwidth]{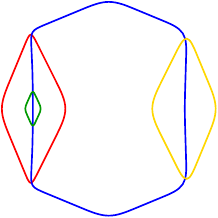}
    \caption{}
    \label{fig:all4_t}  
  \end{subfigure}

  \begin{subfigure}[t]{.2\textwidth}
    \includegraphics[page=1,width=0.95\textwidth]{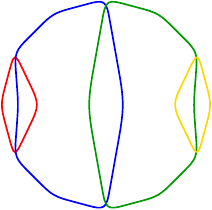}
    \caption{}
    \label{fig:all4_u}
  \end{subfigure}

  \caption{The 21 connected arrangements of $n=4$ pseudocircles.  The
    8 first arrangements (a)--(h) are intersecting.  The arrangements (a),
    (b), and (m) are digon-free. The arrangement (s) is the unique non-cylindrical.
  }
  
  \label{fig:all4}
\end{figure}


Every triple of great-circles on the sphere induces a Krupp
arrangement, hence, we call an arrangement of pseudocircles 
an \emph{arrangement of great-pseudo\-circles} if every
subarrangement induced by three pseudocircles is a Krupp.

Some authors think of  arrangements of great-pseudocircles
when they speak about arrangements of pseudocircles, this is
e.g.\ common practice in the theory of oriented matroids.
In fact, arrangements of great-pseudocircles serve to
represent rank 3 oriented matroids, cf.~\cite{blwsz-om-93}.
Planar partial cubes can be characterized as the duals
of so-called `non-separating' arrangements of pseudocircles,
these are certain arrangements such that no triple forms a NonKrupp~\cite{ak-cpchn-16}.

\begin{definition*}
An arrangement of pseudocircles is \emph{circularizable} if there is an
isomorphic arrangement of circles.
\end{definition*}

Preceeding our work there have been only few results about circularizability
of arrangements of pseudocircles. Edelsbrunner and Ramos~\cite{EdelsbrunnerRamos1997} 
presented an intersecting arrangement of 6 pseudocircles (with digons) 
which has no realization with circles, i.e., it is not circularizable
(see Figure~\ref{fig:edelsb-ramos}(a)). 
Linhart and Ortner~\cite{LO05} found a non-circularizable
non-intersecting arrangement of 5 pseudocircles with digons 
(see Figure~\ref{fig:n5_nonr_number2}).
They also proved that every intersecting arrangement of at most 4
pseudocircles is circularizable. Kang and M\"uller~\cite{KM14} extended the
result by showing that all arrangements with at most 4 pseudocircles are
circularizable. They also proved that deciding circularizability for connected
arrangements is $\NP$-hard.

\section{Overview}
\label{sec:overview-of-results}

In Section~\ref{sec:prelim} we present some background on 
arrangements of pseudocircles and provide tools that will be useful for
non-circularizability proofs. 

\medskip

In Section~\ref{sec:apgc}
we study arrangements of great-pseudocircles 
-- this class of arrangements of pseudocircles is in 
bijection with projective arrangements of pseudolines.
Our main theorem in this section is the Great-Circle Theorem
which allows to transfer knowledge regarding arrangements of
pseudolines to arrangements of pseudocircles.

\begin{restatable}[Great-Circle Theorem]{theorem}{PGCT}
\label{thm:PGC_theorem}
An arrangement of great-pseudocircles 
is circularizable (i.e., has a circle representation) 
if and only if it has a great-circle representation. 
\end{restatable}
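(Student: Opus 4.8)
The plan is to treat the two directions separately. One is immediate: a great-circle representation is in particular a circle representation, so any great-pseudocircle arrangement that admits a great-circle representation is circularizable. All the content is in the converse, so suppose we are given a circle representation on the sphere $\SS^2$ and let us manufacture from it a great-circle representation of the same isomorphism type. Realize each circle $C_i$ as $\SS^2\cap H_i$ with $H_i\colon n_i\cdot x=h_i$, where $n_i$ is a unit normal and $h_i\in(-1,1)$ is the signed distance of $H_i$ from the center $O$; the circle is \emph{great} exactly when $h_i=0$. First I would perturb the representation slightly so that every three of the normals $n_i,n_j,n_k$ are linearly independent (an open condition, hence harmless: a small perturbation keeps the arrangement simple and of the same combinatorial type). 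Then I contract all planes towards $O$ at once by setting $h_i(t)=(1-t)h_i$ for $t\in[0,1]$, so that at $t=1$ every plane passes through $O$ and every $C_i$ has become a great circle. The task reduces to certifying that the combinatorial type is constant along this deformation.

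For an arrangement of circles the type can change only through two events: a pair of circles ceasing to cross (tangency), or three circles passing through a common point. Neither occurs here. For a pair, the two crossings are $\ell_{ij}\cap\SS^2$ with $\ell_{ij}=H_i\cap H_j$; since the normals are fixed and the offsets scale by $(1-t)$, a direct computation gives $\operatorname{dist}(\ell_{ij}(t),O)=(1-t)\operatorname{dist}(\ell_{ij}(0),O)$, which stays strictly below $1$ throughout because the circles already cross at $t=0$. Hence every pair keeps crossing transversally. For a triple, linear independence of the normals makes the three planes meet in a single point $p_{ijk}$, and by Cramer's rule $p_{ijk}(t)=(1-t)\,p_{ijk}(0)$; the triple is concurrent on $\SS^2$ precisely when $|p_{ijk}(t)|=1$. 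Thus it suffices to know that $|p_{ijk}(0)|<1$ for every triple, since then $|p_{ijk}(t)|=(1-t)\,|p_{ijk}(0)|<1$ for all $t$ and no concurrence ever arises; note that at $t=1$ the resulting great circles are genuinely simple, as three planes through $O$ with independent normals meet only at $O\notin\SS^2$.

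The crux is therefore the following lemma: three pairwise crossing circles whose plane-normals are linearly independent form a \emph{Krupp} if and only if the common point $p$ of their planes satisfies $|p|<1$. I would prove this by a connectivity argument. As recorded in Figure~\ref{fig:all3-conn}, there are exactly two intersecting combinatorial types of three pseudocircles, Krupp and NonKrupp, so on the space of such circle triples the type is locally constant and can only switch across the hypersurface $|p|=1$ of concurrent (non-simple) configurations. Evaluating at the reference configuration of three great circles, where $p=O$ and the type is Krupp, fixes the correspondence as $|p|<1\Leftrightarrow\text{Krupp}$. Since our arrangement is an arrangement of great-pseudocircles, every induced triple is a Krupp, so $|p_{ijk}(0)|<1$ for all triples, which is exactly the input the deformation requires; letting $t\to1$ yields the desired great-circle representation.

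The main obstacle is this lemma together with the degeneracy bookkeeping around it: one must confirm that general position of the normals is reachable without leaving the isomorphism class, that the two listed events are indeed the only codimension-one transitions for arrangements of circles, and—most delicately—that the sign of $|p|-1$ is a \emph{complete} invariant distinguishing Krupp from NonKrupp. It is here that the two inputs do the real work: the finiteness statement that there are only two intersecting triples, and a single explicit reference configuration to orient the equivalence. Once the lemma is secured, the contraction $h_i(t)=(1-t)h_i$ finishes the argument essentially for free.
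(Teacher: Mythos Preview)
Your proof is correct and follows essentially the same strategy as the paper: move all planes toward the origin and use the equivalence ``Krupp $\Leftrightarrow$ triple-plane intersection lies inside $\SS$'' (Fact~\ref{fact:Krupp_is_inside} in the paper, stated there without proof) to guarantee that no flip occurs during the deformation. Your version is more explicit---you verify separately that neither tangencies nor concurrences arise and you supply a connectivity argument for the key lemma---though the preliminary perturbation to linearly independent normals is unnecessary, since a simple Krupp triple already forces this (three planes with coplanar normals either share a line, making the triple non-simple, or have pairwise intersection lines all parallel, so no circle can separate the two crossings of the other two).
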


Subsequent to the theorem, we present several direct consequences such
as the $\exists \mathbb{R}$-completeness of circularizability.
The complexity class $\exists \mathbb{R}$ 
consists of problems, that can be reduced in polynomial time to 
solvability of a system of polynomial inequalities in several variables over the reals,
and lies inbetween $\NP$ and $\PSPACE$.
Further background on $\exists \mathbb{R}$ can be found in~\cite{Matousek2014,SchaeferStefankovi2017}.

\bigskip

In Sections~\ref{sec:non-circ5} and~\ref{sec:non-circ6}, we present
the full classification of circularizable and non-circularizable
arrangements among all connected arrangements of $5$ pseudocircles and
all digon-free intersecting arrangements of $6$ pseudocircles.  With
the aid of computers we generated the complete lists of connected
arrangements of $n\leq 6$ pseudocircles and of intersecting
arrangements of $n\leq 7$ pseudocircles.  
For the class of arrangements of $n$~great-pseudocircles, 
the numbers were already determined for $n\leq 11$ 
\cite{Knuth1992,AichholzerAurenhammerKrasser2001,Krasser2003,AichholzerKrasser2006}. 
The respective numbers are
shown in Table~\ref{table:numbers}
(cf.\ sequences 
\href{http://oeis.org/A288568}{A288568}, 
\href{http://oeis.org/A296406}{A296406}, and
\href{http://oeis.org/A006248}{A006248}
on the OEIS~\cite{oeis}).
Given the complete lists of
arrangements, we used automatized heuristics to find circle
representations. Examples where the heuristics failed had to be
examined by hand.


\begin{table}[htb]\centering
\advance\tabcolsep5pt
\def\arraystretch{1.2}
\begin{tabular}{ l|rrrrrrr }
  $n$   	 	&3 	&4 	&5 	&6 		&7	&8	\\
\hline
\hline
\bfseries{connected}	&3	&21	&984	&609 423	&?		\\
+digon-free		&1	&3	&30	&4 509		&?		\\
\hline
con.+cylindrical	&3	&20	&900	&530 530	&?	\\
+digon-free		&1	&3	&30	&4 477	&?	\\
\hline
\hline
\bfseries{intersecting}		&2	&8	&278	&145 058	&447 905 202	&?\\
+digon-free			&1	&2	&14	&2 131		&3 012 972	&?\\
\hline
int.+cylindrical		&2	&8	&278	&144 395	&436 634 633	&?\\
+digon-free			&1	&2	&14	&2 131		&3 012 906	&?\\
\hline
\hline
\bfseries{great-p.c.}s		&1	&1	&1	&4	&11	&135	\\
\end{tabular}

\bigskip

\begin{tabular}{ l|rrrrrrr }
  $n$   	 	&9 	&10 	&11 	
  \\
\hline
\hline
\bfseries{great-p.c.}s		&4 382	&312 356	&41 848 591	
\\
\end{tabular}
\caption{
Number of combinatorially different arrangements of $n$ pseudocircles.
}
\label{table:numbers}
\end{table}


Computational issues and algorithmic ideas are deferred until
Section~\ref{sec:computer-part}.  There we also sketch the heuristics
that we have used to produce circle representations for most of the
arrangements.  The encoded lists of arrangements of up to $n=6$
pseudocircles and circle representations are available on our
webpage~\cite{scheucher_website}. 
Section~\ref{sec:computer-part} also contains 
asymptotic results on the number of arrangements of $n$ pseudocircles 
as well as results on their flip-graph.

The list of circle representations at~\cite{scheucher_website}
together with the non-circularizability proofs given in
Section~\ref{sec:non-circ5} yields the following theorem.

\begin{restatable}{theorem}{NonRealFiveConnected}
\label{thm:non-real-5er_connected}
The four isomorphism classes of arrangements $\AAfiveA$, $\AAfiveB$,
$\AAfiveC$, and~$\AAfiveD$ (shown in Figure~\ref{fig:n5_nonr}) are the
only non-circularizable ones among the 984 isomorphism classes of
connected arrangements of $n=5$ pseudocircles.
\end{restatable}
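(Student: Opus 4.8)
The plan is to establish two separate claims: first, that the four listed arrangements $\AAfiveA$, $\AAfiveB$, $\AAfiveC$, $\AAfiveD$ are genuinely non-circularizable, and second, that the remaining $980$ connected arrangements of five pseudocircles all admit circle representations. The second claim is essentially a finite verification: the complete list of $984$ connected arrangements has been generated by computer (as reported in Table~\ref{table:numbers}), and for all but these four an explicit circle representation is provided at~\cite{scheucher_website}. So the bulk of the mathematical content lies in proving the \emph{negative} direction for the four exceptional arrangements, and this is where I would concentrate the effort.

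For each of the four arrangements I would proceed by contradiction: assume a circle representation exists, and derive a contradiction from the rigidity of circle arrangements. The natural tools, as the abstract indicates, are classical incidence theorems --- chiefly Miquel's theorem (and perhaps the radical-axis/radical-center configuration and the Power-of-a-Point relations). The strategy would be to identify within each candidate arrangement a subconfiguration of circles and intersection points whose incidence pattern forces, via Miquel or a related theorem, an additional coincidence (say, four points being concyclic, or three radical axes being concurrent) that is \emph{inconsistent} with the combinatorial structure prescribed by the pseudocircle arrangement. Concretely, I would label the five circles $C_1,\dots,C_5$ and the crossing points, read off from the arrangement which triples of points must lie on common circles, and then check whether Miquel's six-circle configuration closes up in a way that contradicts the required cyclic order of crossings along some circle. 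Where an incidence theorem does not directly apply, the fallback method described in the abstract is a continuous-deformation argument: one lets the radii of the circles grow or shrink in a controlled one-parameter family, tracks how the combinatorial type changes at degeneracies (tangencies or triple points), and shows that no point of the parameter space can realize the target isomorphism type.

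The main obstacle I anticipate is \emph{selecting the right subconfiguration} for each of the four arrangements so that a known incidence theorem applies cleanly. Miquel-type theorems constrain very specific patterns of concyclicity, and an arrangement of five pseudocircles contains many overlapping triples and quadruples; finding the one whose forced incidence collides with the combinatorics is the creative step and is unlikely to be uniform across all four cases. I would expect that some of the four succumb directly to a single application of Miquel's theorem, while others (particularly any that carry nontrivial symmetry or lack an obvious Miquel-subconfiguration) require the more delicate growth/shrink deformation argument, whose technical burden is controlling \emph{all} intermediate combinatorial transitions. A secondary, more bookkeeping-type obstacle is ensuring the completeness half of the statement: one must trust that the computer enumeration of the $984$ arrangements is exhaustive and that each of the claimed $980$ representations is combinatorially correct, which is verified separately in Section~\ref{sec:computer-part}.

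Thus the proof reduces to four independent non-circularizability arguments (the heart of the matter, handled in Section~\ref{sec:non-circ5}) together with the computer-backed positive realizations, and the theorem follows by combining these with the exhaustive enumeration.
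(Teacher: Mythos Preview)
Your two-part decomposition (980 explicit realizations from~\cite{scheucher_website} plus four hand proofs of non-circularizability) is exactly how the paper proceeds, and your instinct that the negative arguments combine incidence theorems with controlled deformations is correct. However, the specific tools you reach for are not the ones the paper uses, and the way you propose to combine them differs in an important respect.

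First, Miquel's theorem is not the workhorse here; it appears only later, in the $n=6$ section. For the four $n=5$ arrangements the paper instead proves three bespoke \emph{Four-Circles Incidence Lemmas} (Lemmas~\ref{lem:inzid2}, \ref{lem:inzid1}, \ref{lem:inzid3}), each asserting that when four circles touch pairwise in a prescribed pattern the four touching points are concyclic. These are elementary (proved by a M\"obius transformation sending one touching point to $\infty$) but are not instances of Miquel; if you try to force Miquel onto these configurations you will find the hypotheses do not line up, since the arrangements all contain digons and the relevant subconfigurations involve \emph{tangencies} rather than pairs of transversal crossings.

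Second, and more structurally, the deformation argument is not a fallback for when incidence fails --- it is a \emph{preprocessing step} that manufactures the tangencies needed for the incidence lemmas to apply. In every case the paper first invokes the Digon Collapse Lemma (Lemma~\ref{lem:shrinking_lemma}) to shrink or expand selected circles until certain digons degenerate to touchings; only then does the appropriate Four-Circles Incidence Lemma produce an auxiliary circle $C^*$ through the four touching points. The contradiction is then uniform across all four arrangements: the fifth circle $C_5$ is forced to have two of these points inside and two outside, so $C_5$ and $C^*$ would cross four times. Your plan treats deformation and incidence as alternatives rather than as a pipeline, which would leave you without the tangency hypotheses the lemmas require.
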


\begin{corollary}\label{thm:non-real-5er_intersecting}
  The isomorphism class of arrangement~$\AAfiveA$ is the unique
  non-circularizable one among the 278 isomorphism classes of
  intersecting arrangements of $n=5$ pseudocircles.
\end{corollary}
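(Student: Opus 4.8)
The plan is to obtain the corollary directly from Theorem~\ref{thm:non-real-5er_connected}, combined with the elementary observation already recorded in Section~\ref{sec:introduction} that every intersecting arrangement is connected. Consequently the 278 isomorphism classes of intersecting arrangements of $5$ pseudocircles form a subfamily of the $984$ connected ones (cf.\ Table~\ref{table:numbers}). By the theorem, the only non-circularizable members of the larger family are $\AAfiveA$, $\AAfiveB$, $\AAfiveC$, and~$\AAfiveD$, so every non-circularizable \emph{intersecting} arrangement of $5$ pseudocircles must already occur in this list of four.

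It therefore remains only to decide which of these four arrangements are intersecting. I would read this off Figure~\ref{fig:n5_nonr}: in $\AAfiveA$ each of the $\binom{5}{2}=10$ pairs of pseudocircles crosses, so $\AAfiveA$ is intersecting, whereas each of $\AAfiveB$, $\AAfiveC$, and~$\AAfiveD$ contains at least one pair of disjoint pseudocircles and hence fails to be intersecting. In particular, the non-circularizable arrangement of Linhart and Ortner in Figure~\ref{fig:n5_nonr_number2}, which is non-intersecting and has digons, is one of these latter three. Thus $\AAfiveA$ is the unique candidate among the four that lies in the intersecting subfamily.

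Combining the two steps, $\AAfiveA$ is both intersecting and non-circularizable, while no other intersecting arrangement of $5$ pseudocircles is non-circularizable, which is exactly the assertion of the corollary. The only non-formal step is the combinatorial check of the intersection pattern of $\AAfiveA$, $\AAfiveB$, $\AAfiveC$, and~$\AAfiveD$, which is immediate from the figure; no new realizability argument is required, since all of the substantive work has already been carried out in the proof of Theorem~\ref{thm:non-real-5er_connected}.
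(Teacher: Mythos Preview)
Your argument is correct and essentially identical to the paper's own derivation: the paper also deduces the corollary from Theorem~\ref{thm:non-real-5er_connected} by noting that intersecting arrangements are connected and that, among the four non-circularizable connected arrangements of five pseudocircles, only $\AAfiveA$ is intersecting while $\AAfiveB$, $\AAfiveC$, and $\AAfiveD$ are not.
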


We remark that the arrangements $\AAfiveA$, $\AAfiveB$, $\AAfiveC$,
and $\AAfiveD$ have symmetry groups of order 4, 8, 2, and~4,
respectively.  Also, note that none of the four examples is
digon-free.  Non-circularizability of $\AAfiveB$ was previously shown
by Linhart and Ortner~\cite{LO05}.  We give an alternative proof which
also shows the non-circularizability of~$\AAfiveC$.  Jonathan Wild and
Christopher Jones, contributed sequences A250001 and A288567 to the
On-Line Encyclopedia of Integer Sequences (OEIS)~\cite{oeis}.  These
sequences count certain classes of arrangements of circles and
pseudocircles.  Wild and Jones also looked at circularizability and
independently found Theorem~\ref{thm:non-real-5er_connected} (personal
communication).

\begin{figure}[htb]
  \centering
    
\hbox{}  
  \hfill
  \begin{subfigure}[t]{.2\textwidth}
    \centering
    \includegraphics[page=1,width=0.95\textwidth]{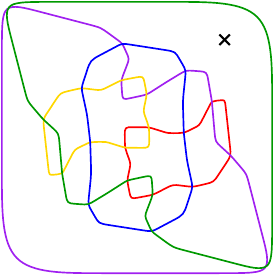}
    \caption{}
    \label{fig:n5_nonr_number1_intersecting}
  \end{subfigure}
  \hfill
  \begin{subfigure}[t]{.2\textwidth}
    \centering
    \includegraphics[page=1,width=0.95\textwidth]{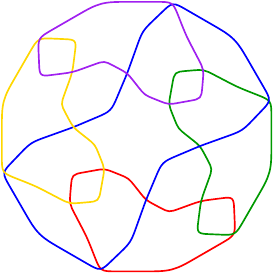}
    \caption{}
    \label{fig:n5_nonr_number2}  
  \end{subfigure}
  \hfill
  \begin{subfigure}[t]{.2\textwidth}
    \centering
    \includegraphics[page=1,width=0.95\textwidth]{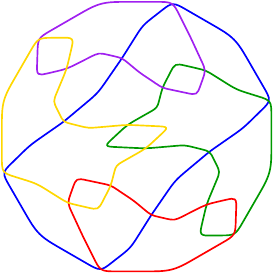}
    \caption{}
    \label{fig:n5_nonr_number3}  
  \end{subfigure}
  \hfill
  \begin{subfigure}[t]{.2\textwidth}
    \centering
    \includegraphics[page=1,width=0.95\textwidth]{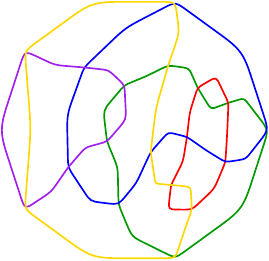}
    \caption{}
    \label{fig:n5_nonr_number4}  
  \end{subfigure}
  \hfill
\hbox{}  
  
  \caption{
  The four non-circularizable arrangements on $n=5$ pseudocircles: 
  (a)~$\AAfiveA$,
  (b)~$\AAfiveB$,
  (c)~$\AAfiveC$, and
  (d)~$\AAfiveD$.
  }
  \label{fig:n5_nonr}
\end{figure}

\medskip


Concerning arrangements of $6$ pseudocircles,
we were able to fully classify digon-free intersecting arrangements.

\begin{restatable}{theorem}{NonRealSixID}
\label{thm:non-real-6er_intersecting_digonfree}
The three isomorphism classes of arrangements $\AAsixA$, $\AAsixB$, and $\AAsixC$ 
(shown in Figure~\ref{fig:n6_nonr})
are the only non-circularizable
ones among the 2131 isomorphism classes 
of digon-free intersecting arrangements of $n=6$ pseudocircles.
\end{restatable}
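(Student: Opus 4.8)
The statement has two halves: (i) that $\AAsixA$, $\AAsixB$, $\AAsixC$ are non-circularizable, and (ii) that these are the \emph{only} non-circularizable arrangements among the 2131 digon-free intersecting arrangements of six pseudocircles. The second half is a completeness claim that rests on the exhaustive computer enumeration announced in the overview: the program generates the full list of 2131 such arrangements, and a circle representation is exhibited (via the heuristics, with failures resolved by hand) for each of the $2131-3=2128$ remaining arrangements. Those explicit representations are deposited at the cited webpage, so verifying half (ii) reduces to confirming that every arrangement outside the three named ones carries a certified circle drawing. The mathematical content therefore lives entirely in half (i), the three individual non-circularizability proofs, and that is where I would concentrate.

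**For the three non-circularizability proofs** I would follow the two-pronged methodology advertised in the abstract. First I would attempt an incidence-theorem argument in the spirit of Miquel's theorem, which is the natural tool for circles: assume a circle representation exists, identify within the arrangement a configuration of circles and intersection points to which a classical incidence theorem (Miquel, or a radical-axis / power-of-a-point statement) applies, and derive a forced incidence that the combinatorial type of the arrangement forbids. Concretely, for a candidate arrangement one looks for four circles meeting in the Miquel pattern and shows that the theorem compels two crossing points to coincide or forces a triple point, contradicting simplicity or the prescribed face structure. I expect at least one of $\AAsixA$, $\AAsixB$, $\AAsixC$ (most plausibly the one with a triangular symmetry, $\AAsixA = \mathcal{N}_6^\Delta$) to yield to such a clean incidence argument.

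\textbf{When no incidence theorem applies cleanly,} I would switch to the continuous-deformation argument also promised in the abstract. Here one assumes a circle representation and then grows or shrinks the circles in a controlled one-parameter family, tracking how the radii and the distances between centers must evolve to preserve all the crossing relations. The symmetry group of the arrangement (the paper notes several of these six-pseudocircle examples have symmetry groups of order at least $4$) can be used to normalize the configuration — fixing a common center, a radius, or an axis — which cuts down the free parameters. One then shows that the system of constraints imposed by the required tangency/crossing pattern along the deformation is over-determined: at some point in the family two circles that must cross become tangent or disjoint, or a forbidden digon appears, contradicting the assumed isomorphism type. The main obstacle will be this deformation case: unlike the Miquel argument, it is not a single clean lemma but a careful analysis of an inequality system, and making the ``controlled growth'' rigorous (choosing the right monotone quantity and proving it must cross a critical threshold) is delicate. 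I would handle each of the three arrangements with whichever of the two techniques fits its structure, reserving the deformation method for the arrangement(s) resistant to incidence reasoning, and I would lean on the recorded symmetries throughout to reduce the number of real parameters before analyzing the constraint system.
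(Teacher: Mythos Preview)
Your decomposition is right---(ii) is the computer enumeration plus the certified circle realizations on the webpage, and (i) is three individual non-circularizability proofs---and both incidence and deformation techniques do appear. But the concrete content of each technique in the paper is quite different from your sketch, enough that your plan as stated would not reach a proof.

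On the incidence side: Miquel is not the tool for any of the three. The paper uses a new eight-point coplanarity theorem in $\RR^3$ (Theorem~\ref{thm:8points}, descending to a seven-circle statement) for $\AAsixA$, the radical-axis concurrence theorem (Theorem~\ref{thm:Incid-3C3L}) for $\AAsixB$, and a Pappos-type lemma mixing three lines and three circles (Lemma~\ref{lem:inzid4}) for $\AAsixC$. All three arrangements admit incidence proofs, not just one. A step you omit is essential: before the incidence theorem can be invoked one first shrinks or grows selected circles, via a Triangle Collapse Lemma (Lemma~\ref{lem:sl2}), so that certain triangles degenerate to triple points; only after this preprocessing does the required incidence configuration appear.

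On the deformation side: your ``over-determined inequality system in radii and center distances, normalized by symmetry'' is not the argument and would be hard to make rigorous. The paper's deformation is global and uses the sphere model: represent the circles as plane-sections of a unit sphere and move \emph{all} planes uniformly toward or away from the origin. By Fact~\ref{fact:Krupp_is_inside}, a triangle can flip during this motion only when the intersection point of its three planes crosses the sphere, i.e., when the triple changes between Krupp and NonKrupp. For $\AAsixA$ every triangle is NonKrupp, so moving planes outward never flips a triangle, yet some circle must eventually vanish---contradiction (Theorem~\ref{thm:nkt2nc}). For $\AAsixB$ every triangle is Krupp, so moving planes inward reaches a great-circle arrangement without any flip, yet $\AAsixB$ is not a great-pseudocircle arrangement---contradiction (Theorem~\ref{thm:nkt2nc_dual}). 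No parameter-counting and no symmetry reduction are involved; the symmetry groups play no role in these proofs.
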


\begin{figure}[htb]
  \centering
  
\hbox{}  
  \hfill
  \begin{subfigure}[t]{.31\textwidth}
    \centering
    \includegraphics[page=2,width=0.85\textwidth]{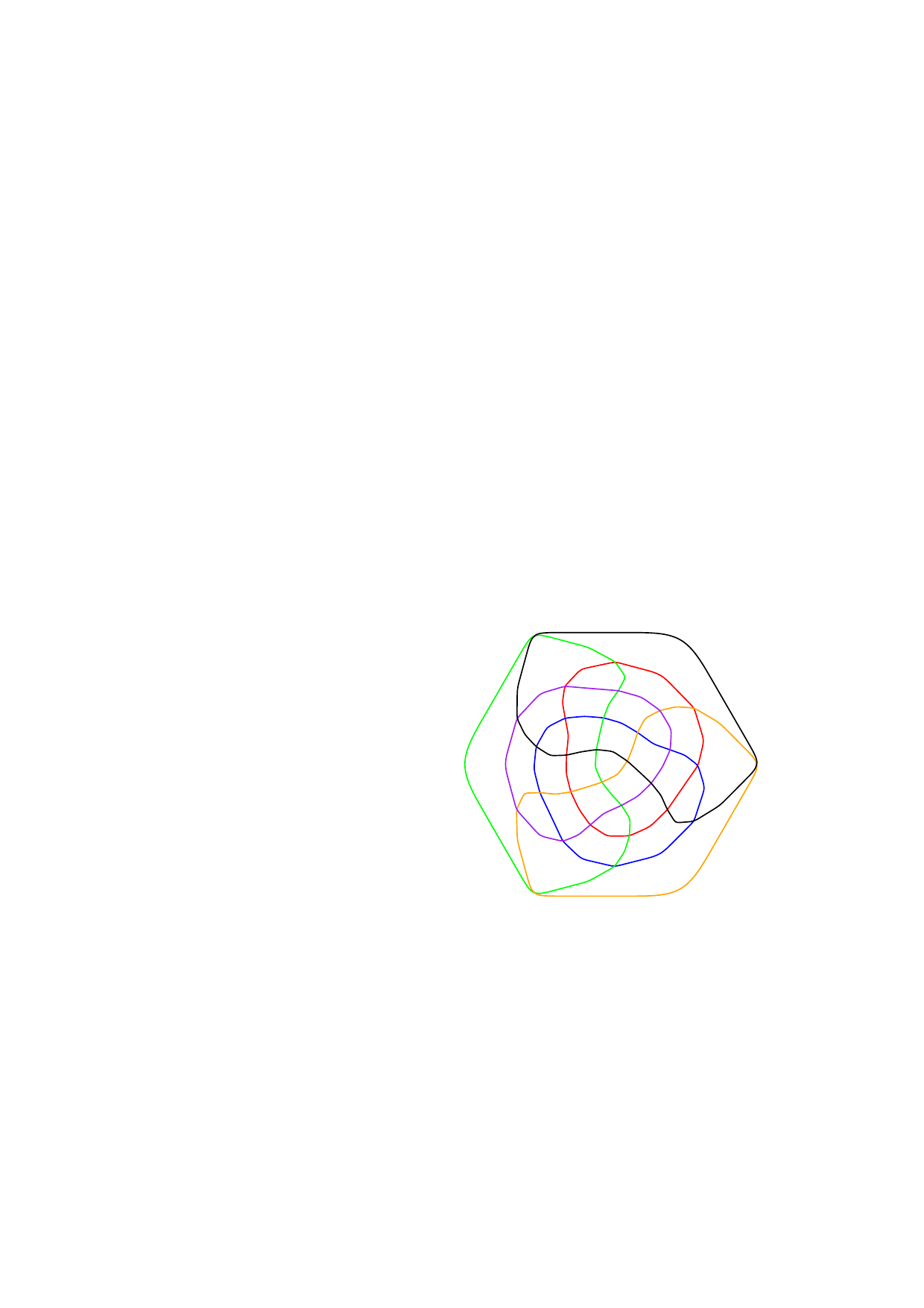}
    \caption{}
    \label{fig:n6_nonr_number1_8triangles}
  \end{subfigure}
  \hfill
  \begin{subfigure}[t]{.31\textwidth}
    \centering
    \includegraphics[page=2,width=0.95\textwidth]{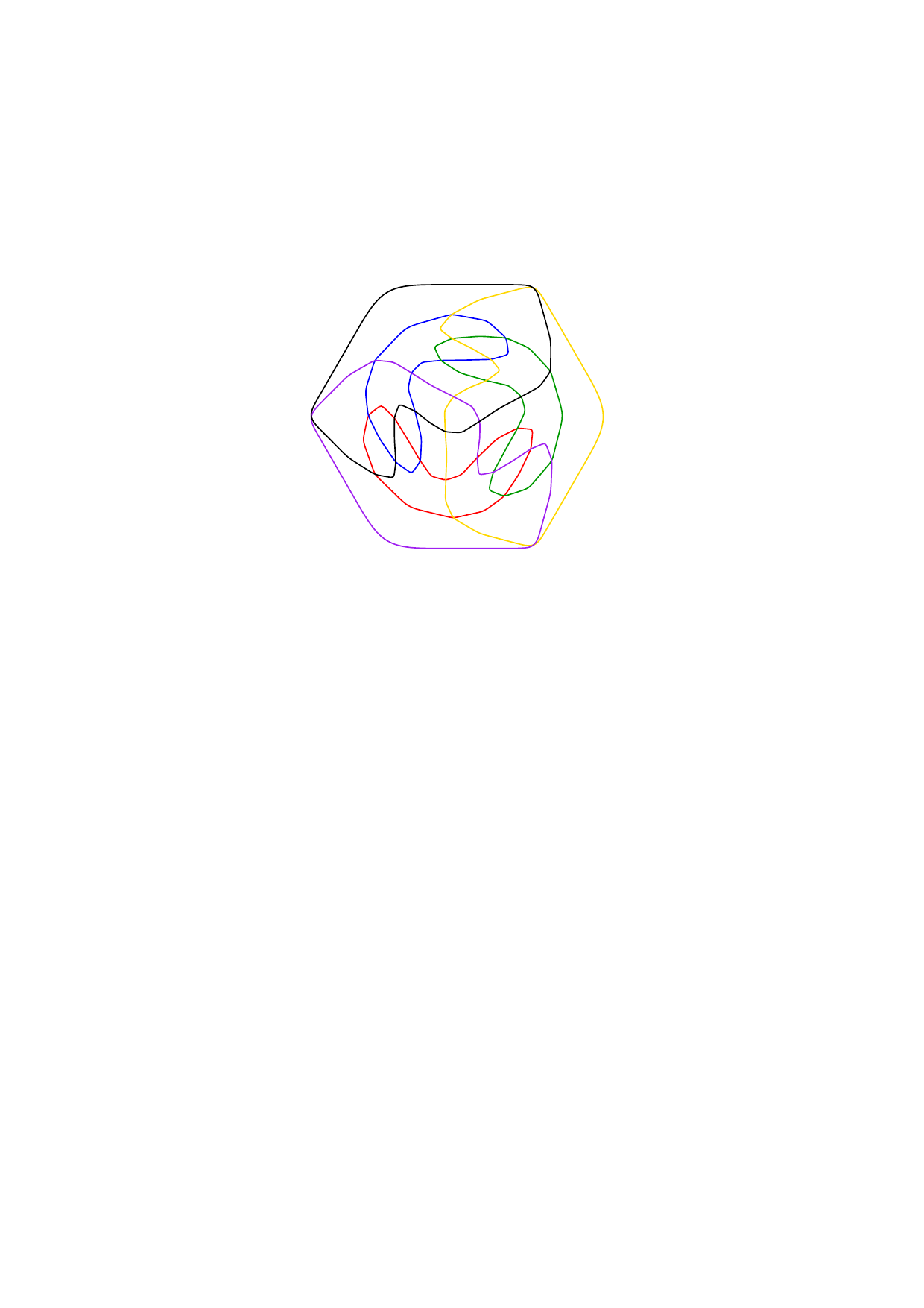}
    \caption{}
    \label{fig:n6_nonr_number2}  
  \end{subfigure}
  \hfill
  \begin{subfigure}[t]{.31\textwidth}
    \centering
    \includegraphics[page=2,width=0.95\textwidth]{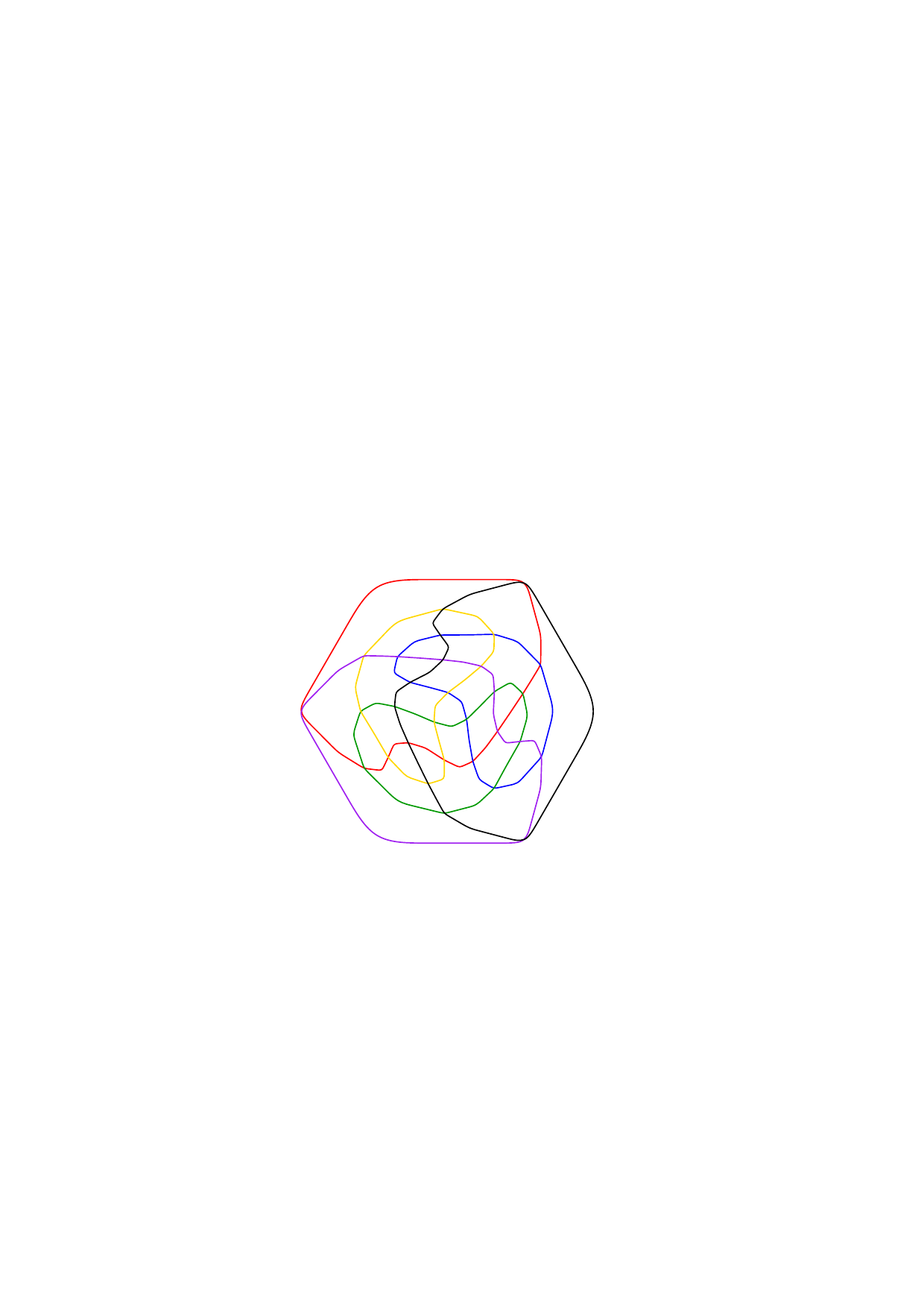}
    \caption{}
    \label{fig:n6_nonr_number3}  
  \end{subfigure}
   \hfill
\hbox{}  
 
  \caption{
  The three non-circularizable digon-free intersecting arrangements for $n=6$:
  (a)~$\AAsixA$,
  (b)~$\AAsixB$, and
  (c)~$\AAsixC$.
  Inner triangles are colored gray.
  Note that in (b) and (c) the outer face is a triangle.
  }
  \label{fig:n6_nonr}
\end{figure}

In Section~\ref{sec:non-circ6}, we 
give non-circularizability proofs
for $\AAsixA$, $\AAsixB$, and~$\AAsixC$.
In fact, for the non-circularizability of $\AAsixA$ and $\AAsixB$, respectively,
we have two proofs of different flavors:
One proof (see Section~\ref{sec:non-circ6})
uses continuous deformations 
similar to the proof of the Great-Circle Theorem (Theorem~\ref{thm:PGC_theorem})
and the other proof 
is based on an incidence theorem. The incidence theorem used for $\AAsixA$ 
may be of independent interest:

\begin{restatable}{theorem}{EightPoints}
\label{thm:8points}
  Let $a,b,c,d,A,B,C,D$ be 8 points from $\RR^3$
  such that no five of these points lie in a common plane
  and each of the following 5 subsets of 4
  points is coplanar:\\[3pt]
\centerline{$
  \{a,b,A,B\}, \ 
  \{a,c,A,C\}, \ 
  \{a,d,A,D\}, \ 
  \{b,c,B,C\}, \ 
  \text{and }
  \{b,d,B,D\}. 
  $}\\[3pt]
  Then $\{c,d,C,D\}$ is also coplanar.
\end{restatable}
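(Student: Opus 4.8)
The plan is to convert the five coplanarity hypotheses into determinantal equations and then exhibit a single algebraic identity that forces the sixth.

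First I would exploit the affine independence of $a,b,c,d$: it lets me write every point in barycentric coordinates with respect to the affine basis $a,b,c,d$, so that $a,b,c,d$ themselves become the standard basis vectors of $\RR^4$. After homogenizing, four points in $\RR^3$ are coplanar exactly when their four barycentric rows are linearly dependent. Denote the barycentric coordinates of the other points by $w=(w_a,w_b,w_c,w_d)$ and similarly for $x,y,z$, and let $M$ be the $4\times 4$ matrix with rows $w,x,y,z$. The key bookkeeping observation is that for $i,j\in\{a,b,c,d\}$ and $p,q\in\{w,x,y,z\}$ the quadruple $\{i,j,p,q\}$ is coplanar if and only if the $2\times 2$ minor of $M$ on rows $p,q$ and on the two columns \emph{complementary} to $\{i,j\}$ vanishes. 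Under the correspondence $a\leftrightarrow w$, $b\leftrightarrow x$, $c\leftrightarrow y$, $d\leftrightarrow z$ dictated by the hypotheses, the five given coplanarities become the vanishing of five of the six complementary minors
\[ A=w_cx_d-w_dx_c,\ B=w_by_d-w_dy_b,\ C=w_bz_c-w_cz_b,\ D=x_ay_d-x_dy_a,\ E=x_az_c-x_cz_a, \]
and the conclusion is precisely that the sixth such minor $F=y_az_b-y_bz_a$ vanishes.

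The heart of the argument is a polynomial identity expressing a monomial multiple of $F$ as a combination of $A,\dots,E$ with polynomial coefficients; concretely I would verify
\[ w_cx_cw_dx_d\,F=-x_aw_by_dz_c\,A+w_cx_cx_dz_a\,B-x_ax_cw_dy_d\,C-w_cx_cw_dz_b\,D+w_bw_cx_dy_d\,E. \]
This can be checked by direct expansion, or discovered by elimination: solving $D=0,B=0$ for $y_a,y_b$ and $E=0,C=0$ for $z_a,z_b$, then substituting into $F$, after which the surviving factor collapses to exactly $-A/(x_cx_dw_cw_d)$. Granting the identity, the hypotheses $A=\dots=E=0$ give $w_cx_cw_dx_d\,F=0$, so $F=0$ whenever the leading monomial $w_cx_cw_dx_d$ is nonzero, which is the desired conclusion.

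The step I expect to be the real obstacle is the degenerate case in which this leading monomial vanishes, i.e.\ when one of $w_c,x_c,w_d,x_d$ is zero --- geometrically, when $w$ or $x$ lies on the face $abc$ or the face $abd$ of the tetrahedron $abcd$. The generic identity gives no information there, and these positions must be handled separately (for instance by invoking general position in the intended application, by combining the companion identities produced by the order-$4$ symmetry group that permutes the pairs $(a,w),(b,x)$ and $(c,y),(d,z)$, or by a limiting argument). Conceptually the whole statement is transparent projectively: the six edges $ab,\dots,cd$ of the tetrahedron carry the Plücker coordinate vectors $e_{ab},\dots,e_{cd}$, each hypothesis says the corresponding edge of $wxyz$ meets the corresponding edge of $abcd$ (one Plücker coordinate vanishes), and the even-handed symmetry among the six conditions is explained by Jacobi's identity relating the complementary minors of $M$ and of $M^{-1}$.
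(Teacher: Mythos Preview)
Your approach is essentially identical to the paper's: after placing $a,b,c,d$ at the standard basis vectors of $\RR^4$, the paper writes each of the five coplanarities as a monomial equation (e.g.\ $w_3x_4=w_4x_3$), multiplies all five together, and cancels the common factor to obtain $y_2z_1=y_1z_2$ --- this is exactly your determinant-cancellation identity, just packaged more tersely. Your concern about the degenerate case where the cancelled monomial vanishes is well-founded and applies equally to the paper's own argument, which simply does not mention it.
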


The proof we give 
is based on determinant cancellation, a technique that we learned from
Richter-Gebert, cf.~\cite{R-G11}.
It turned out that
this incidence theorem
is a slight generalization of the \emph{Bundle theorem} 
(see e.g.~\cite{wikipedia_bundle_theorem}),
where the points are typically assumed to lie on the sphere.

An instance of Theorem~\ref{thm:8points} is obtained by
assigning the eight letters appropriately to the corners of a cube.
Each of the six involved sets then corresponds to the four corners covered 
by two opposite edges of the cube.  
The appropriate assignment of the letters can be derived from Figure~\ref{fig:non-real-6er}.

We remark that the arrangements $\AAsixA$, $\AAsixB$, and $\AAsixC$ have
symmetry groups of order 24, 3, and~6, respectively.
Particularly interesting is the arrangement $\AAsixA$
(Figure~\ref{fig:n6_nonr_number1_8triangles}, see also
Figure~\ref{fig:non-real-6er}). This is the unique
intersecting digon-free arrangement of 6 pseudocircles
which attains the minimum $8$ for the number of triangles
(see~\cite{FS17_APC_TD_arXiv_version}).  

\medskip
Even though we could not complete the classification for 
intersecting arrangements of $6$ pseudocircles,
we provide some further nice incidence theorems and 
non-circularizability proofs in Section~\ref{sec:noncirc_misc}. 
One of them (Figure~\ref{fig:edelsb-ramos}(a)) 
is the example of Edelsbrunner and Ramos~\cite{EdelsbrunnerRamos1997}.
Details on the current status are deferred 
to Section~\ref{ssec:discussion_circularizability_6},
where we also present some further results 
and discuss some open problems.

\medskip

It may be worth mentioning that, 
by enumerating and realizing all arrangement of $n \le 4$ pseudocircles,
we have an alternative proof of the Kang and M\"uller result,
that all arrangements of $n \le 4$ pseudocircles are circularizable~\cite{KM14}.

\section{Preliminaries: Basic Properties and Tools}
\label{sec:prelim}

Stereographic projections map circles to circles (if we consider a
line to be a circle containing the point at infinity),
therefore, circularizability on the
sphere and in the plane is the same concept. 
Arrangements of circles can be mapped to isomorphic
arrangements of circles via M\"obius transformations. 
In this context, the sphere is identified with the extended complex plane 
$\mathbb{C}\cup\{\infty\}$. Note that, for $n \ge 3$, the isomorphism
class of an arrangement of $n$ circles is not covered by M\"obius
transformations. Indeed, if $\CC$ is a simple arrangement of circles,
then $\varepsilon$-perturbations of the circles in size and position will
result in an isomorphic arrangement when $\varepsilon$ is chosen small enough. 

Let $\CC$ be an arrangement of circles represented on the sphere. Each
circle of $\CC$ spans a plane in 3-space, hence, we obtain an
arrangement $\EE(\CC)$ of planes in $\mathbb{R}^3$.  In fact, with a 
sphere $S$ we get a bijection between (not necessarily connected)
circle arrangements on $S$ and arrangements of planes with the
property that each plane of the arrangement intersects~$S$.

Consider two circles $C_1$, $C_2$ of a circle arrangement $\CC$ on $S$ and the
corresponding planes $E_1$, $E_2$ of~$\EE(\CC)$. The intersection of $E_1$ and
$E_2$ is either empty (i.e., $E_1$ and $E_2$ are parallel) or a line $\ell$.
The line $\ell$ intersects $S$ if and only
if $C_1$ and $C_2$ intersect, in fact, $\ell \cap S = C_1 \cap C_2$. 

With three pairwise intersecting circles $C_1$, $C_2$, $C_3$ we obtain
three planes $E_1$, $E_2$, $E_3$ intersecting in a vertex $v$ of
$\EE(\CC)$. It is notable that $v$ is in the interior of $S$ if and
only if the three circles form a Krupp in~$\CC$.  We save this
observation for further reference.

\begin{fact}\label{fact:Krupp_is_inside}
Let $\CC$ be an arrangement of circles represented on the sphere.
Three circles $C_1$, $C_2$, $C_3$ of $\CC$ form a Krupp if and only if
the three corresponding planes  $E_1$, $E_2$, $E_3$ 
intersect in a single point in the interior of~$S$.
\end{fact}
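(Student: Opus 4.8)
The plan is to use \emph{central projection} from the common point $v=E_1\cap E_2\cap E_3$ as the main tool, treating the two possible positions of $v$ separately. First I would record that $v\notin S$: if $v$ lay on $S$, then since $v\in E_i$ we would have $v\in E_i\cap S=C_i$ for each $i$, so all three circles would pass through $v$, contradicting the standing simplicity assumption. Thus $v$ lies either in the interior or in the exterior of $S$, and I will match these two cases with the Krupp and the NonKrupp, respectively. Since an intersecting triple of circles is, by the enumeration of connected arrangements of three pseudocircles, either a Krupp or a NonKrupp, it suffices to pin down which type each position of $v$ produces.

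For the interior case, let $\pi\colon S\to\SS^2$ send $x$ to the unit direction $(x-v)/\lVert x-v\rVert$. Because $v$ is interior, every ray from $v$ meets $S$ in exactly one point, so $\pi$ is a homeomorphism. For each $i$ the plane $E_i$ passes through $v$, so the directions from $v$ to the points of $C_i=E_i\cap S$ fill out the entire great circle cut on $\SS^2$ by the translate of $E_i$ through the origin; that is, $\pi(C_i)$ is a great circle. These three great circles are distinct and, by simplicity, not concurrent, so $\pi$ carries the arrangement of $C_1,C_2,C_3$ to an arrangement of three great circles in general position, which is a Krupp. Hence \emph{$v$ interior implies Krupp}.

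The exterior case is where the real work lies. When $v$ lies outside $S$, a line through $v$ that meets $S$ meets it in two points lying on the \emph{same} ray from $v$; applied to the line $E_i\cap E_j$, this shows that the two points of $C_i\cap C_j=(E_i\cap E_j)\cap S$ project under $\pi$ to a single direction. Consequently $\pi$ is no longer injective: it folds $S$ two-to-one onto a spherical cap $D$ (the solid angle subtended by $S$ from $v$), sending each $C_i$ onto an arc $E_i\cap D$ and sending the two crossings of $C_i$ and $C_j$ to the single crossing of these arcs inside $D$. Thus within $D$ the three arcs cross pairwise exactly once and, in general position, bound a central triangle, while $S$ is recovered by doubling $D$ across its boundary circle (the silhouette of $S$ seen from $v$). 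I would then exhibit a \emph{digon}: at the crossing $P$ of the arcs of $C_1$ and $C_2$, the quadrant vertically opposite the central triangle reaches the boundary of $D$ without meeting the arc of $C_3$; doubling this region across the boundary—where the two fold points are smooth points of $C_1$ and $C_2$, not vertices of the arrangement—yields a cell bounded by a single arc of $C_1$ and a single arc of $C_2$ meeting at the two points of $C_1\cap C_2$. Since the Krupp is digon-free, this proves \emph{$v$ exterior implies NonKrupp}, hence not Krupp.

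Combining the two cases gives the stated equivalence: the triple forms a Krupp precisely when $E_1,E_2,E_3$ meet at a single point in the interior of $S$. The main obstacle is the bookkeeping in the exterior case—checking that the two-to-one fold produces exactly the claimed pattern of pairwise-once-crossing arcs and that the opposite quadrant genuinely survives the doubling as a digon; the interior case, by contrast, is immediate once one observes that central projection from an interior point is a homeomorphism carrying planes through $v$ to great circles.
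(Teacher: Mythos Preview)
The paper does not actually prove this fact; it is introduced with one sentence of motivation (``It is notable that $v$ is in the interior of $S$ if and only if \dots'') and then ``saved for further reference''. Your argument is correct: the interior case via central projection is clean, and the folding/doubling picture in the exterior case does produce a digon once the bookkeeping is carried out as you sketch.

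That said, the exterior case can be handled in two lines, and this is almost certainly the ``observation'' the authors had in mind. The line $\ell=E_1\cap E_2$ meets $E_3$ only at $v$, and $\ell\cap S=C_1\cap C_2$. If $v$ is exterior, then $v$ does not lie on the chord between the two points of $C_1\cap C_2$, so both points are on the same side of $E_3$ and hence on the same side of $C_3$. But in a Krupp each pseudocircle separates the two crossings of the other two (look at three great circles), so the arrangement must be the NonKrupp. The same computation with $v$ interior shows $v$ lies on the chord, so the two points of $C_1\cap C_2$ are separated by $E_3$ and hence by $C_3$; this even gives the interior direction without invoking projection. Your doubling argument gives more---it explicitly locates a digon---but at the cost of exactly the verification you flag as ``bookkeeping''; the separation criterion sidesteps that entirely.
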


For digons of $\CC$, we also have nessessary conditions in terms of $\EE(\CC)$
and~$S$. 

\begin{fact}\label{fact:digon_characterization}
Let $\CC$ be an arrangement of circles represented on the sphere~$S$.
If a pair of intersecting circles $C_1$, $C_2$ in~$\CC$ forms a digon of~$\CC$,
then the line $E_1 \cap E_2$ has no intersection
with any other plane $E_3$ corresponding to a circle $C_3 \not \in \{C_1,C_2\}$
inside of~$S$.
\end{fact}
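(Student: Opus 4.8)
The plan is to reduce both sides of the equivalence to one intrinsic condition on $S$: whether some other circle of $\CC$ separates the two crossing points of $C_1$ and $C_2$. Write $\{p,q\}=C_1\cap C_2$. As observed just before the fact, $\ell := E_1\cap E_2$ satisfies $\ell\cap S=\{p,q\}$, so the portion of $\ell$ lying inside $S$ is exactly the open chord from $p$ to $q$.

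First I would dispose of the algebraic half. Let $C_3\notin\{C_1,C_2\}$ be a third circle with plane $E_3$; simplicity gives $p,q\notin E_3$. If $E_3$ contains or is parallel to $\ell$, then $E_3$ misses the open chord and, since $\ell$ stays on one closed side of $E_3$, the circle $C_3$ fails to separate $p$ and $q$, so both sides agree. Otherwise $\ell\cap E_3$ is a single point $m$, and $m$ lies strictly inside $S$ iff $m$ lies on the open chord $\overline{pq}$, which happens iff $p$ and $q$ lie in opposite open halfspaces of $E_3$. As these halfspaces cut $S$ into the two caps bounded by $C_3$, this is exactly the statement that $C_3$ separates $p$ from $q$ on $S$. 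Hence ``$\ell$ meets no other plane $E_3$ inside $S$'' is equivalent to ``no other circle separates $p$ from $q$'', and it remains to prove that $C_1,C_2$ form a digon iff no other circle separates $p$ and $q$.

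This I would argue combinatorially on $S$. For the forward direction, a digon formed by $C_1,C_2$ is bounded by an arc $\alpha\subset C_1$ and an arc $\beta\subset C_2$ joining $p$ and $q$; being edges of the arrangement, neither has an interior vertex. Were some $C_3$ to separate $p$ and $q$, the path $\alpha$ from $p$ to $q$ would have to cross $C_3$ in its interior, creating a vertex on $\alpha$ --- a contradiction. For the converse, if no other circle separates $p$ and $q$, then $p$ and $q$ lie in one common cell $F$ of the subarrangement $\CC\setminus\{C_1,C_2\}$, and I would exhibit, among the two lens regions cut from $F$ by the arcs of $C_1$ and $C_2$ between $p$ and $q$, one that no further circle enters; such a region is a digon of $\CC$.

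I expect the converse to be the crux: ``no separation'' must be upgraded to a genuinely \emph{uncrossed} lens, so one has to rule out other circles dipping into the region between $C_1$ and $C_2$ without separating $p$ and $q$. The cleanest tool I see is the pencil of circles through $p$ and $q$ --- equivalently the pencil of planes through $\ell$ --- which foliates $S\setminus\{p,q\}$ and has $C_1,C_2$ as two of its leaves. Sweeping the pencil parameter from $C_1$ toward $C_2$ and taking an innermost leaf that still bounds an empty lens, one charges any obstruction to a circle meeting every intermediate leaf, which therefore separates $p$ and $q$, contradicting the hypothesis and closing the argument.
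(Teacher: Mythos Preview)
The paper states this as a Fact without proof, so there is nothing to compare against on the paper's side. Your translation of ``$E_3$ meets $\ell$ inside $S$'' into ``$C_3$ separates $p$ from $q$'' is correct, and your forward (only-if) argument is clean and complete.

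The converse is where your plan fails, and in fact that direction is false for connected simple arrangements in the paper's sense. In the plane take $C_1:\,x^2+y^2=4$ and $C_2:\,(x-3)^2+y^2=4$, so that $\{p,q\}=\{(1.5,\pm\sqrt{1.75})\}$; now add the small circles $C_3$ of radius $\tfrac12$ about $(2,0)$ and $C_4$ of radius $\tfrac12$ about $(-2,0)$. Neither $C_3$ nor $C_4$ meets $C_2$, and neither separates $p$ from~$q$ (both points lie outside each small circle). Yet each of the two arcs of $C_1$ between $p$ and $q$ now carries a pair of crossings, so no edge of the arrangement runs along $C_1$ from $p$ to~$q$; every cell incident to both $p$ and $q$ has at least four boundary vertices. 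There is no $C_1$--$C_2$ digon, while the right-hand side of the equivalence holds. Thus the ``if'' direction cannot be proved in this generality.

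Your pencil-through-$p,q$ sweep breaks precisely here: an obstructing circle need not meet every intermediate leaf. In the example above, the leaf of the pencil centred at $(1.5,0)$ has radius $\sqrt{1.75}>1$ and contains $C_3$ entirely, so the ``innermost empty leaf'' argument cannot charge the obstruction to a circle that meets all leaves, let alone to one that separates $p$ from~$q$. More generally, a circle can intrude into one of the four lune regions by crossing a single boundary arc twice, destroying that lune as a digon without separating $p$ and~$q$. Your instinct that the converse is the crux was right; it simply is not true without a further hypothesis (for instance, that every other circle intersects both $C_1$ and $C_2$), which the statement does not supply. The forward direction, which is what the paper actually uses later, stands on your argument.
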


\subsection{Incidence Theorems}

The smallest non-stretchable arrangements of pseudolines are closely
related to the incidence theorems of Pappos and Desargues.  A
construction already described by Levi~\cite{Levi26} is depicted in
Figure~\ref{fig:pappos+miquel}(a). Pappos's Theorem states that, in a
configuration of 8 lines as shown in the figure in black, the 3 white
points are collinear, i.e., a line containing two of them also
contains the third. Therefore, the arrangement including the red
pseudoline has no corresponding arrangement of straight lines, i.e.,
it is not stretchable.

Miquel's Theorem asserts that, in a configuration of 5 circles as
shown in Figure~\ref{fig:pappos+miquel}(b) in black, the 4 white
points are cocircular, i.e., a circle containing three of them also
contains the fourth.  Therefore, the arrangement including the red
pseudocircle cannot be circularized.

   \calc_figscale{27}
    \begin{figure}[htb]
    \centerline{\input{\path/pappos+miquel.pstex_t}}
    \caption{\label{fig:pappos+miquel}}
    \end{figure}
    VC
{(a)~A non-stretchable arrangement of pseudolines from Pappos's Theorem.\\
(b)~A non-circularizable arrangement of pseudocircles from Miquel's Theorem.}

Next we state two incidence theorems which will be used in later
proofs of non-circularizability. In the course of the paper we will
meet further incidence theorems such as Lemma~\ref{lem:inzid1},
Lemma~\ref{lem:inzid3}, Theorem~\ref{thm:7kreise},
Lemma~\ref{lem:inzid4}, and
again Miquel's Theorem (Theorem~\ref{thm:Miquels}).

\begin{lemma}[First Four-Circles Incidence Lemma]
\label{lem:inzid2}
Let $\CC$ be an arrangement of four circles $C_1,C_2,C_3,C_4$ such
that none of them is contained in the interior of another one, and 
such that $(C_1,C_2)$, $(C_2,C_3)$, $(C_3,C_4)$, and $(C_4,C_1)$ are
touching. Then there is a circle $C^*$ passing through these four
touching points in the given cyclic order.
\end{lemma}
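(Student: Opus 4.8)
The plan is to transport the four touching points to a configuration in which the desired circle $C^*$ degenerates to a line, prove the resulting collinearity by an elementary homothety argument, and then transport everything back. Denote the four touching points by $p_1\in C_1\cap C_2$, $p_2\in C_2\cap C_3$, $p_3\in C_3\cap C_4$, and $p_4\in C_4\cap C_1$; the goal is to show that $p_1,p_2,p_3,p_4$ are concyclic in this cyclic order.

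First I would apply an inversion $\iota$ centered at $p_1$. This is a M\"obius transformation, hence circle- and tangency-preserving, as recalled in Section~\ref{sec:prelim}. Since $C_1$ and $C_2$ both pass through the center $p_1$ and are tangent there, their images $\ell_1=\iota(C_1)$ and $\ell_2=\iota(C_2)$ are two \emph{parallel} lines (they meet only at $\iota(p_1)=\infty$). The remaining circles become circles $C_3'=\iota(C_3)$ and $C_4'=\iota(C_4)$, with $C_3'$ tangent to $\ell_2$ at $p_2'=\iota(p_2)$, with $C_4'$ tangent to $\ell_1$ at $p_4'=\iota(p_4)$, and with $C_3'$ tangent to $C_4'$ at $p_3'=\iota(p_3)$. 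Because a circle through $p_1$ corresponds to a line (a ``circle through $\infty$'') under $\iota$, proving the lemma now reduces to showing that the three finite points $p_2',p_3',p_4'$ are collinear.

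The collinearity I would obtain from the homothety $\phi$ centered at the touching point $p_3'$ that carries $C_4'$ onto $C_3'$. A homothety sends every line to a parallel line, so $\phi(\ell_1)$ is a tangent of $C_3'$ parallel to $\ell_1$, hence parallel to $\ell_2$. The hypothesis that no circle lies in the interior of another forces the two circles to be externally tangent and to sit on opposite sides of the strip bounded by $\ell_1,\ell_2$, and this is precisely the configuration in which $\phi(\ell_1)=\ell_2$ rather than the other tangent of $C_3'$ parallel to $\ell_1$. Consequently $\phi$ maps the tangency point $p_4'=C_4'\cap\ell_1$ to the tangency point $C_3'\cap\ell_2=p_2'$. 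Since a homothety fixes every line through its center, the points $p_2'$, $p_3'$, and $p_4'$ are collinear, and the negative ratio of $\phi$ places $p_3'$ between $p_2'$ and $p_4'$.

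Finally I would pull the line through $p_2'$ and $p_4'$ back through $\iota$: together with $\iota(p_1)=\infty$ it is the image of a circle $C^*$ passing through $p_1,p_2,p_3,p_4$. The homeomorphism $\iota$ preserves the cyclic order along $C^*$, and the fact that $p_3'$ lies between $p_2'$ and $p_4'$ on the line (with $\infty$ closing the cycle on the other side) yields exactly the claimed order $p_1,p_2,p_3,p_4$. I expect the main obstacle to be the case analysis in the homothety step: one must use the no-containment hypothesis carefully to guarantee external tangency and the correct sides, so that $\phi(\ell_1)=\ell_2$. Verifying these sign and side conditions, and confirming that this is indeed the configuration produced by $\iota$, is the only place where the argument is not purely formal; if desired it can be double-checked by a short coordinate computation with $\ell_1,\ell_2$ horizontal.
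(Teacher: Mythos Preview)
Your argument is correct and opens exactly as the paper does: apply a M\"obius transformation sending the touching point $p_1$ to~$\infty$, so that $C_1,C_2$ become parallel lines and the task reduces to showing that $p_2',p_3',p_4'$ are collinear. The only difference is in how that collinearity is established. The paper proves it by angle-chasing: it draws the common tangent to $C_3',C_4'$ at $p_3'$ and compares the angles that the chords $p_3'p_2'$ and $p_3'p_4'$ make with it and with the two parallel lines, using that the relevant triangles are isosceles. Your homothety argument is a cleaner substitute for this step and yields the betweenness of $p_3'$ (hence the cyclic order) for free via the negative ratio. As for the side conditions you flag: they settle quickly once you observe that $p_1\in C_2\setminus\{p_2\}$ lies outside $C_3$ (by the external tangency of $C_2,C_3$) and likewise outside $C_4$, so inversion at $p_1$ preserves the external tangency of $C_3$ and $C_4$; combined with the fact that the discs of $C_1,C_2$ become disjoint halfplanes, this forces $C_3'$ and $C_4'$ into the strip, touching its opposite boundary lines, which is precisely the configuration in which $\phi(\ell_1)=\ell_2$.
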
 

We  point the interested reader to the website ``Cut-the-Knot.org''~\cite{Bogomolny},
where this lemma is stated (except for the cyclic order). 
The website also provides an interactive GeoGebra applet,
which nicely illustrates the incidences.

\begin{figure}[htb]
  \centering
  \includegraphics{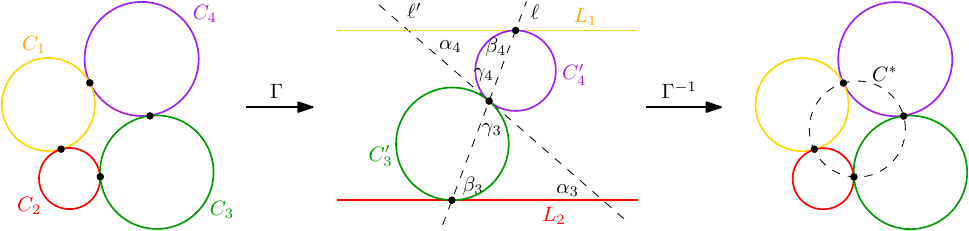}

  \caption{An illustration for the proof of Lemma~\ref{lem:inzid2}.}
  \label{fig:inzid2}  
\end{figure}

\begin{proof}
  Apply a M\"obius transformation $\Gamma$ that maps the touching
  point of $C_1$ and $C_2$ to the point $\infty$ of the extended
  complex plane. This maps $C_1$ and $C_2$ to a pair $L_1,L_2$ of
  parallel lines. The discs of $C_1$ and $C_2$ are mapped to
  disjoint halfplanes. We may assume that $L_1$ and $L_2$ are
  horizontal and that $L_1$ is above $L_2$. Circles $C_3$ and $C_4$ are
  mapped to touching circles $C'_3$ and $C'_4$. Moreover, $C'_3$ is
  touching $L_2$ from above and $C'_4$ touches $L_1$ from
  below. Figure~\ref{fig:inzid2} shows a sketch of the situation.

  Let $\ell'$ be the line, which is tangent to $C'_3$ and $C'_4$ at
  their touching point $p$. Consider the two segments from $p$ to
  $C'_3 \cap L_2$ and from $p$ to $C'_4 \cap L_1$.  
  Elementary considerations show the following equalities of angles: 
  $\alpha_3 = \alpha_4$, $\beta_3 = \gamma_3$, $\beta_4 = \gamma_4$, 
  and $\gamma_3 = \gamma_4$ (cf.\ Figure~\ref{fig:inzid2}). 
  Hence, there is a line $\ell$
  containing the images of the four touching points.
  Consequently, the circle $C^*=\Gamma^{-1}(\ell)$ contains the four
  touching points of~$\CC$, i.e., they are cocircular.
\end{proof} 

The following theorem (illustrated in
Figure~\ref{fig:n6_nonr_number2_proof}) is mentioned by
Richter-Gebert~\cite[page 26]{R-G11} as a relative of Pappos's and
Miquel's Theorem.

\begin{theorem}[\cite{R-G11}]
\label{thm:Incid-3C3L}
Let $C_1,C_2,C_3$ be three circles in the plane such that each pair of them
intersects in two points, and let $\ell_i$ be the line spanned by the two
points of intersection of $C_j$ and $C_k$, for $\{i,j,k\} = \{1,2,3\}$. 
Then $\ell_1$, $\ell_2$, and $\ell_3$ meet in a common point.
\end{theorem}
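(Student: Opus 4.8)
The plan is to recognize each $\ell_i$ as the \emph{radical axis} of the pair $C_j, C_k$ and then to run the classical radical-center argument, which I make explicit through the power of a point.

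First I would write each circle in the normalized form $f_i(p) = \|p\|^2 + a_i x + b_i y + c_i$ for a point $p = (x,y)$ in the plane, so that $C_i = \{p : f_i(p) = 0\}$ and $f_i(p)$ is, up to sign, the power of $p$ with respect to $C_i$. The key observation is that the difference $f_j - f_k$ is an \emph{affine} function, since the quadratic term $\|p\|^2$ cancels. Its zero set is therefore a line (it cannot be a nonzero constant, because $f_j-f_k$ vanishes at the two points of $C_j \cap C_k$), and that line passes through both points of $C_j \cap C_k$, where $f_j$ and $f_k$ both vanish. As two distinct points determine a unique line, this zero set is exactly~$\ell_i$.

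Next comes the concurrency. Setting $g_3 := f_1 - f_2$, $g_1 := f_2 - f_3$, and $g_2 := f_3 - f_1$, I obtain three affine forms cutting out $\ell_3$, $\ell_1$, and $\ell_2$, respectively, that satisfy the identity $g_1 + g_2 + g_3 \equiv 0$. Hence, if a point $P$ lies on two of the lines, say $g_3(P) = 0$ and $g_1(P) = 0$, then $g_2(P) = -g_1(P) - g_3(P) = 0$, so $P$ lies on the third line as well. It therefore suffices to exhibit a single point common to two of the lines: the lines $\ell_1$ and $\ell_3$ are non-parallel precisely when the centers of the three circles are not collinear, and in that case their intersection point is the desired common point (the \emph{radical center}).

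The one case requiring care is when the three centers are collinear: then all three radical axes are perpendicular to the line of centers and hence mutually parallel, so the statement holds only in the projective reading, with the common point at infinity. This is the expected obstacle — the purely affine conclusion needs the implicit non-degeneracy that the centers are not collinear, whereas under the projective interpretation the linear dependence $g_1 + g_2 + g_3 \equiv 0$ still forces concurrency (possibly at a point at infinity) with no exception. For the configurations arising in our non-circularizability arguments the centers are not collinear, so the affine conclusion applies directly.
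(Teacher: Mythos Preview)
Your proof is correct; it is the classical radical-center argument via the power of a point. The paper takes a genuinely different route: it lifts the three circles onto a sphere by stereographic projection, considers the three planes in $\mathbb{R}^3$ spanned by the lifted circles, and notes that (since the arrangement is simple and intersecting) those three planes meet in a single projective point~$p$; interpreting the inverse projection as a central projection from 3-space to the plane then sends the pairwise intersection lines of the planes to $\ell_1,\ell_2,\ell_3$ and $p$ to their common (possibly infinite) point. Your argument is more elementary and stays entirely in the plane, relying only on the cancellation of the quadratic term $\|p\|^2$ in the difference of two normalized circle equations. The paper's argument needs the stereographic machinery but is thematically aligned with the rest of the article, which repeatedly exploits the correspondence between circles on a sphere and planes in 3-space. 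Both treatments handle the collinear-centers degeneracy projectively, exactly as you do.
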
 

\begin{figure}[htb]
  \centering
    \includegraphics{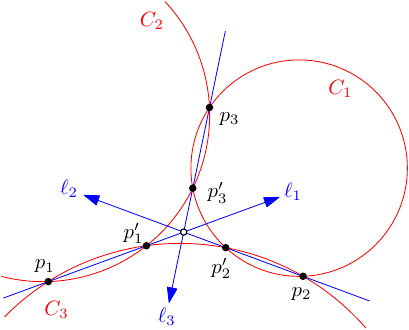}
  \caption{
  An illustration of~Theorem~\ref{thm:Incid-3C3L}.
  }
  \label{fig:n6_nonr_number2_proof}
\end{figure}

\begin{proof}
  Use a stereographic projection $\phi$ to
  map the three circles to circles $C'_1,C'_2,C'_3$ on a sphere~$S$. Consider
  the planes $E'_1,E'_2,E'_3$ spanned by $C'_1,C'_2,C'_3$.  Let $\ell'_i$ be
  the line $E'_j\cap E'_k$, for $\{i,j,k\} = \{1,2,3\}$. 
  Since the arrangement is simple and intersecting,
  the lines $\ell_1',\ell_2',\ell_3'$ are distinct and 
  the intersection $E'_1\cap E'_2 \cap E'_3$ is a single projective point $p$,
  which is contained in each of $\ell'_1,\ell'_2,\ell'_3$.
  The inverse of $\phi$ can be interpreted as a
  central projection from 3-space to the plane.  In this interpretation of
  $\phi^{-1}$, the lines $\ell'_1,\ell'_2,\ell'_3$ are mapped to
  $\ell_1,\ell_2,\ell_3$ and $p$ is mapped to a projective point, i.e., either
$p$ is a point or the lines are parallel.
\end{proof}


\subsection{Flips and Deformations of Pseudocircles}
\label{ssec:HS-flips}

Let $\CC$ be an arrangement of circles.  Imagine that the circles of
$\CC$ start moving independently, i.e., the position of their centers
and their radii depend on a time parameter $t$ in a continuous
way. This yields a family $\CC(t)$ of arrangements with $\CC(0) =
\CC$. Let us assume that the set~$T$ of all $t$ for which $\CC(t)$ is
not simple or contains touching circles is discrete and for each $t\in
T$ the arrangement~$\CC(t)$ contains either a single point where 3 circles
intersect or a single touching. If $t_1<t_2$ are consecutive in
$T$, then all arrangements $\CC(t)$ with $t \in (t_1,t_2)$ are
isomorphic. Selecting one representative from each such class,
we get a list $\CC_0,\CC_1,\ldots$ of simple arrangements 
such that two consecutive (non-isomorphic) arrangements $\CC_i,\CC_{i+1}$  
are either related by a triangle flip or by a digon flip,
see Figure~\ref{fig:the2flips}.

   \calc_figscale{16}
    \begin{figure}[htb]
    \centerline{\input{\path/the2flips.pstex_t}}
    \caption{\label{fig:the2flips}}
    \end{figure}
    VC
{An illustration of the flip operations.}

We will make use of controlled changes in circle arrangements,
in particular, we grow or shrink specified circles of an arrangement
to produce touchings or points where 3 circles intersect. 
The following lemma will be of use frequently.

\goodbreak

\begin{lemma}[Digon Collapse Lemma]
\label{lem:shrinking_lemma}
Let $\CC$ be an arrangement of circles in the plane 
and let $C$ be one of the circles of~$\CC$,
which intersects at least two other circles from~$\CC$ and
does not fully contain any other circle from~$\CC$ in its interior. 
If $C$ has no incident triangle in its interior, then we can
shrink $C$ into
its interior such that the combinatorics of the arrangement
remain the same except that two digons collapse to touchings. 
Moreover, the two corresponding circles touch $C$ from the outside. 
\end{lemma}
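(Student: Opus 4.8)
The plan is to realize the shrinking as a continuous deformation of the single circle $C$, in the spirit of the flip/deformation framework described above (Figure~\ref{fig:the2flips}), and to track the combinatorial changes. Throughout, write $C(t)$ for the deformed circle and $\CC(t)$ for the resulting arrangement, with $C(0)=C$ and $\CC(0)=\CC$; we only consider deformations for which the disc bounded by $C(t)$ is contained in the open disc bounded by $C$, so that ``shrinking into the interior'' is literal. Since $C$ contains no other circle of $\CC$ in its interior, every circle of $\CC$ meeting the disc of $C$ crosses $C$, i.e.\ the part of the arrangement inside $C$ consists of chords (arcs having both endpoints on $C$); as $C$ shrinks inside its original disc no circle can newly enter, so no new crossing, and hence no new digon, can be created during the deformation.

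First I would determine the possible combinatorial changes. A change of $\CC(t)$ can only be a digon flip (a tangency of $C(t)$ with some $C'$) or a triangle flip ($C(t)$ passing through a crossing $q=C'\cap C''$ of two other circles). The key point is that the no-inner-triangle hypothesis forbids the second type from being the binding event: if $C(t)$ were about to pass through $q$, then just before, $q$ lies inside $C(t)$ and the three arcs of $C(t),C',C''$ bound a small cell on the inner side of $C(t)$ that shrinks to the point $q$; this cell is a triangle incident to $C(t)$ lying in its interior, and since the combinatorics is unchanged up to that instant it was already present in $\CC$ --- contradicting the assumption. Thus the only binding events are tangencies. Moreover a tangency of $C(t)$ with $C'$ collapses precisely the lens between the two circles that lies inside $C(t)$ (the two crossing points merge), and an elementary check of the two tangency types (external tangency, and internal tangency with $C(t)$ slipping inside $C'$) shows that in both cases the disc of $C'$ lies outside $C(t)$ near the contact, i.e.\ $C'$ touches $C$ from the outside. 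The no-inner-triangle hypothesis also guarantees that such a collapsing lens is genuinely a digon: an outermost inner cell incident to $C$ cannot be a triangle, hence is a digon (an ``ear''), so collapsing it removes a digon rather than triggering any other change.

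With this in hand, I would produce the two touchings in two stages. In the first stage, shrink $C$ (say by a homothety toward its centre) until the first tangency; by the above this collapses one inner ear to a touching with some circle $C_1$, which touches $C$ from the outside, while all other cells are unchanged. In the second stage, continue to deform $C$ within the original disc while keeping it tangent to $C_1$ (a one-parameter family still allowing the radius to decrease) until a second tangency occurs, collapsing a second ear to a touching with a circle $C_2$, again from the outside; the same argument as before rules out any intervening triangle flip or spurious digon, now applied to the arrangement that already carries the single touching with $C_1$. Stopping here yields a circle shrunk into its interior whose arrangement is combinatorially identical to $\CC$ except that two digons have collapsed to touchings, both corresponding circles touching $C$ from the outside, as required.

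The main obstacle I anticipate is the bookkeeping of the deformation rather than any single computation: one must verify that the chosen shrinking path actually reaches two tangencies --- i.e.\ that after the first ear collapses the circle is not ``locked'' and can be shrunk further while maintaining that contact --- and that along the whole path the only combinatorial events are these two ear collapses. Both points rest on the no-inner-triangle hypothesis (to exclude triangle flips) and on confinement to the original disc (to exclude new crossings); making the ``keep tangent to $C_1$ and shrink'' family precise, and checking that the two collapsing cells are the pre-existing ears so that nothing else in the cell structure moves, is where the care is needed.
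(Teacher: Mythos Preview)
Your proposal is correct and follows essentially the same two-stage approach as the paper: first shrink $C$ concentrically until a digon collapses to a touching, then continue shrinking while maintaining that touching (the paper moves the centre toward the touching point $p$ while decreasing the radius) until a second digon collapses. Your justifications for why no new crossings arise, why the first event cannot be a triangle flip, and why the touching circle lies outside $C(t)$ are somewhat more explicit than the paper's, but the argument is the same.
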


\begin{figure}[htb]
  \centering
  
  \hbox{}\hfill
    \begin{subfigure}[t]{.25\textwidth}
    \centering
    \includegraphics[page=1]{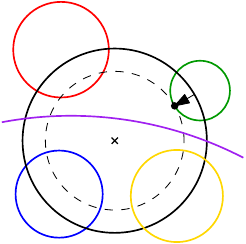}
    \label{fig:shrinking_lemma_illustration_page1}  
  \end{subfigure}
  \hfill
  \begin{subfigure}[t]{.25\textwidth}
    \centering
    \includegraphics[page=2]{shrinking_lemma_illustration}
    \label{fig:shrinking_lemma_illustration_page2}  
  \end{subfigure}
  \hfill\hbox{}
  
  \caption{An illustration of the Digon Collapse Lemma.}
  \label{fig:shrinking_lemma_illustration}
\end{figure}

\begin{proof}
As illustrated on the left hand side of Figure~\ref{fig:shrinking_lemma_illustration},
we shrink the radius of $C$ until the first event occurs.
Since $C$ already intersects all other circles of~$\CC$, 
no new digon can be created.
Moreover, since $C$ has no incident triangles in its interior, 
an interior digon collapses. 
We obtain a point where $C$ touches another circle that lies outside of~$C$.
(Note that several digons might collapse at the same time.)

If $C$ has only one touching point $p$, we shrink the radius 
and simultaneously move the center towards~$p$ 
(cf.\ the right hand side of Figure~\ref{fig:shrinking_lemma_illustration})
such that~$p$ stays a touching until a second digon becomes a touching. 
Again the touching point is with a circle that lies outside of~$C$.
\end{proof}

In the following we will sometimes use the dual version of the lemma,
whose statement is obtained from the Digon Collapse Lemma by changing
interior to exterior and outside to inside.  
The validity of the dual
lemma is seen by applying a M\"obius transformation which exchanges
interior and exterior of~$C$.
 
Triangle flips and digon flips are also central to the work of Snoeyink
and Hershberger~\cite{SH91}. They have shown that an arrangement
$\CC$ of pseudocircles can be swept with a sweepfront~$\gamma$
starting at any pseudocircle $C\in \CC$, i.e., $\gamma_0 = C$.  The
sweep consists of two stages, one for sweeping the interior of~$C$,
the other for sweeping the exterior. At any fixed time $t$ the
sweepfront $\gamma_t$ is a closed curve such that $\CC \cup
\{\gamma_t\}$ is an arrangement of pseudocircles. Moreover, this
arrangement is simple except for a discrete set~$T$ of times
where sweep events happen. The sweep events are triangle flips 
or digon flips involving~$\gamma_t$.

\section{Arrangements of Great-Pseudocircles}
\label{sec:apgc}

Central projections map between arrangements of great-circles on a
sphere $S$ and arrangements of lines on a plane. 
Changes of the plane to which we project 
preserve the isomorphism class of the projective arrangement of lines.
In fact, arrangements of lines in the projective plane 
are in one-to-one correspondence to arrangements of great-circles.

In this section we generalize this concept to arrangements of pseudolines and show
that there is a one-to-one correspondence to arrangements of
great-pseudocircles.  As already mentioned, this correspondence is not new (see
e.g.~\cite{blwsz-om-93}).

A \emph{pseudoline} is a simple closed non-contractible curve in the
projective plane. A \emph{(projective) arrangement of pseudolines} is a
collection of pseudolines such that any two intersect in exactly one
point where they cross.  We can also consider arrangements of
pseudolines in the Euclidean plane by fixing a ``line at infinity'' in
the projective plane -- we call this a \emph{projection}.

A Euclidean arrangement of $n$ pseudolines can be represented by
$x$-monotone pseudolines, a special representation of this kind is the
wiring diagram, see e.g~\cite{FelsnerGoodman2016}. 
As illustrated in Figure~\ref{fig:glue2pgc},
an $x$-monotone representation
can be glued with a mirrored copy of itself 
to form an arrangement of $n$ pseudocircles. 
The resulting arrangement is intersecting and has no NonKrupp subarrangement,
hence, it is an arrangement of great-pseudocircles.

\begin{figure}[htb]
\centering
\includegraphics[page=2]{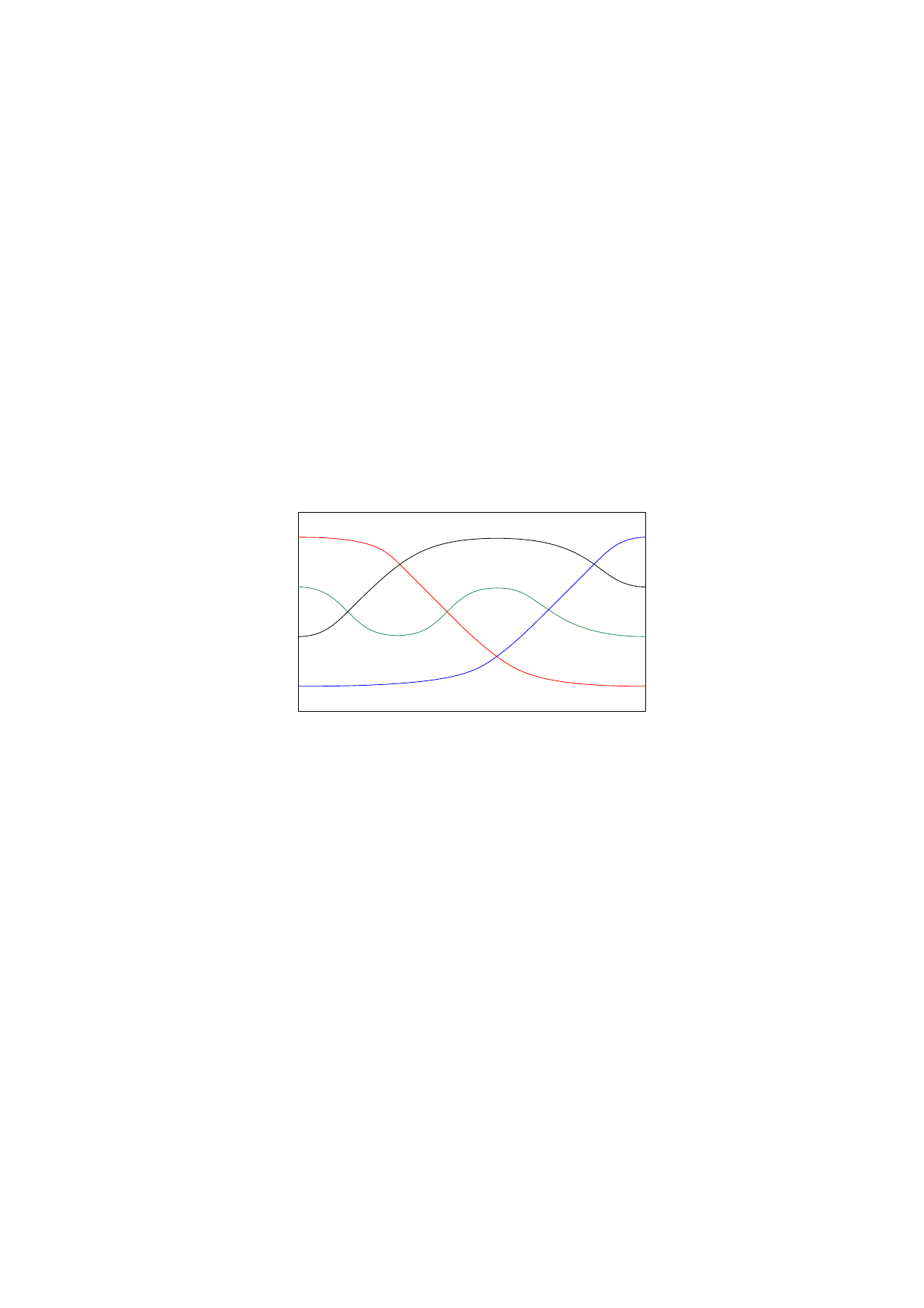}
\caption{
  Obtaining an arrangement of great-pseudocircles from an 
  Euclidean arrangement $\LL$ of pseudolines and its mirrored copy. 
  The gray boxes highlight the arrangement $\LL$
  and its mirrored copy.
}
\label{fig:glue2pgc} 
\end{figure}

For a pseudocircle $C$ of an arrangements of $n$ great-pseudocircles the cyclic
order of crossings on~$C$ is \emph{antipodal}, i.e., the infinite sequence
corresponding to the cyclic order crossings of~$C$ with the other
pseudocircles is periodic of order~$n-1$.  If we consider projections of
projective arrangements of $n$ pseudolines, then this order does not depend on
the choice of the projection.  In fact, projective arrangements of $n$
pseudolines are in bijection with arrangements of $n$ great-pseudocircles.

\subsection{The Great-Circle Theorem and its Applications}
\label{ssec:great_circle_theorem_and_applications}

Let $\AA$ be an arrangement of great-pseudocircles and let
$\LL$ be the corresponding projective arrangement of pseudolines.
Central projections show that, if $\LL$ is realizable with straight
lines, then $\AA$ is realizable with great-circles, and
conversely. 
 
In fact, due to Theorem~\ref{thm:PGC_theorem}, 
it is sufficient that $\AA$ is circularizable
to conclude that $\AA$ is realizable with great-circles 
and $\LL$ is realizable with straight lines. 

\PGCT*

\begin{proof}[Proof of Theorem~\ref{thm:PGC_theorem}]
  Consider an arrangement of circles $\CC$ on the unit sphere $\SS$
  that realizes an arrangement of great-pseudocircles
  as illustrated in Figure~\ref{fig:GCT}(left and center).
  Let $\EE(\CC)$ be the arrangement of planes spanned by the circles of
  $\CC$. Since~$\CC$ realizes an arrangement of great-pseudocircles,
  every triple of circles forms a Krupp, hence, the point of
  intersection of any three planes of $\EE(\CC)$ is in the interior
  of~$\SS$.

  Imagine the planes of $\EE(\CC)$ moving towards the origin.  To be
  precise, for time $t \ge 1$ let
  $\EE_t := \{ \nicefrac{1}{t} \cdot E : E \in \EE(\CC) \}$.  Since
  all intersection points of the initial arrangement
  $\EE_1 = \EE(\CC)$ are in the interior of the unit sphere~$\SS$, the
  circle arrangement obtained by intersecting the moving planes
  $\EE_t$ with~$\SS$ remains the same (isomorphic).   Moreover, 
  as illustrated in Figure~\ref{fig:GCT}(right), 
  every circle in this arrangement converges to a great-circle as
  $t \to +\infty$, and the statement follows. 
\end{proof}

\begin{figure}[htb]
  \centering
    \includegraphics{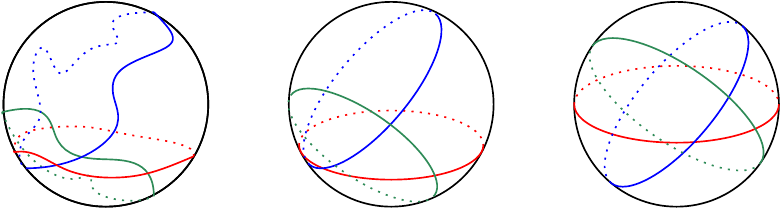}
  \caption{Illustration for the proof of Theorem~\ref{thm:PGC_theorem}.}
  \label{fig:GCT}  
\end{figure}

The Great-Circle Theorem has several interesting consequences.
The following corollary allows the transfer of
results from the world of pseudolines into the world of
(great-)pseudocircles.

\begin{corollary}
An arrangement of pseudolines is stretchable
if and only if the corresponding arrangement of great-pseudocircles
is circularizable.
\end{corollary}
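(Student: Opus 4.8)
The plan is to chain two equivalences. The first is the geometric correspondence between arrangements of lines and arrangements of great-circles furnished by central projection; the second is the Great-Circle Theorem (Theorem~\ref{thm:PGC_theorem}), which has already done the substantive work. So I would fix an arrangement $\LL$ of pseudolines and let $\AA$ denote the corresponding arrangement of great-pseudocircles, which is determined by the bijection between projective arrangements of $n$ pseudolines and arrangements of $n$ great-pseudocircles established above.

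First I would argue that $\LL$ is stretchable if and only if $\AA$ admits a great-circle representation. For the forward direction, a straight-line arrangement isomorphic to $\LL$ lives in some affine plane; regarding that plane as a plane not through the center of the unit sphere $\SS$ and applying the inverse of a central projection sends each line to a great-circle and produces an arrangement of great-circles isomorphic to $\AA$. Conversely, a great-circle representation of $\AA$ projects centrally onto a generic plane to an arrangement of straight lines isomorphic to $\LL$. This is exactly the correspondence recorded at the start of Subsection~\ref{ssec:great_circle_theorem_and_applications}; the only point to verify is that central projection preserves the isomorphism class, which holds because it induces a homeomorphism of the relevant cell decompositions.

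Second, I would invoke the Great-Circle Theorem directly: for the arrangement of great-pseudocircles $\AA$, circularizability is equivalent to the existence of a great-circle representation. Combining the two equivalences yields the chain
\[
  \LL \text{ stretchable}
  \iff
  \AA \text{ has a great-circle representation}
  \iff
  \AA \text{ circularizable},
\]
which is precisely the assertion of the corollary.

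I do not expect a genuine obstacle here: the one nontrivial ingredient, namely that circularizability of a great-pseudocircle arrangement already forces a great-circle realization, is supplied entirely by Theorem~\ref{thm:PGC_theorem}, and everything else is bookkeeping about projections. The only step that warrants a sentence of care is the passage between the projective and the affine picture: a ``line at infinity'' must be chosen consistently when moving between the projective arrangement of pseudolines and its Euclidean straightening. Since the isomorphism class of the resulting arrangement is closed under such a change of the distinguished line, this causes no difficulty.
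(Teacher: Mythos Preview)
Your proof is correct and follows exactly the approach of the paper: the paragraph preceding Theorem~\ref{thm:PGC_theorem} already records that central projection gives the equivalence between stretchability of~$\LL$ and great-circle realizability of~$\AA$, and the corollary is then obtained by chaining this with the Great-Circle Theorem. The paper does not spell out a separate proof for the corollary beyond these remarks, so your write-up is in fact more detailed than the original.
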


Since deciding stretchability of arrangements of pseudolines is known
to be $\exists\mathbb{R}$-complete (see
e.g.~\cite{Mnev1988,Matousek2014,SchaeferStefankovi2017}), the hardness of stretchability
directly carries over to hardness of circularizability.  To show
containment in $\exists\mathbb{R}$, the circularizability problem has
to be modeled with polynomial inequalities.  This can be done by
taking the centers and radii of the circles as variables and using
polynomial inequalities to prescribe the order of the intersections
along the circles.  

\begin{corollary}
\label{cor:ETR_completeness}
Deciding circularizability is $\exists\mathbb{R}$-complete,
even when the input is restricted to arrangements of great-pseudocircles.
\end{corollary}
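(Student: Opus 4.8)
The plan is to establish the two directions of $\exists\mathbb{R}$-completeness separately: containment in $\exists\mathbb{R}$ for circularizability of arbitrary arrangements, and $\exists\mathbb{R}$-hardness already for the restricted class of great-pseudocircles. Put together, these give completeness for the restricted class.

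For containment, I would model a prospective circle representation by introducing real variables $x_i,y_i,r_i$ for the center and radius of the $i$-th circle, together with auxiliary variables for the (at most $2(n-1)$) crossing points on each circle, constrained by the equations placing each such point on the two circles through it. The key point is that the isomorphism type of a simple arrangement of circles is already determined by (i)~which pairs of circles cross and (ii)~the cyclic order in which the other circles cross each fixed circle. Condition~(i) is expressible by comparing $(x_i-x_j)^2+(y_i-y_j)^2$ with $(r_i-r_j)^2$ and $(r_i+r_j)^2$, and condition~(ii) by orientation predicates that fix the angular order of the crossing points around each center. All of these are polynomial equations and strict inequalities of bounded degree, and there are only $O(n^2)$ of them over $O(n^2)$ variables, so the system has polynomial size. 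Its real solutions are exactly the circle arrangements of the prescribed type, which places circularizability in $\exists\mathbb{R}$.

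For hardness, I would reduce from stretchability of projective arrangements of pseudolines, which is $\exists\mathbb{R}$-complete as recalled in the preceding paragraph. Given a projective arrangement $\LL$ of $n$ pseudolines, the construction of Section~\ref{sec:apgc} (gluing an $x$-monotone representation of $\LL$ with its mirrored copy) produces in polynomial time the corresponding arrangement $\AA$ of $n$ great-pseudocircles. By the corollary immediately above, $\LL$ is stretchable if and only if $\AA$ is circularizable. This is therefore a polynomial-time many-one reduction from stretchability to circularizability whose output is always an arrangement of great-pseudocircles, proving $\exists\mathbb{R}$-hardness even under this restriction. (Here the equivalence of the corollary rests on the Great-Circle Theorem, Theorem~\ref{thm:PGC_theorem}, which guarantees that circularizability of $\AA$ already forces a great-circle representation and hence a straight-line realization of $\LL$.)

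The main obstacle I anticipate lies in the containment direction: one must verify that the polynomial system faithfully encodes the \emph{full} isomorphism type of the arrangement, in particular the cyclic orders of crossings along each circle, rather than merely the incidence pattern of which pairs cross, while keeping the encoding of polynomial size. The hardness direction is comparatively routine, since it only invokes the already-established correspondence of Section~\ref{sec:apgc} and the equivalence between stretchability and circularizability of great-pseudocircles.
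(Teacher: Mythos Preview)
Your proposal is correct and follows essentially the same approach as the paper. The paper's argument is in fact only a sketch: for hardness it invokes the correspondence between stretchability and circularizability of great-pseudocircles (the corollary preceding this one, which rests on Theorem~\ref{thm:PGC_theorem}), and for containment it simply says to take centers and radii as variables and use polynomial inequalities to prescribe the order of the intersections along each circle---exactly the encoding you spell out in more detail.
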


It is known that all (not necessarily simple) arrangements of $n \le 8$ pseudolines
are stretchable and that the simple non-Pappos arrangement is the unique
non-stretchable simple projective arrangement of $9$ pseudolines, see
e.g.~\cite{FelsnerGoodman2016}. This again carries over to arrangements of
great-pseudocircles. 

\begin{corollary}
All arrangements of up to $8$ great-pseudocircles are circularizable
and the arrangement
corresponding to the simple non-Pappos arrangement of pseudolines
is the unique non-circularizable arrangement of $9$ great-pseudocircles.
\end{corollary}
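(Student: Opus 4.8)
The plan is to reduce the statement entirely to the known stretchability classification for arrangements of pseudolines, transported through the machinery of Section~\ref{sec:apgc}. First I would recall the bijection established there between projective arrangements of $n$ pseudolines and arrangements of $n$ great-pseudocircles; crucially, this correspondence preserves both the number of curves and simplicity, so it restricts to a bijection between the simple objects on the two sides. The immediately preceding corollary (stretchability of the pseudoline arrangement is equivalent to circularizability of the associated great-pseudocircle arrangement) then lets me translate every stretchability fact into a circularizability fact about the corresponding great-pseudocircles.

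With that equivalence in hand, both assertions follow from the cited classification. For the first part, since every projective arrangement of at most $8$ pseudolines is stretchable, the equivalence gives at once that every arrangement of at most $8$ great-pseudocircles is circularizable. For the second part, the simple non-Pappos arrangement is the unique non-stretchable simple projective arrangement of $9$ pseudolines; carrying this uniqueness across the bijection, the arrangement of great-pseudocircles associated to the non-Pappos arrangement is the unique non-circularizable one among the simple arrangements of $9$ great-pseudocircles. Note that no use of Theorem~\ref{thm:PGC_theorem} beyond what is already packaged into the preceding corollary is needed, since that corollary is precisely the combination of the Great-Circle Theorem with the central-projection correspondence between great-circles and straight lines.

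The closest thing to an obstacle is purely bookkeeping: matching the qualifiers across the bijection. The classification input is phrased for \emph{simple} projective arrangements of $9$ pseudolines, whereas the statement speaks of arrangements of great-pseudocircles, so I would make explicit (i) that ``arrangement'' is understood in the simple sense used throughout the paper, and (ii) that the bijection indeed sends simple objects to simple objects and respects the count $n$, so that uniqueness on the pseudoline side transfers to uniqueness on the great-pseudocircle side. Granting this, the corollary is a direct transfer and requires no genuinely new argument.
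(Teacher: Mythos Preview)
Your proposal is correct and matches the paper's approach exactly: the paper derives the corollary directly from the preceding corollary (stretchability $\Leftrightarrow$ circularizability via the bijection of Section~\ref{sec:apgc}) together with the cited classification that all projective arrangements of at most $8$ pseudolines are stretchable and the simple non-Pappos arrangement is the unique non-stretchable simple one on $9$ pseudolines. Your attention to the ``simple'' qualifier is apt; the paper addresses this in the remark following the corollary, noting that the statement in fact extends to the non-simple case as well.
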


Note that the statement of the corollary 
also holds for the non-simple case. A non-simple
arrangement of great-pseudocircles is an arrangement where
three pseudocircles either form a Krupp or the intersection of the three
pseudocircles consists of two points. Gr\"unbaum~\cite{Gr72}
denoted such arrangements as ``symmetric''.
\medskip

Bokowski and Sturmfels~\cite{BokowskiSturmfels1989} have shown 
that infinite families of minimal non-stretchable arrangements of
pseudolines exist, i.e., non-stretchable arrangements
where every proper subarrangement is stretchable. Again, 
this carries over to arrangement of
pseudocircles.

\begin{corollary}
\label{cor:infinite_families_minimal_noncirc}
There are infinite families of 
minimal non-circularizable arrangements of (great-)pseudo\-circles.
\end{corollary}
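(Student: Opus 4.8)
The plan is to transfer the Bokowski--Sturmfels result from the world of pseudolines to the world of great-pseudocircles via the bijection established in this section. Recall that projective arrangements of $n$ pseudolines are in one-to-one correspondence with arrangements of $n$ great-pseudocircles, and that (by the corollary following the Great-Circle Theorem) a pseudoline arrangement is stretchable if and only if the corresponding arrangement of great-pseudocircles is circularizable. Hence the entire argument reduces to checking that the notion of \emph{minimality} is preserved under this correspondence.

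First I would fix, for infinitely many values of $n$, a minimal non-stretchable projective arrangement $\LL_n$ of pseudolines as provided by Bokowski and Sturmfels, and let $\AA_n$ be the corresponding arrangement of great-pseudocircles. Since $\LL_n$ is non-stretchable, the cited corollary immediately yields that $\AA_n$ is non-circularizable. It remains to show that every proper subarrangement of $\AA_n$ is circularizable.

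The key step is to verify that the bijection is compatible with taking subarrangements. Deleting a pseudoline $\ell$ from $\LL_n$ yields a projective arrangement whose associated great-pseudocircle arrangement is exactly $\AA_n$ with the great-pseudocircle corresponding to $\ell$ removed; this is transparent from the construction by gluing a wiring diagram of $\LL_n$ to a mirrored copy of itself, since removing a wire removes precisely the one great-pseudocircle it generates. Moreover, any subarrangement of a great-pseudocircle arrangement is again a great-pseudocircle arrangement, because the defining property (every triple forms a Krupp) is hereditary, so the deleted arrangement is of the correct type and corresponds to a genuine pseudoline subarrangement. Granting this, minimality of $\LL_n$ says every proper subarrangement of $\LL_n$ is stretchable, and hence---again by the corollary---every proper subarrangement of $\AA_n$ is circularizable. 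Thus each $\AA_n$ is a minimal non-circularizable arrangement of great-pseudocircles, and the family $\{\AA_n\}$ is infinite because $\{\LL_n\}$ is.

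I expect the only real obstacle to be the bookkeeping in this subarrangement-compatibility step: one must confirm that ``delete a pseudoline, then form the great-pseudocircle arrangement'' gives the same object as ``form the great-pseudocircle arrangement, then delete the corresponding pseudocircle.'' Once this is established, everything else is a direct application of the stretchable-iff-circularizable consequence of the Great-Circle Theorem (Theorem~\ref{thm:PGC_theorem}), and the statement for arrangements of pseudocircles follows a fortiori, since arrangements of great-pseudocircles are in particular arrangements of pseudocircles.
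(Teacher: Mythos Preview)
Your proposal is correct and follows exactly the approach the paper intends: the paper does not give an explicit proof but simply remarks that the Bokowski--Sturmfels result ``carries over'' via the bijection between projective pseudoline arrangements and great-pseudocircle arrangements together with the stretchable-iff-circularizable corollary of the Great-Circle Theorem. Your write-up just makes the implicit minimality/subarrangement-compatibility step explicit, and that step is indeed routine bookkeeping.
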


Mn\"ev's Universality Theorem~\cite{Mnev1988}, see
also~\cite{RG-mutr-95}, has strong implications for pseudoline
arrangements and stretchability.  Besides the hardness of
stretchability, it also shows the existence of arrangements of
pseudolines with a disconnected realization space, that is, there are
isomorphic arrangements of lines such that there is no continuous
transformation which transforms one arrangement into the other within
the isomorphism class. Suvorov~\cite{Suvorov1988} gave an 
explicit example of two such arrangements on $n=13$ lines.

\begin{corollary}
\label{cor:disconnected_realizationspace}
There are circularizable arrangements of (great-)pseudocircles 
with a disconnected realization space.
\end{corollary}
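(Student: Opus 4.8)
The plan is to transfer Suvorov's example from the line world to the circle world, using the Great-Circle Theorem together with the concrete deformation appearing in its proof. Let $\LL$ be Suvorov's arrangement of $n=13$ pseudolines whose space of straight-line realizations is disconnected~\cite{Suvorov1988}, and let $\AA$ be the corresponding arrangement of great-pseudocircles. Since $\LL$ is stretchable, $\AA$ is circularizable by the correspondence between stretchability and circularizability established above. I claim that the realization space of $\AA$ by circles is already disconnected.

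Write $R_\circ$ for the space of all circle arrangements on the sphere $\SS$ that are isomorphic to $\AA$, and $R_\odot \subseteq R_\circ$ for the subspace of those whose circles are great-circles. The first step is to show that $R_\odot$ is a deformation retract of $R_\circ$. Given $\CC \in R_\circ$, the deformation from the proof of Theorem~\ref{thm:PGC_theorem} pushes the spanning planes towards the origin via $\EE_t := \{\nicefrac{1}{t}\cdot E : E \in \EE(\CC)\}$ and reintersects with $\SS$; by that proof the resulting arrangement stays isomorphic to $\AA$ for every $t \ge 1$ and tends to a great-circle arrangement as $t \to \infty$. This assignment is continuous in $\CC$, and it fixes every $\CC \in R_\odot$ pointwise for all $t$, since planes through the origin are invariant under scaling. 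Reparametrizing $t \in [1,\infty]$ by $[0,1]$, we obtain a (strong) deformation retraction of $R_\circ$ onto $R_\odot$; in particular $R_\circ$ and $R_\odot$ have the same number of connected components.

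The second step is to identify the component count of $R_\odot$ with that of the realization space of $\LL$. Central projection between $\SS$ and a plane gives a one-to-one correspondence between arrangements of great-circles and projective arrangements of lines that respects the isomorphism class, and changing the projection plane moves within a connected family (as recalled at the beginning of Section~\ref{sec:apgc}). Hence $R_\odot$ and the line-realization space of $\LL$ have the same number of connected components. As the latter is disconnected, so is $R_\odot$, and therefore so is $R_\circ$; this proves the corollary.

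The main obstacle is not any single hard argument but the bookkeeping around the definition of ``realization space'': one must fix a convention for the symmetries being quotiented out (the connected group of M\"obius transformations on the circle side, the corresponding projective transformations on the line side) and verify that the same convention is used throughout, so that the count of connected components genuinely transfers across both the deformation retraction and the central-projection correspondence. Once the definitions are aligned on the two sides, disconnectedness passes from Suvorov's line arrangement to the circle realizations of $\AA$.
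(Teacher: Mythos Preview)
Your argument is correct. The paper itself states the corollary without proof, immediately after recalling Suvorov's example and Mn\"ev's theorem, relying on the correspondence established by the Great-Circle Theorem; your write-up supplies precisely the missing step, namely that the scaling deformation from the proof of Theorem~\ref{thm:PGC_theorem} gives a strong deformation retraction of the circle realization space onto the great-circle realization space, so that disconnectedness of the latter (inherited from Suvorov's line arrangement via the direct identification of great circles with projective lines) forces disconnectedness of the former.

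One small remark: in your second step the detour through ``changing the projection plane'' is unnecessary. Great circles on $\SS$, planes through the origin in $\RR^3$, and lines in $\mathbb{RP}^2$ are canonically the same objects, so $R_\odot$ is literally homeomorphic to the projective realization space of~$\LL$, with no choice involved. Your acknowledged caveat about aligning the notions of realization space (raw versus modulo the connected symmetry group) is well placed, but since both groups in question are connected this does not affect the component count.
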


\goodbreak
\section{Arrangements of \texorpdfstring{$5$}{5} Pseudocircles}
\label{sec:non-circ5}

In this section we prove Theorem~\ref{thm:non-real-5er_connected}.
\NonRealFiveConnected*

On the webpage~\cite{scheucher_website} we have the data for circle
realizations of 980 out of the 984 connected arrangements of~$5$
pseudocircles.  The remaining four arrangements in this class are the four
arrangements of Theorem~\ref{thm:non-real-5er_connected}.  Since all
arrangements with $n\leq 4$ pseudocircles have circle representations, there
are no disconnected non-circularizable examples with $n\leq 5$. Hence, the
four arrangements $\AAfiveA$, $\AAfiveB$, $\AAfiveC$, and~$\AAfiveD$ are the
only non-circularizable arrangements with $n\leq 5$. Since $\AAfiveB$,
$\AAfiveC$, and~$\AAfiveD$ are not intersecting,~$\AAfiveA$ is the unique
non-circularizable intersecting arrangement of 5 pseudocircles, this is
Corollary~\ref{thm:non-real-5er_intersecting}.

\subsection{Non-circularizability of~\texorpdfstring{$\AAfiveA$}{N51}}

The arrangement $\AAfiveA$ is depicted in 
Figures~\ref{fig:n5_nonr_number1_intersecting} and~\ref{fig:n5_nonr_number1}.
Since Figure~\ref{fig:n5_nonr_number1_intersecting} 
is ment to illustrate the symmetry of~$\AAfiveA$
while Figure~\ref{fig:n5_nonr_number1} illustrates our non-circularizability proof,
the two drawings of~$\AAfiveA$ differ in the Euclidean plane. However,
we have marked one of the cells of~$\AAfiveA$ by black cross in both figures 
to highlight the isomorphism. 

For the non-circularizability proof of $\AAfiveA$ 
we will make use of the following additional incidence lemma.

\begin{lemma}[Second Four-Circles Incidence Lemma]
\label{lem:inzid1}
Let $\CC$ be an arrangement of four circles $C_1,C_2,\allowbreak{}C_3,C_4$ such that
every pair of them is touching or forms a digon in $\CC$, and every
circle is involved in at least two touchings.
Then there is a circle $C^*$ passing through the digon or touching
point of each of the following pairs of circles
$(C_1,C_2)$, $(C_2,C_3)$, $(C_3,C_4)$, and $(C_4,C_1)$
in this cyclic order.
\end{lemma}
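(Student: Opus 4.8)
The plan is to lift the configuration to the sphere and replace the statement about concyclicity by one about coplanarity, in the spirit of the proof of Lemma~\ref{lem:inzid2} and of Fact~\ref{fact:Krupp_is_inside}. After a stereographic projection I place the four circles on a sphere $S$ and pass to the arrangement $\EE(\CC)$ of the four planes $E_1,E_2,E_3,E_4$ spanned by $C_1,C_2,C_3,C_4$. Since every pair of circles meets (each pair touches or forms a digon), no two planes are parallel and, in the simple situation, the four planes are in general position; they are therefore the four face-planes of a tetrahedron whose vertices $V_1,V_2,V_3,V_4$ are the triple intersection points, where $V_i:=\bigcap_{j\ne i}E_j$ is opposite to the face $E_i$. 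The line $\ell_{ij}=E_i\cap E_j$ is then the tetrahedron edge joining the two vertices lying on both planes, and $C_i\cap C_j=\ell_{ij}\cap S$. A short bookkeeping shows that the four pairs $(C_1,C_2),(C_2,C_3),(C_3,C_4),(C_4,C_1)$ correspond precisely to the four edges of a Hamiltonian $4$-cycle $V_1V_2V_3V_4$ of the tetrahedron, and the two diagonal pairs to the remaining two edges. Because four points on $S$ are concyclic if and only if they are coplanar, the lemma is equivalent to the statement that the four marked points, one on each edge of this $4$-cycle, lie in a common plane $E^\ast$; the circle $C^\ast$ is then $E^\ast\cap S$.

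Next I would invoke the coplanarity criterion for a skew quadrilateral (a spatial Menelaus theorem): four points $A\in V_1V_2$, $B\in V_2V_3$, $C\in V_3V_4$, $D\in V_4V_1$ are coplanar if and only if the product of signed edge-ratios $\frac{V_1A}{AV_2}\cdot\frac{V_2B}{BV_3}\cdot\frac{V_3C}{CV_4}\cdot\frac{V_4D}{DV_1}$ equals $1$. The individual ratios are controlled by the power of a point: parameterizing $\ell_{ij}=V_aV_b$ by $X=V_a+t(V_b-V_a)$ and intersecting with the unit sphere gives a quadratic whose values at $t=0$ and $t=1$ are the powers $\pi_a=|V_a|^2-1$ and $\pi_b=|V_b|^2-1$, so that the two intersection points on that edge have edge-ratios with product $\pi_a/\pi_b$. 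Taking the product of these per-edge ratio-products around the $4$-cycle telescopes to $\frac{\pi_1}{\pi_2}\cdot\frac{\pi_2}{\pi_3}\cdot\frac{\pi_3}{\pi_4}\cdot\frac{\pi_4}{\pi_1}=1$. When every marked pair is a touching, each edge is tangent to $S$ (a double root), the four edge-ratios square to the four quotients above, and the product of the ratios is $\pm1$; this recovers Lemma~\ref{lem:inzid2}, with the sign fixed to $+1$ by the fact that the tangencies are external.

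It remains to organize the cases involving digons. From the hypothesis that each circle is involved in at least two touchings one gets that at most two of the six pairs are digons and that they form a (partial) matching in the complete graph on the four circles; moreover, whenever one of the four cyclic pairs is a digon, both diagonal pairs are forced to be touchings. Hence the only genuinely new configurations are a single cyclic digon or a pair of opposite cyclic digons, while in every remaining configuration all four cyclic pairs already touch, so Lemma~\ref{lem:inzid2} applies once one checks the no-containment hypothesis (which follows from Fact~\ref{fact:digon_characterization} and the fact that no circle contains another). For the new cases the marked digon point is the canonical crossing point bounding the collapsing digon, and I would feed the discriminant-zero tangency relations coming from the touching edges --- crucially the two diagonal tangencies --- into the telescoped identity to show that this distinguished root is exactly the one making the four-fold ratio product equal to~$1$.

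The main obstacle is precisely this last step: the bare telescoping identity only fixes the product of \emph{both} intersection points on each edge, so coplanarity is not a free combinatorial consequence but genuinely uses the tangency constraints to pin down which crossing point of each digon lies on $C^\ast$ and to control the overall sign. A clean alternative that sidesteps the determinant bookkeeping is to mimic the proof of Lemma~\ref{lem:inzid2} directly in the plane: apply a M\"obius transformation sending the second crossing point of a digon (or a touching point) to $\infty$, turning two of the circles into two lines through the marked point and the others into circles inscribed in the resulting angle and tangent to one another; concyclicity of the four marked points then follows from tangent--chord angle chasing, exactly as in Lemma~\ref{lem:inzid2}. In either approach the prescribed cyclic order is read off from the final picture, or, in the spherical model, from the cyclic order in which $E^\ast\cap S$ meets the four edges.
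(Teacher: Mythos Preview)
Your spherical/tetrahedron approach via a spatial Menelaus theorem is a genuinely different route from the paper's proof, and the telescoping power-of-a-point identity is a nice observation. But the obstacle you yourself name is the whole difficulty: the telescoped product only constrains the product of the two edge-ratios on each edge, so when a cyclic pair is a digon you have not shown which of its two crossings lies on the coplanar section, nor have you fixed the global sign. You promise to ``feed the discriminant-zero tangency relations'' from the diagonal touchings into the identity to resolve this, but you never actually carry that step out; without it the argument is incomplete. Your reduction of the remaining case~(c) to Lemma~\ref{lem:inzid2} also has a gap: that lemma requires that no circle lie in the interior of another (so that all tangencies are external), and you assert this ``follows from Fact~\ref{fact:digon_characterization} and the fact that no circle contains another'' without justification---in the present setting internal tangencies are not excluded by the hypotheses.

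Your fallback---send a well-chosen point to $\infty$ by a M\"obius transformation and read off concyclicity from the resulting figure---is essentially what the paper does, but the paper organizes the case split more efficiently. Rather than casing on the cyclic pairs, the paper cases on the \emph{diagonal} pair $(C_1,C_3)$. If $(C_1,C_3)$ is a digon, one of its crossings is sent to $\infty$, turning $C_1,C_3$ into two intersecting lines; the centers of the images of $C_2$ and $C_4$ (each touching both lines) then lie on the bisector, and the reflective symmetry in that bisector gives the desired circle through the four touching points. If $(C_1,C_3)$ is a touching, the point $p_{1,3}$ is sent to $\infty$, turning $C_1,C_3$ into parallel lines, and again the picture yields the circle directly. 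Working with the diagonal pair rather than a cyclic one is what lets the paper avoid the external-tangency hypothesis of Lemma~\ref{lem:inzid2}.
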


\begin{figure}[htb]
  \centering
    \includegraphics{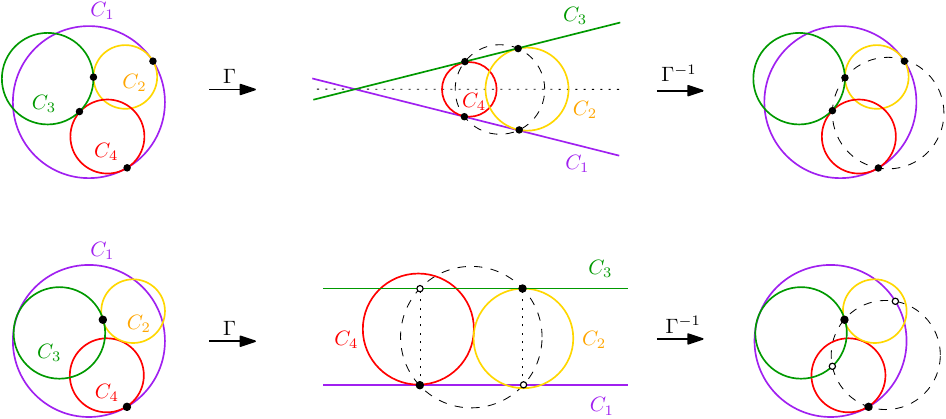}
  \caption{Illustration for the proof of Lemma~\ref{lem:inzid1}.}
  \label{fig:inzid1b}  
\end{figure}

\begin{proof}

  We first deal with the case where $C_1$ and $C_3$ form a digon.  The
  assumptions imply that there is at most one further digon which might then
  be formed by $C_2$ and $C_4$. In particular, the four pairs mentioned in the
  statement of the lemma form touchings and, as illustrated in the first row
  of Figure~\ref{fig:inzid1b}, we will find a circle $C^*$ that is incident to
  those four touching points.  In the following let $p_{i,j}$ denote the
  touching point of $C_i$ and $C_j$.

Think of the circles as being in the extended complex plane.  Apply a M\"obius
transformation $\Gamma$ that maps one of the points of intersection of $C_1$
and $C_3$ to the point $\infty$.  This maps $C_1$ and $C_3$ to a pair of
lines.  The images of $C_2$ and $C_4$ are circles which touch the two lines
corresponding to $C_1$ and $C_3$.
The first row of Figure~\ref{fig:inzid1b} gives an illustration.  Since the
centers of~$C_2$ and~$C_4$ lie on the bisector $\ell$ of the lines
$\Gamma(C_1)$ and $\Gamma(C_3)$, the touchings of $C_2$ and $C_4$ are
symmetric with respect to $\ell$.  Therefore, there is a circle $C$ with
center on $\ell$ that contains the images of the four points $p_{1,2}$,
$p_{2,3}$, $p_{3,4}$, and $p_{4,1}$.  The circle $C^*=\Gamma^{-1}(C)$
contains the four points, i.e., they are cocircular.

Now we consider the case where $C_1$ and $C_3$ touch. 
Again we apply a M\"obius
transformation $\Gamma$ that sends $p_{1,3}$ to $\infty$. 
This maps $C_1$ and $C_3$ to parallel lines, each touched by 
one of $C_2$ and $C_4$. The second
row of Figure~\ref{fig:inzid1b} shows that there is a circle $C$ 
such that  $C^*=\Gamma^{-1}(C)$ has the claimed property.
\end{proof}

\begin{figure}[htb]
  \centering

  \hbox{}\hfill
    \begin{subfigure}[t]{.32\textwidth}
    \centering
    \includegraphics[page=2]{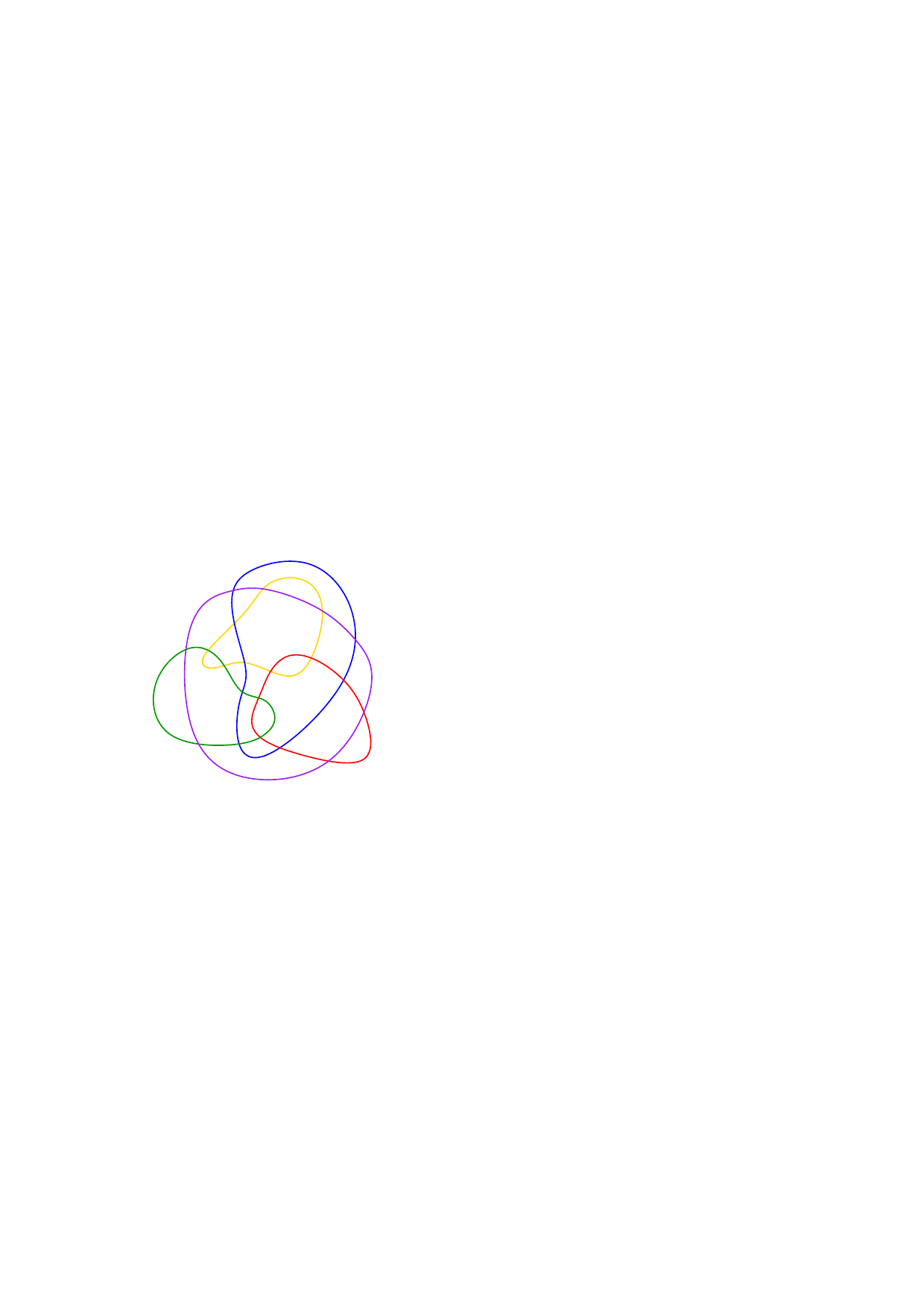}
    \label{fig:n5_nonr_number1_page2}  
  \end{subfigure}
  \hfill
  \begin{subfigure}[t]{.35\textwidth}
    \centering
    \includegraphics[page=3]{n5_nonr_number1_sketch}
    \label{fig:n5_nonr_number1_page3}  
  \end{subfigure}
  \hfill\hbox{}
  
  \caption{
  An illustration of the non-circularizability proof of~$\AAfiveA$.  
  The auxiliary circle $C^*$ is drawn dashed.
}
  \label{fig:n5_nonr_number1}
\end{figure}


\begin{proof}[Proof (non-circularizability of $\AAfiveA$)]
  Suppose for a contradiction that there is an isomorphic arrangement~$\CC$ of
  circles. We label the circles as illustrated in
  Figure~\ref{fig:n5_nonr_number1}. 
  We apply the Digon Collapse Lemma
  (Lemma~\ref{lem:shrinking_lemma}) to shrink $C_2$, $C_3$, and $C_4$ into
  their respective interiors. We also use the dual of the Digon Collapse Lemma for $C_1$. 
  In the resulting subarrangement $\CC'$ formed by these four transformed circles
  $C_1',C_2',C_3',C_4'$, each of the four circles is involved in at least two touchings. 
  By applying Lemma~\ref{lem:inzid1} to $\CC'$ we obtain a circle $C^*$ which
  passes through the four points~$p_{12}$,~$p_{23}$,~$p_{34}$, and~$p_{41}$ (in this order)
  which respectively are touching points or points from the
  digons of $(C_1',C_2')$, $(C_2',C_3')$, $(C_3',C_4')$, and $(C_4',C_1')$.
  
  Moreover, since the intersection of $C_i'$ and $C_j'$ in~$\CC'$
  is contained in the intersection of $C_i$ and~$C_j$ in~$\CC$, each of the
  four points $p_{12}$, $p_{23}$, $p_{34}$, and $p_{41}$
  lies in the original digon of~$\CC$. 
  It follows that the circle $C_5$ has $p_{12}$ and $p_{34}$ in its interior
  but $p_{23}$ and $p_{41}$ in its exterior (cf.\ Figure~\ref{fig:n5_nonr_number1}).
  By applying Lemma~\ref{lem:inzid1} to $\CC'$ we obtain a circle $C^*$ which
  passes through the points~$p_{12}$,~$p_{23}$,~$p_{34}$, and~$p_{41}$ (in this
  order). Now the two circles $C_5$ and $C^*$ intersect in four
  points.  This is impossible, and hence~$\AAfiveA$ is not circularizable.
\end{proof}


\subsection{Non-circularizability of the connected arrangements 
            \texorpdfstring{$\AAfiveB$}{N52}, \texorpdfstring{$\AAfiveC$}{N53}, and \texorpdfstring{$\AAfiveD$}{N54}}
\label{ssec:non-circ5-connected}

The non-circularizability of $\AAfiveB$ has been shown by Linhart and
Ortner~\cite{LO05}. We give an alternative proof which also shows
the non-circularizability of $\AAfiveC$. The two
arrangements $\AAfiveB$ and $\AAfiveC$ are 
depicted in Figures~\ref{fig:n5_nonr_number2} and~\ref{fig:n5_nonr_number3},
and also in Figures~\ref{fig:n5_nonr_2of3_page2} and~\ref{fig:n5_nonr_3of3_page2}.

\begin{figure}[htb]
  \centering
  
  \hbox{}\hfill
    \begin{subfigure}[t]{.4\textwidth}
    \centering
    \includegraphics[page=2]{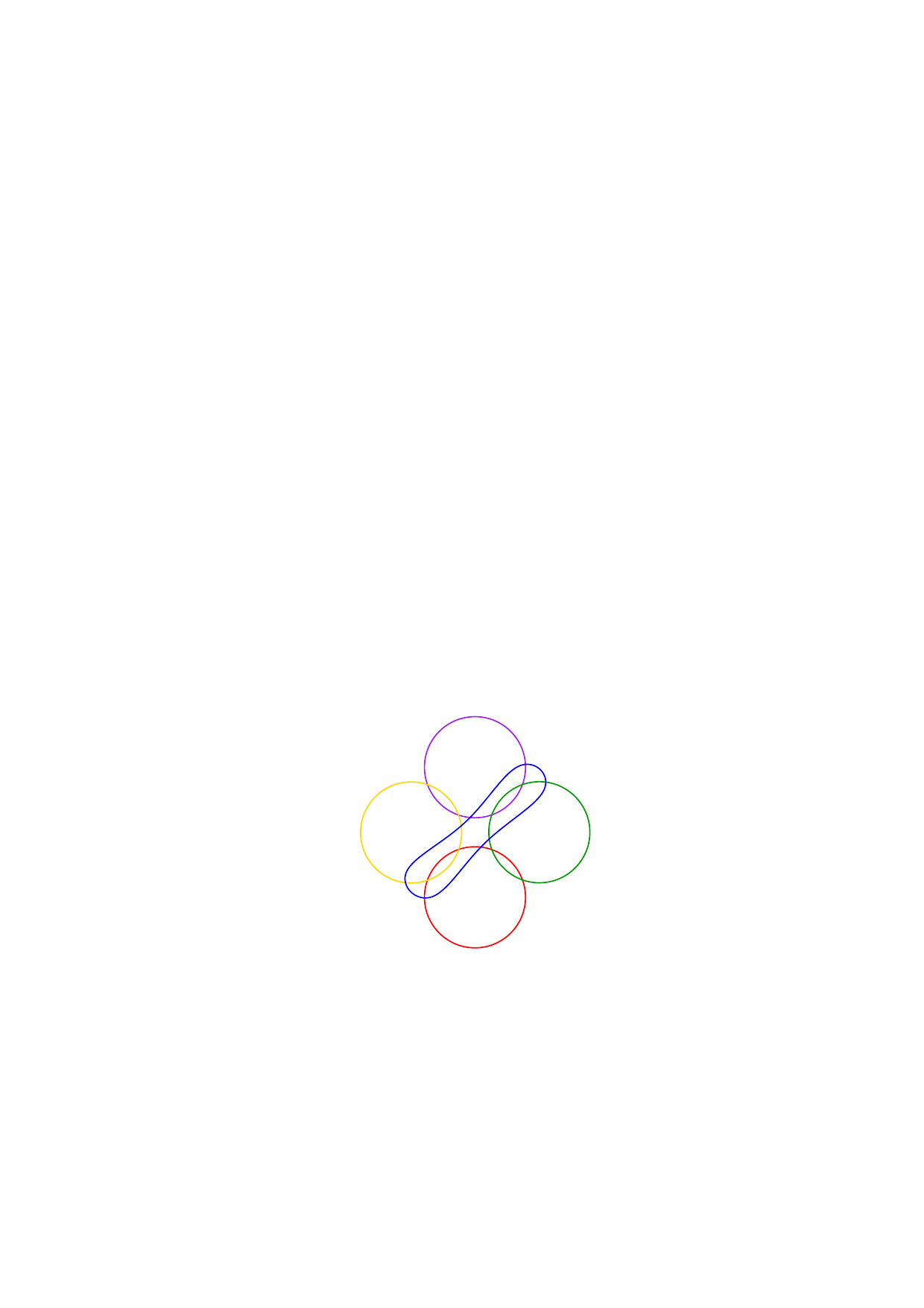}
    \caption{}
    \label{fig:n5_nonr_2of3_page2}  
  \end{subfigure}
  \hfill
  \begin{subfigure}[t]{.4\textwidth}
    \centering
    \includegraphics[page=2]{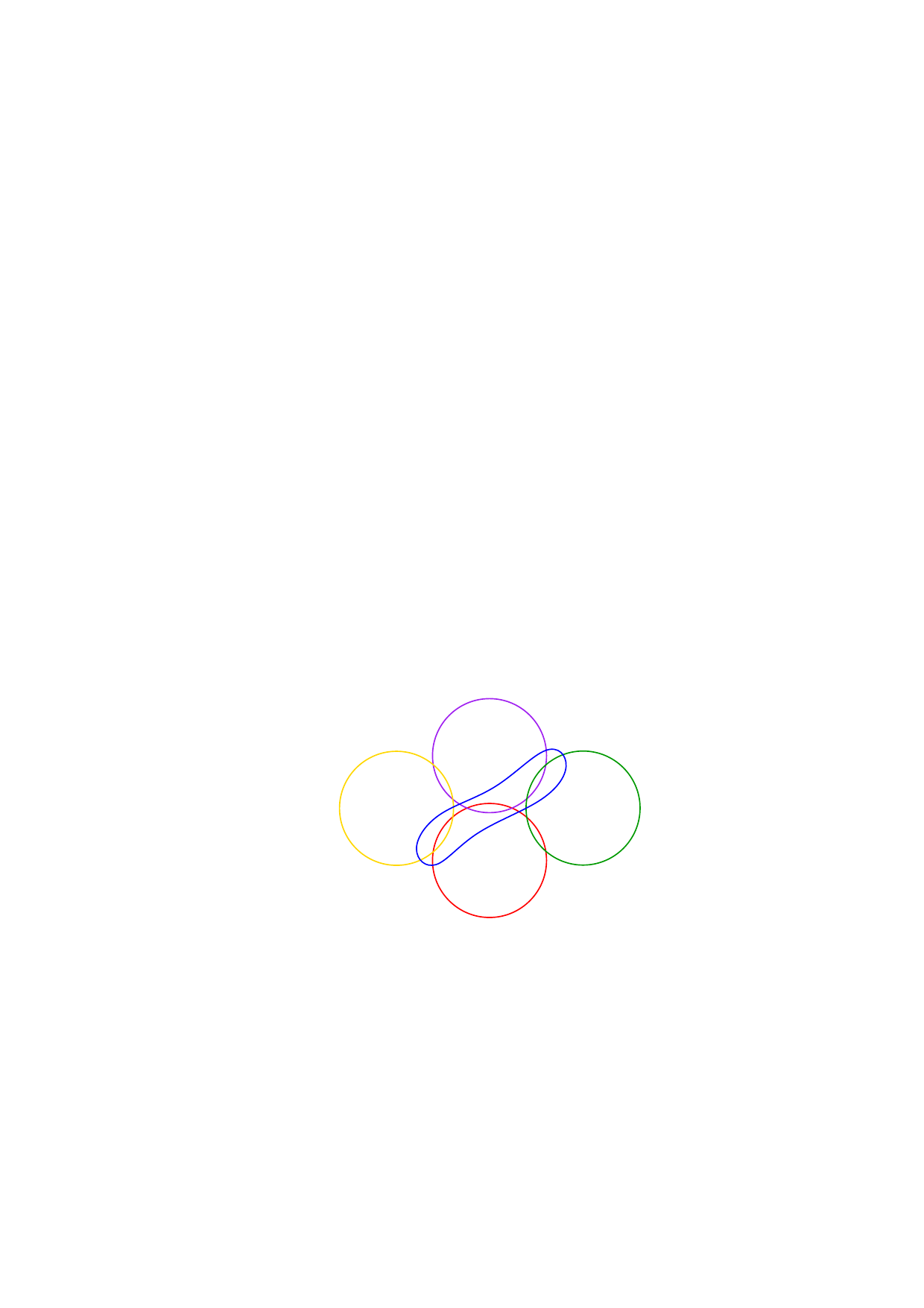}
    \caption{}
    \label{fig:n5_nonr_3of3_page2}  
  \end{subfigure}
  \hfill\hbox{}
  
  \caption{
  An illustration of the non-circularizability proofs of 
  (a)~$\AAfiveB$ and
  (b)~$\AAfiveC$. 
  The auxiliary circle $C^*$ is drawn dashed.
  }
    \label{fig:n5_nonr_2and3}  
\end{figure}

\begin{proof}[Proof (non-circularizability of $\AAfiveB$ and $\AAfiveC$)]
  Suppose for a contradiction that there is an isomorphic arrangement $\CC$ of
  circles. We label the circles as illustrated in
  Figure~\ref{fig:n5_nonr_2and3}.  
  Since the respective interiors of $C_1$ and $C_3$ are disjoint,
  we can
  apply the Digon Collapse Lemma
  (Lemma~\ref{lem:shrinking_lemma}) to $C_1$ and~$C_3$.
  This yields an arrangement $\CC'$ with four touching points
  $p_{12},p_{23},p_{34},p_{41}$, where $p_{ij}$ is the touching point of $C_i'$ and~$C_j'$. 
  
  From Lemma~\ref{lem:inzid2} it follows that there is a circle $C^*$
  which passes trough the points $p_{12}$, $p_{23}$, $p_{34}$, and $p_{41}$ in
  this cyclic order. Since the point $p_{ij}$ lies inside the digon formed
  by $C_i$ and $C_j$ in the arrangement~$\CC$, it follows that the circle~$C_5$
  has $p_{12},p_{34}$ in its interior and $p_{23},p_{41}$ in its exterior.
  Therefore, the two circles $C_5$ and~$C^*$ intersect in four points. 
  This is impossible and, therefore, $\AAfiveB$ and $\AAfiveC$
  are not circularizable.
\end{proof}

It remains to prove that $\AAfiveD$ 
(shown in Figures~\ref{fig:n5_nonr_number4} and~\ref{fig:non-real-5er-4}) 
is not circularizable.
In the proof 
we make use of the following incidence lemma.

\begin{lemma}[Third Four-Circles Incidence Lemma]
\label{lem:inzid3}
Let $\CC$ be an arrangement of four circles $C_1,C_2,\allowbreak{}C_3,C_4$ such that
$(C_1,C_2)$, $(C_2,C_3)$, $(C_4,C_1)$, and $(C_3,C_4)$ are touching, 
moreover, $C_4$ is in the interior of $C_1$ and the exterior of $C_3$, and
$C_2$ is in the interior of $C_3$ and the exterior 
of~$C_1$, see Figure~\ref{fig:inzid3-new}.  
Then there is a circle $C^*$ passing through
the four touching points in the given cyclic~order.
\end{lemma}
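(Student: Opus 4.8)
The plan is to follow the strategy used for the First and Second Four-Circles Incidence Lemmas (Lemmas~\ref{lem:inzid2} and~\ref{lem:inzid1}): apply a M\"obius transformation that sends one of the four touching points to $\infty$, analyze the resulting configuration of two lines and two circles, and read off the incidence from an elementary fact about tangent circles. Crucially, the pair I send to infinity should be one of the \emph{internally} touching pairs, so I choose the touching point $p_{23}$ of $C_2$ and $C_3$ (the pair with $C_2$ in the interior of $C_3$).

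First I would apply a M\"obius transformation $\Gamma$ with $\Gamma(p_{23})=\infty$. Since $C_2$ and $C_3$ touch, their images $L_2=\Gamma(C_2)$ and $L_3=\Gamma(C_3)$ are parallel lines, which I normalize to be horizontal. The containment $C_2\subset\operatorname{int}(C_3)$ forces the two interior half-planes to be nested and to point the same way; I arrange things so that the disc of $C_2$ maps above $L_2$, the disc of $C_3$ maps above $L_3$, and $L_2$ lies above $L_3$. The circles $C_1$ and $C_4$ do not pass through $p_{23}$, so $\Gamma$ maps them to genuine circles $C_1'=\Gamma(C_1)$ and $C_4'=\Gamma(C_4)$. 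Using the remaining hypotheses ($C_1\subset\operatorname{ext}(C_2)$, $C_4\subset\operatorname{ext}(C_3)$, and $C_4\subset\operatorname{int}(C_1)$) I would check that $C_1'$ is tangent to $L_2$ \emph{from below} at $p_{12}'=\Gamma(p_{12})$, that $C_4'$ is tangent to $L_3$ \emph{from below} at $p_{34}'=\Gamma(p_{34})$, and that $C_4'$ lies inside $C_1'$ and is internally tangent to it at $p_{41}'=\Gamma(p_{41})$.

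The key step is then a single application of a central similarity. Because $C_4'$ is internally tangent to $C_1'$ at $p_{41}'$ with $C_4'\subset\operatorname{int}(C_1')$, there is a homothety $h$ centered at $p_{41}'$ with positive ratio greater than $1$ that maps $C_4'$ onto $C_1'$. Since $h$ preserves the vertical direction, it carries the topmost point of $C_4'$ to the topmost point of $C_1'$; by the previous paragraph these are exactly $p_{34}'$ and $p_{12}'$. Hence $h(p_{34}')=p_{12}'$, so $p_{41}'$, $p_{34}'$, $p_{12}'$ are collinear (with $p_{34}'$ between the other two, as the ratio exceeds $1$). Together with $p_{23}'=\infty$, the four image points lie on a common line through $\infty$, i.e.\ they are concyclic, and pulling back $C^*=\Gamma^{-1}(\ell)$ of that line $\ell$ is a circle through $p_{12},p_{23},p_{34},p_{41}$.

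What remains — and what I expect to be the only real obstacle — is bookkeeping rather than depth: verifying that the containment hypotheses pin down the sides on which $C_1'$ and $C_4'$ touch their respective lines. This is exactly the point where internal, as opposed to external, tangency of $C_2$ and $C_3$ is used, since it makes the two half-planes point the same way and thereby aligns the two ``topmost'' points so that the homothety argument applies. Finally I would confirm that the order $p_{12},p_{23},p_{41},p_{34}$ of the points along $C^*$ agrees with the cyclic order claimed: the linear order $p_{12}',p_{34}',p_{41}'$ produced by the homothety, once $p_{23}'=\infty$ is reinserted, yields precisely this cyclic order, matching the order in which the touching pairs are listed in the statement. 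A sketch as in Figure~\ref{fig:inzid3-new} makes all the sign choices transparent, and by the symmetry $(C_1,C_2,C_3,C_4)\mapsto(C_3,C_4,C_1,C_2)$ one could equally send $p_{41}$ to infinity.
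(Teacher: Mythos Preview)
Your argument is correct. It follows the same overall M\"obius-transformation strategy as the paper but differs in a meaningful way in the execution. The paper first observes that $C_1$ and $C_3$ must \emph{cross} (since $C_1$ touches $C_2$, which lies inside $C_3$, and $C_4$, which lies outside $C_3$) and sends one of the two crossing points of $C_1$ and $C_3$ to~$\infty$. This turns $C_1$ and $C_3$ into two intersecting lines; the images of $C_2$ and $C_4$ are circles sitting in opposite sectors, each tangent to both lines. Cocircularity of the four tangency points then follows from reflective symmetry in the angle bisector (exactly as in the proof of Lemma~\ref{lem:inzid1}); the paper actually just appeals to the figure at this point. You instead send the internal-tangency point $p_{23}$ to~$\infty$, obtaining two \emph{parallel} lines, and replace the symmetry argument by the homothety centred at $p_{41}'$ that carries $C_4'$ onto $C_1'$ and hence the topmost point $p_{34}'$ to the topmost point $p_{12}'$.

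Your route has two small advantages: it avoids the preliminary step of showing that $C_1$ and $C_3$ intersect, and it gives a fully explicit justification for the collinearity rather than relying on a picture. The paper's route has the advantage that the reduced configuration is visibly symmetric, so the cyclic order on $C^*$ is immediate without the bookkeeping you describe. Both choices of point to send to $\infty$ are natural, and your remark that one could equally well send $p_{41}$ to $\infty$ (by the evident symmetry of the hypotheses) is correct.
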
 

   \calc_figscale{36}
    \begin{figure}[htb]
    \centerline{\input{\path/inzid3-new.pstex_t}}
    \caption{\label{fig:inzid3-new}}
    \end{figure}
    VC
{An illustration for the proof of Lemma~\ref{lem:inzid3}.}

\begin{proof}
  Since $C_1$ touches $C_2$ and $C_4$ which are respectively inside and
  outside $C_3$ the two circles~$C_1$ and~$C_3$ intersect.  Apply a M\"obius
  transformation $\Gamma$ that maps a crossing point of $C_1$ and $C_3$ to the
  point~$\infty$ of the extended complex plane. This maps $C_1$ and $C_3$ to a
  pair $L_1,L_3$ of lines. The images~$C'_2,C'_4$ of $C_2$ and $C_4$ are
  separated by the lines $L_1,L_3$ and each of them touches both
  lines. Figure~\ref{fig:inzid3-new} illustrates the situation.
  The figure also shows that a circle $C'$ through the four touching points exists.
  The circle $C^*=\Gamma^{-1}(C')$ has the claimed properties.
\end{proof} 

\begin{figure}[htb]
  \centering
  
    \includegraphics[page=1]{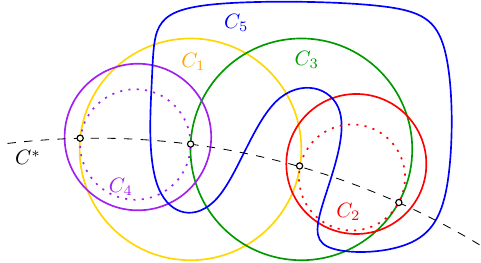}
  
  \caption{
  Illustration of the non-circularizability proof of the arrangement $\AAfiveD$.
  The circle $C^*$ is drawn dashed.
  }
    \label{fig:non-real-5er-4}  
\end{figure}

\begin{proof}[Proof (non-circularizability of $\AAfiveD$)]
  Suppose for a contradiction that there is an isomorphic arrangement~$\CC$ of
  circles. We label the circles as illustrated in
  Figure~\ref{fig:non-real-5er-4}.
  We shrink the circles~$C_2$
  and $C_4$ such that each of the pairs 
  $(C_1,C_2)$, $(C_2,C_3)$, $(C_3,C_4)$, and $(C_4,C_1)$ touch. 
  (Note that each of these pairs forms a digon in $\CC$.) 
  With these touchings the four circles
  $C_1,C_2,C_3,C_4$ form the configuration of Lemma~\ref{lem:inzid3}.
  Hence there is a circle~$C^*$ containing the four touching points
  in the given cyclic order. Now the two circles 
  $C^*$ and~$C_5$ intersect in four points. This is impossible and,
  therefore, $\AAfiveD$ is not circularizable.
\end{proof}

\section{Intersecting Digon-free Arrangements of \texorpdfstring{$6$}{6} Pseudocircles}
\label{sec:non-circ6}

In this section we prove Theorem~\ref{thm:non-real-6er_intersecting_digonfree}.

\NonRealSixID*

We remark that all three arrangements do not have the intersecting
arrangement $\AAfiveA$ as a subarrangement -- otherwise
non-circularizability would follow directly. In fact, $\AAfiveA$ has
no extension to an intersecting digon-free arrangement of six pseudocircles.

On the webpage~\cite{scheucher_website} we have the data of circle realizations
of all 2131 intersecting digon-free arrangements of~$6$ pseudocircles except
for the three arrangements mentioned in
Theorem~\ref{thm:non-real-6er_intersecting_digonfree}.  In the following, we
present two non-circularizability proofs for~$\AAsixA$ and~$\AAsixB$,
respectively, and a non-circularizability proof for~$\AAsixC$.

\subsection{Non-circularizability of \texorpdfstring{$\AAsixA$}{N61}}
\label{ssec:noncirc_AAsixA}

The arrangement $\AAsixA$ (shown in
Figure~\ref{fig:n6_nonr_number1_8triangles}) is an intersecting
digon-free arrangement.  Our interest in $\AAsixA$ was originally
motivated by our study~\cite{FS17_APC_TD_arXiv_version} of
arrangements of pseudocircles with few triangles. From a computer
search we know that $\AAsixA$ occurs as a subarrangement of every
digon-free arrangement for $n=7,8,9$ with $p_3 < 2n-4$
triangles. Since $\AAsixA$ is not circularizable, neither are these
arrangements. It thus seems plausible that for every arrangement
of $n$ circles $p_3 \geq 2n-4$.  This is the Weak Gr\"unbaum
Conjecture stated in~\cite{FS17_APC_TD_arXiv_version}.

Our first proof is an immediate consequence of the following theorem,
whose proof resembles the proof of the Great-Circle
Theorem (Theorem~\ref{thm:PGC_theorem}).

\begin{theorem}\label{thm:nkt2nc}
  Let $\AA$ be a connected digon-free arrangement of pseudocircles.
  If every triple of pseudocircles which forms 
  a triangle is NonKrupp, then $\AA$ is not circularizable.
\end{theorem}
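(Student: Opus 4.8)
The plan is to assume a circle representation exists and translate the hypothesis into the language of the spanning planes, exactly as in the proof of the Great-Circle Theorem (Theorem~\ref{thm:PGC_theorem}). Suppose $\AA$ has an isomorphic arrangement $\CC$ of circles on the unit sphere $S$ with center $O$, and let $\EE(\CC)$ be the arrangement of spanning planes. By Fact~\ref{fact:Krupp_is_inside} a triple of circles forms a Krupp exactly when its three planes meet in a point inside $S$; hence a triple that forms a triangle is a NonKrupp exactly when the meeting point of its three planes lies \emph{outside} $S$. The hypothesis therefore reads: for every triple $(i,j,k)$ that bounds a triangular cell of $\CC$, the vertex $v_{ijk}=E_i\cap E_j\cap E_k$ of $\EE(\CC)$ lies outside $S$. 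On the other hand, since $\AA$ is connected and digon-free, $\CC$ contains a Krupp (the consequence of Facts~\ref{fact:Krupp_is_inside} and~\ref{fact:digon_characterization} noted after the latter), so $\EE(\CC)$ has at least one vertex strictly inside $S$. Thus the theorem reduces to the purely circle-geometric claim that \emph{every connected digon-free arrangement of circles possesses a triangular cell of Krupp type}. This claim genuinely needs the planes: its pseudocircle analogue is false, since $\AAsixA$ is an intersecting digon-free arrangement all of whose triangles are NonKrupp.

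To produce such a Krupp triangle I would run the same uniform shrinking deformation $\EE_t:=\tfrac1t\EE(\CC)$, $t\ge 1$, used for Theorem~\ref{thm:PGC_theorem}. Each plane stays secant to $S$, so $\CC(t):=\EE_t\cap S$ is a circle arrangement for all $t$, and every vertex moves radially toward the origin, $v_{ijk}(t)=v_{ijk}/t$. Consequently interior vertices stay interior while exterior vertices cross $S$ at most once; in flip terminology the only triangle flips are NonKrupp$\to$Krupp, and, because a separating vertex on a chord only moves further inside (and for the intersecting arrangements of interest no new crossing is created), digon-freeness is preserved along the whole deformation. As $t\to\infty$ every circle converges to a great circle and $\CC(t)$ limits to a great-circle arrangement, in which every triple is a Krupp and, a simple great-circle arrangement having triangles, at least one triangular cell is present, necessarily of Krupp type.

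To pin down a Krupp triangle already in $\CC=\CC(1)$ I would focus on the vertex $v^{\ast}$ of $\EE(\CC)$ \emph{closest to $O$}; it is interior (an interior vertex exists and every exterior one is farther than the sphere), its triple $(1,2,3)$ is a Krupp, and the open ball of radius $|v^{\ast}|$ about $O$ contains no vertex of $\EE(\CC)$. The three planes through $v^{\ast}$ cut $S$ into the eight triangular faces of this Krupp, and the natural candidate is the spherical triangle $\tau$ lying in the octant pointing away from $O$. I would try to show that $\tau$ is not subdivided by any further circle, so that $\tau$ is an empty triangular cell of Krupp type, contradicting the hypothesis; as a fallback the deformation follows the permanent closest triple $(1,2,3)$ until its triangle first becomes empty, which happens at a NonKrupp$\to$Krupp flip that exposes a Krupp triangular cell.

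The main obstacle is exactly this last step: showing that an interior vertex forces an \emph{unsubdivided} Krupp triangle. The clean-looking claim that a subdividing circle must create a nearer triple-vertex is not automatic, since a circle can cross $\tau$ while its crossings with $C_1,C_2,C_3$ stay on $S$ without producing any new interior triple-point. Making the descent work will require using digon-freeness together with the minimality of $|v^{\ast}|$ to control which circles can enter the innermost triangle, and this combinatorial--geometric bookkeeping is where I expect the real work to lie.
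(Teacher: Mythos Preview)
Your reformulation --- reduce to showing that a connected digon-free circle arrangement must have a Krupp triangle --- is correct, but the deformation you choose makes this hard, and you correctly identify that the argument is incomplete at exactly the crucial step. Pushing the planes \emph{toward} the origin as in Theorem~\ref{thm:PGC_theorem} turns NonKrupp triples into Krupp triples, so flips do occur; you are then left trying to trace a Krupp triangle from the limiting great-circle arrangement back to $\CC(1)$, and your ``closest vertex'' candidate does not obviously yield an unsubdivided cell, as you note yourself.

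The paper's proof avoids all of this by running the deformation in the \emph{opposite} direction: set $\EE_t:=t\cdot\EE(\CC)$ for $t\ge 1$, i.e.\ move the planes \emph{away} from the origin. Now every vertex $v_{ijk}$ moves to $t\,v_{ijk}$, so a NonKrupp vertex (outside $S$) only moves farther out and can never hit $S$; hence no triangle flip is possible. Since the arrangement is digon-free and connected, the first event along this motion cannot be a digon collapse or a circle vanishing either. But some event \emph{must} occur, because planes not through the origin eventually miss $S$ altogether. This is already the contradiction --- no back-tracing, no ``closest vertex'' analysis needed. The key idea you are missing is simply to reverse the direction of the homothety.
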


\begin{proof}
  Assume for a contradiction that there exists an isomorphic
  arrangement of circles $\CC$ on the unit sphere~$\SS$.  Let $\EE(\CC)$
  be the arrangements of planes spanned by the circles of~$\CC$.

  Imagine the planes of $\EE(\CC)$ moving away from the origin.  To be
  precise, for time $t \ge 1$ let $\EE_t := \{ t \cdot E : E \in \EE(\CC) \}$.
  Consider the arrangement induced by intersecting the moving planes $\EE_t$
  with the unit sphere~$\SS$.  Since $\CC$ has NonKrupp triangles, it is not
  a great-circle arrangement and some planes of $\EE(\CC)$ do not contain the
  origin.  All planes from $\EE(\CC)$, which do not contain the origin, will
  eventually lose the intersection with~$\SS$, hence some event has to happen.
  
  When the isomorphism class of the intersection of $\EE_t$ with~$\SS$ changes, 
  we see a triangle flip, or a digon flip,
  or some isolated circle disappears. Since initially there is 
  no digon and no isolated circle, the first event is a triangle flip.
  By assumption, triangles of~$\CC$ correspond to NonKrupp
  subarrangements, hence, the intersection point of their planes
  is outside of~$\SS$ (Fact~\ref{fact:Krupp_is_inside}). This shows that
  a triangle flip event is also impossible.
  This contradiction implies that 
  $\AA$ is non-circularizable.
\end{proof}

\begin{proof}
  [Proof (first proof of non-circularizability of $\AAsixA$)]
  The arrangement $\AAsixA$ is intersecting, digon-free, and each of
  the eight triangles of $\AAsixA$ is formed by three
  circles which are a NonKrupp configuration.
  Hence, Theorem~\ref{thm:nkt2nc} implies that $\AAsixA$
  is not circularizable.
\end{proof}

All arrangements known to us whose non-circularizability can be
shown with Theorem~\ref{thm:nkt2nc} contain $\AAsixA$ as a subarrangement 
-- which already shows non-circularizability.
Based on this data we venture the following conjecture: 

\begin{conjecture}
Every connected digon-free arrangement~$\AA$ of pseudocircles with the property, that 
every triple of pseudocircles which forms a triangle in~$\AA$ is NonKrupp, contains $\AAsixA$ as a subarrangement.
\end{conjecture}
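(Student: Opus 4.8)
The plan is to prove the conjecture by induction on the number $n$ of pseudocircles, leaning on the extremal role of $\AAsixA$ and the rigidity imposed by the all-NonKrupp hypothesis. Call an arrangement \emph{good} if it is connected, digon-free, and every triple spanning a triangular cell is a NonKrupp. The small cases are supplied by the enumeration underlying Table~\ref{table:numbers}: for $n=6$ the only good arrangement is $\AAsixA$ itself (every other digon-free arrangement on six pseudocircles either has a Krupp triangle or is circularizable, which by Theorem~\ref{thm:nkt2nc} a good arrangement cannot be), and the computer search already verifies the claim for $n=7,8,9$.

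First I would show that a good arrangement is triangle-rich and that its NonKrupp triangles cannot occur in isolation. The companion work~\cite{FS17_APC_TD_arXiv_version} forces a digon-free arrangement to carry a linear number of triangles, and the hypothesis makes all of them NonKrupp. Reading the circular sequence of crossings along a fixed pseudocircle -- the same device that shows a connected digon-free arrangement contains a Krupp -- I would argue that a NonKrupp triangle forces further crossings among the pseudocircles meeting it, so that NonKrupp triangles coalesce into overlapping clusters rather than sitting apart.

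The heart of the argument is to show that a sufficiently rich such cluster must realise the incidence pattern of $\AAsixA$. Here I would exploit that $\AAsixA$ is the unique digon-free intersecting arrangement of six pseudocircles attaining the minimum number of triangles, together with its order-$24$ symmetry: among all ways to assemble mutually adjacent NonKrupp triangles inside a digon-free arrangement, the combinatorics should be forced to agree with $\AAsixA$. Concretely, I would locate six pairwise-intersecting pseudocircles, build them up one at a time, and rule out every alternative completion using the digon-free and NonKrupp constraints, in the spirit of how the rigid triangle structure of $\AAsixA$ drives its non-circularizability proof.

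The hard part -- and the reason the statement is still only a conjecture -- is that neither hypothesis is hereditary. Deleting a pseudocircle from a good arrangement can destroy digon-freeness, can convert NonKrupp triangles into Krupp ones, and can create new triangles of uncontrolled type, so a naive deletion induction does not close. A successful proof will likely require either a carefully chosen deletion or contraction order preserving a weakened invariant, or a wholly global argument that extracts $\AAsixA$ directly from the wiring-diagram-like pattern formed by the NonKrupp triangles. Lacking such an invariant, the fallback is to extend the exhaustive verification to larger $n$ in order to sharpen the structural picture before attempting the general step.
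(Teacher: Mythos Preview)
The statement you are addressing is a \emph{conjecture} in the paper; the authors explicitly do not prove it, and offer only the computational evidence that ``all arrangements known to us whose non-circularizability can be shown with Theorem~\ref{thm:nkt2nc} contain $\AAsixA$ as a subarrangement.'' So there is no paper proof to compare against, and the relevant question is whether your proposal actually proves the conjecture. It does not, and you say so yourself in the final paragraph: the hypotheses are not hereditary under deletion, so your inductive scheme has no inductive step. What precedes that admission is a sequence of hopeful assertions (``NonKrupp triangles coalesce into overlapping clusters'', ``the combinatorics should be forced to agree with $\AAsixA$'') with no mechanism offered; these are restatements of what one would like to be true, not arguments.

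Even the base cases are not established by the reasoning you give. For $n=6$ you argue that every connected digon-free arrangement other than $\AAsixA$ ``either has a Krupp triangle or is circularizable''; but the paper's classification of circularizability for $n=6$ covers only the \emph{intersecting} digon-free case, and in the connected digon-free class roughly $2\%$ remain unresolved (Section~\ref{ssec:discussion_circularizability_6}). The correct way to handle the base case---and presumably what underlies the authors' computational evidence---is to test directly, for each of the $4509$ connected digon-free arrangements on six pseudocircles, whether every triangle is NonKrupp; this is a combinatorial check that does not go through circularizability at all. Likewise, the computer search you invoke for $n=7,8,9$ concerns arrangements with $p_3<2n-4$ triangles, which is a different (and much smaller) family than the ``good'' arrangements of your induction. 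In short, the proposal is a description of why the problem is hard rather than a proof, and the conjecture remains open.
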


Our second proof of non-circularizability of~$\AAsixA$ is based on an
incidence theorem for circles (Theorem~\ref{thm:7kreise}) which is a
consequence of an incidence theorem for points and planes in 3-space
(Theorem~\ref{thm:8points}).  Before going into details, let us describe the
geometry of the arrangement~$\AAsixA$: Consider the non-simple
arrangement~$\AA^\bullet$ obtained from~$\AAsixA$ by contracting each of the
eight triangles into a single point of triple intersection.  The arrangement
$\AA^\bullet$ is circularizable.  A realization is obtained by taking a cube
inscribed in the sphere~$S$ such that each of the eight corners touches the
sphere~$S$.  The arrangement~$\AA^\bullet$ is the intersection of~$S$
with the six planes which are spanned by pairs of diagonally opposite edges of
the cube.

\EightPoints*

\begin{proof}
  By assumption, the five points $a,b,c,d,D$ do not lie in a common plane.
  If we assume -- towards a contradiction -- 
  that the three points $a,b,c$ lie on a common line,
  then $c$ lies in the plane spanned by $\{a,b,A\}$.
  By assumption, the point $B$ also lies in that plane,
  which then contains the five points $\{a,b,A,B,c\}$ 
  -- a contradiction.
  Hence, the three points $a,b,c$ are not collinear.
  
  Now an easy case distinction shows that 
  at most one of the 4-element subsets of the five points $a,b,c,d,D$ is coplanar.
  Since the roles of $d$ and $D$ can be exchanged,
  we assume without loss of generality 
  that $a,b,c,d$ are affinely independent.
  
We now embed $\RR^3$ as the hyperplane $\sum x_i = 1$ into $\RR^4$
such that the four points $a,b,c,d$ become the elements of the standard basis,
namely, $a=\mathbf{e}_1$, $b=\mathbf{e}_2$, $c=\mathbf{e}_3$, and $d=\mathbf{e}_4$. 

Now coplanarity of 4 points can be tested by
evaluating the determinant. Coplanarity of $\{a,b,A,B\}$ 
yields $\det[abAB]=\det(\mathbf{e}_1,\mathbf{e}_2,A,B)=0$.
On the basis of the 5 collinear sets, 
we get the following determinants
and equations:
\[
{\det[abAB] = 0 \atop A_3B_4 = A_4B_3}\qquad
{\det[caCA] = 0 \atop A_4C_2 = A_2C_4}\qquad
{\det[adAD] = 0 \atop A_2D_3 = A_3D_2}\qquad
{\det[cbCB] = 0 \atop B_1C_4 = B_4C_1}\qquad
{\det[bdBD] = 0 \atop B_3D_1 = B_1D_3}\ { \atop . }
\]
Take the product of the left sides of the five equations and the product 
of the right sides. These products are the same. 
We can cancel as much as
possible from the resulting equations and obtain $C_2D_1 = C_1D_2$.
This cancellation works because all relevant factors are non-zero,
which is shown below.
We then obtain that $\det(\mathbf{e}_3,\mathbf{e}_4,C,D)=0$, i.e., the
coplanarity of $\{c,d,C,D\}$.

To show that $A_3 \neq 0$,
suppose that $A_3=0$.
This implies that $\{a,b,d,A\}$ is coplanar.
Suppose that $a,b,A$ are collinear.
Since $\{a,b,d,A\}$ and $\{a,d,A,D\}$ are coplanar, 
the five points $\{a,b,d,A,D\}$ are coplanar 
-- which is a contradiction.
Hence $a,b,A$ are not collinear.
Now, since $\{a,b,A,d\}$ and $\{a,b,A,B\}$ are coplanar,
the five points $\{a,b,A,B,d\}$ are coplanar
-- a contradiction to the assumption $A_3=0$.
The proof that the other relevant factors are non-zero is analogous.
\end{proof}

The theorem implies the following incidence theorem for circles.
\begin{theorem}
\label{thm:7kreise}
Let $C_1,C_2,C_3,C_4$ be four circles and 
let $a,b,c,d,w,x,y,z$ be eight distinct points in $\RR^2$
such that $C_1\cap C_2 = \{a,w\}$,
$C_3\cap C_4 = \{b,x\}$, $C_1\cap C_3 = \{c,y\}$, and
$C_2\cap C_4 = \{d,z\}$. If there is a circle~$C$ containing
$a,b,w,x$, then there is a circle $C'$ containing $c,d,y,z$. 
Moreover, if one of the triples of $C_1,C_2,C_3,C_4$ forms a Krupp, then
$c,d,y,z$ represents the circular order on $C'$.
\end{theorem}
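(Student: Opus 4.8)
The plan is to lift the planar picture onto a sphere, where the four circles become planes and the hypotheses of Theorem~\ref{thm:8points} can be read off directly, just as in the proof of Theorem~\ref{thm:Incid-3C3L}. First I would fix a stereographic projection $\phi$ mapping $C_1,C_2,C_3,C_4$ to circles $C_1',\dots,C_4'$ on a sphere $S$, and let $E_1,\dots,E_4$ be the planes they span; I write $a,b,c,d,w,x,y,z$ also for the images of the eight points on $S$. Since $\phi$ sends circles to circles, the hypothesis circle $C$ through $a,b,w,x$ maps to a circle on $S$, so $\{a,b,w,x\}$ is coplanar. Because $C_i'\cap C_j'=(E_i\cap E_j)\cap S$, the intersection data give $\{a,w,c,y\}\subset E_1$, $\{a,w,d,z\}\subset E_2$, $\{b,x,c,y\}\subset E_3$, and $\{b,x,d,z\}\subset E_4$. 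The resulting coplanar quadruples are exactly the five sets of Theorem~\ref{thm:8points}, with $(a,b,c,d)$ as the affinely independent tetrahedron and $(w,x,y,z)$ their partners; granting independence, the theorem gives that $\{c,d,y,z\}$ is coplanar, say in a plane $E'$. As $c,d,y,z\in S$, the section $E'\cap S$ is a circle through them, and $C':=\phi^{-1}(E'\cap S)$ is the desired circle (a line being read as a circle through $\infty$).

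The main obstacle is exactly this affine independence: four points on $S$ are affinely dependent precisely when they are cocircular on $S$, i.e.\ when $a,b,c,d$ are concyclic or collinear in the plane. I would resolve it by genericity and continuity. The four coplanarities $\{a,w,c,y\}\subset E_1$, etc., hold automatically for every arrangement with the prescribed intersection pattern, so within the family of configurations satisfying the hypotheses the only additional constraint is that $a,b,w,x$ be cocircular. On the open dense subfamily where $a,b,c,d$ are moreover affinely independent, Theorem~\ref{thm:8points} applies; since cocircularity of $c,d,y,z$ is a closed condition, it then persists across the whole family by continuity, so the degenerate configurations are covered as well.

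Finally, for the circular-order claim I would use Fact~\ref{fact:Krupp_is_inside}: if one of the triples among $C_1,C_2,C_3,C_4$ forms a Krupp, the three corresponding planes meet in a single point in the interior of $S$. Translating this interior-point condition back through $\phi^{-1}$ pins down on which arcs the points $c,d,y,z$ lie and hence forces the cyclic order $c,d,y,z$ on $C'$. I expect this final step to be a routine, if slightly tedious, bookkeeping of signs in the sphere picture once the Krupp has been recast as the interior-point statement.
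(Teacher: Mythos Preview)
Your overall plan---lift to a sphere, read the five coplanarities off the planes $E_1,\dots,E_4$ and the plane of $\phi(C)$, then invoke Theorem~\ref{thm:8points}---is precisely the paper's proof. Two sub-steps, however, are handled more directly there than in your proposal.

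For the affine independence of $a,b,c,d$ you resort to genericity and continuity. The paper avoids this: on the sphere, $a,d,z$ all lie on $C_2$ (hence on $E_2$) and $b,d,z$ all lie on $C_4$ (hence on $E_4$). If $a,b,c,d,z$ were coplanar, that plane would contain three points of $C_2$ and three points of $C_4$, hence coincide with both $E_2$ and $E_4$, forcing $C_2=C_4$ in contradiction to $C_2\cap C_4=\{d,z\}$. Thus the five points are not coplanar; since $d$ and $z$ are interchangeable labels for the two points of $C_2\cap C_4$, one may relabel so that $a,b,c,d$ are affinely independent. No limiting argument is needed, and you sidestep having to argue that your parameter space is connected and that the degeneracy locus is nowhere dense.

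For the circular order the paper bypasses Fact~\ref{fact:Krupp_is_inside} entirely and argues in the plane. If, say, $C_1,C_2,C_3$ form a Krupp, then by definition $C_2$ separates the two points $c,y$ of $C_1\cap C_3$. Since $d,z\in C'\cap C_2$ and $C'\neq C_2$ (otherwise $c,y\in C_1\cap C_2=\{a,w\}$, impossible as the eight points are distinct), we get $C'\cap C_2=\{d,z\}$; hence $c$ and $y$ lie on opposite arcs of $C'$ determined by $d$ and $z$, which is exactly the alternation $c,d,y,z$. Your sphere-based sign bookkeeping would presumably arrive at the same conclusion, but this two-line combinatorial observation is what makes the ``moreover'' clause immediate.
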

\begin{proof}
  Consider the arrangement of circles on the sphere. The idea is to
  apply Theorem~\ref{thm:8points}. The coplanarity of the~5 sets
  follows because the respective 4 points belong to
  $C,C_1,C_2,C_3,C_4$ in this order. 
  The 5 points $a,b,c,d,z$ do not lie in a common plane
  because otherwise we would have $C_2=C_4$
  (a contradiction  to $C_2\cap C_4 = \{d,z\}$). 
  Analogous arguments show that also
  none of the other 5-element subset of $\{a,b,c,d,x,y,z,w\}$ is coplanar. 
  This shows that Theorem~\ref{thm:8points} can be applied. 
  Regarding the circular order on~$C'$, suppose that
  $C_1,C_2,C_3$ is a Krupp. This implies that $C_2$ separates $c$
  and~$y$. Since $C' \cap C_2 = \{d,z\}$ the points of $\{c,y\}$ and
  $\{d,z\}$ alternate on~$C'$, this implies the claim.
\end{proof}

It is worth mentioning, that 
the second part of the theorem can be strengthened: If one of the
triples of $C_1,C_2,C_3,C_4$ forms a Krupp, then the arrangement
together with $C$ and $C'$ is isomorphic to the simplicial arrangement
$\AA^\bullet$ obtained from $\AAsixA$ by contracting the eight
triangles into triple intersections. 
A \emph{simplicial arrangement} is a non-simple arrangement where all cells are triangles.
The arrangement $\AA^\bullet$ can
be extended to larger simplicial arrangements by adding any subset of
the three circles $C_1^*$, $C_2^*$, $C_3^*$ which are defined as
follows: $C_1^*$ is the circle through the four points
$(C_1\cap C_4)\cup(C_2\cap C_3)$; $C_2^*$ is the circle through the
four points $(C_1\cap C_4)\cup(C\cap C')$; $C_3^*$ is the circle
through the four points $(C_2\cap C_3)\cup(C\cap C')$. In each case
the cocircularity of the four points defining $C_i^*$ is a consequence
of the theorem.  \smallskip

The following lemma is similar to the Digon Collapse Lemma 
(Lemma~\ref{lem:shrinking_lemma}).
By changing interior to exterior and outside to inside 
(by applying a M\"obius transformation),
we obtain a dual version also for this lemma.

\begin{lemma}[Triangle Collapse Lemma]
\label{lem:sl2}
Let $\CC$ be an arrangement of circles in the plane 
and let $C$ be one of the circles of~$\CC$,
which intersects at least three other circles from~$\CC$ and
does not fully contain any other circle from~$\CC$ in its interior. 
If $C$ has no incident digon in its
interior, then we can continuously transform $C$ such that the
combinatorics of the arrangement remains except that two triangles
collapse to points of triple intersection. Moreover, it is possible to
prevent a fixed triangle $T$ incident to $C$ from being the first one
to collapse.
\end{lemma}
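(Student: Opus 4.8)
The plan is to mirror the proof of the Digon Collapse Lemma (Lemma~\ref{lem:shrinking_lemma}), replacing every collapse of a digon by a collapse of a triangle. First I would set up the continuous transformation: shrink the radius of $C$ toward its interior (after a M\"obius transformation we may assume $C$ contains no other circle, exactly as in the dual-lemma remarks). Since $C$ already meets at least three other circles and does not fully contain any of them, shrinking cannot create a new digon; and because $C$ has no incident digon in its interior, the first combinatorial event encountered must be an interior triangle shrinking to a single point of triple intersection — precisely when $C$ passes through an existing crossing of two of the circles it meets. This is the analogue of ``an interior digon collapses'' in the original lemma.

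Next I would handle the second triangle. After the first collapse we have one point $p$ of triple intersection on $C$; following the Digon Collapse Lemma template, I would simultaneously shrink the radius and translate the center of $C$ toward $p$ so that $p$ remains a point of triple intersection throughout, continuing until a second incident interior triangle collapses to a triple point as well. Maintaining $p$ as a triple intersection while a one-parameter family of positions-and-radii exists is an elementary continuity/degrees-of-freedom argument: fixing one incidence point on a circle still leaves enough freedom in the remaining parameter to force a second triangle to degenerate. At that moment the combinatorics of the whole arrangement is unchanged except that two triangles have become triple-intersection points, which is the claim.

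For the final sentence — that one may prevent a fixed incident triangle $T$ from being the first to collapse — I would argue by choosing the \emph{direction} of the deformation appropriately. The set of triangles incident to $C$ in its interior that could collapse first corresponds to the crossings that $C$ sweeps across; by biasing the shrink-and-translate motion away from the vertex of $T$ (equivalently, moving the center toward a point of the boundary $\partial C$ on the far side from $T$), one ensures that some other incident triangle degenerates strictly earlier. Since $C$ meets at least three circles, there are at least two candidate triangles available to collapse first, so such a choice of direction always exists.

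The main obstacle I anticipate is the second step: rigorously showing that the constrained motion (shrink the radius while keeping the already-created triple point $p$ fixed on $C$) can always be continued until a \emph{second} triangle collapses, without inadvertently creating a digon, re-separating $p$, or losing an intersection with one of the other circles. This is exactly the place where the original Digon Collapse Lemma does its delicate work with the ``move the center toward $p$'' trick, and the triangle version requires checking that the same one-parameter family of circles through $p$ stays a valid (simple, intersecting) arrangement with all the old crossings preserved until the next triple event. Since the statement explicitly invites appeal to that earlier lemma (``similar to the Digon Collapse Lemma''), I expect the write-up to be short, transporting the geometric picture of Figure~\ref{fig:shrinking_lemma_illustration} from digons-to-touchings into triangles-to-triple-points, and relying on the same continuity argument rather than re-deriving it in full.
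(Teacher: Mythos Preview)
Your first two steps track the paper's proof closely: shrink the radius of $C$ until the first event, which must be a triangle collapse since there are no interior digons incident to $C$; then continue by shrinking $C$ toward the newly created triple point $p$ (decreasing the radius while translating the center toward $p$ so that $p$ stays on $C$) until a second triangle collapses. That is exactly the paper's argument.

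The genuine gap is in your handling of the last sentence. To keep a fixed triangle $T$ from being the first to collapse, you propose biasing the motion \emph{away} from $T$, i.e., moving the center toward a point of $\partial C$ on the far side from $T$. This is the wrong direction. If the center drifts toward the far side while the radius shrinks, the arc of $C$ bordering $T$ is precisely the part that sweeps inward the fastest, so $T$ would be among the \emph{first} triangles to collapse, not the last. The paper does the opposite: it fixes a point $q \in T \cap C$ and starts the process by shrinking $C$ toward $q$, so that $q$ stays on $C$ and the arc bounding $T$ is essentially pinned while the rest of $C$ moves inward and forces some other triangle to collapse first.

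A minor remark: the M\"obius step you insert at the beginning is unnecessary, since the hypothesis already states that $C$ contains no other circle in its interior.
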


\begin{proof}
  Shrink the radius of $C$ until the
  first flip occurs, this must be a triangle flip, i.e., a triangle is
  reduced to a point of triple intersection.  If $C$ has a point $p$
  of triple intersection, shrink $C$ towards $p$, i.e., shrink the
  radius and simultaneously move the center towards~$p$ such that~$p$
  stays incident to $C$. With the next flip a second triangle collapses.

  For the extension let $q \in T\cap C$ be a point. Start the
  shrinking process by shrinking $C$ towards $q$. This prevents $T$
  from collapsing.
\end{proof}

   \calc_figscale{30}
    \begin{figure}[htb]
    \centerline{\input{\path/non-real-6er.pstex_t}}
    \caption{\label{fig:non-real-6er}}
    \end{figure}
    VC
{The arrangement $\AAsixA$ with a labeling of its eight triangles.}

\begin{proof}
[Proof (second proof of non-circularizability of $\AAsixA$)]
Suppose for a contradiction that~$\AAsixA$ has a realization~$\CC$.
Each circle of $\CC$ has exactly two incident triangles in the inside
and exactly two on the outside. Apply Lemma~\ref{lem:sl2}
to $C_5$ and to $C_3$ (we refer to the circles with the colors and
labels used in Figure~\ref{fig:non-real-6er}). This collapses
the triangles labeled $a,b,x,w$, i.e., all the triangles 
incident to~$C_5$. Now the green circle~$C_1$, 
the magenta circle~$C_2$, the black circle~$C_3$, the blue circle~$C_4$,
and the red circle~$C_5$ 
are precisely the configuration 
of Theorem~\ref{thm:7kreise} with $C_5$ in the role of~$C$. 
The theorem implies that there is a 
circle $C'$ containing the green-black crossing at $c$, the 
blue-magenta crossing at $d$, the green-black crossing at $y$, and the 
blue-magenta crossing at $z$ in this order. Each consecutive
pair of these crossings is on different sides of the yellow circle~$C_6$,
hence, there are at least four crossings between $C'$ and~$C_6$. 
This is impossible for circles, whence, there
is no circle arrangement $\CC$ realizing $\AAsixA$.
\end{proof}

\subsection{Non-circularizability of \texorpdfstring{$\AAsixB$}{N62}}

The arrangement $\AAsixB$ is shown in Figure~\ref{fig:n6_nonr_number2} and
Figure~\ref{fig:non-real-6er-Cb}(a).  
We give two proofs for the non-circularizability of $\AAsixB$.
The first one is an immediate consequence of the following theorem,
which -- in the same flavor as Theorem~\ref{thm:nkt2nc} -- 
can be obtained similarly as the proof of the Great-Circle Theorem (Theorem~\ref{thm:PGC_theorem}).

\begin{theorem}\label{thm:nkt2nc_dual}
  Let $\AA$ be an intersecting arrangement of pseudocircles 
  which is not an arrangement of great-pseudocircles.
  If every triple of pseudocircles which forms 
  a triangle is Krupp, 
  then $\AA$ is not circularizable.
\end{theorem}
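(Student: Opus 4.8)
The plan is to mirror the proof of Theorem~\ref{thm:nkt2nc} exactly, but with the roles of ``interior'' and ``exterior'' of the sphere interchanged, which corresponds to moving the planes of $\EE(\CC)$ \emph{toward} the origin instead of away from it. First I would assume for contradiction that there is an isomorphic arrangement $\CC$ of circles on the unit sphere $\SS$, and form the plane arrangement $\EE(\CC)$ spanned by the circles. Since $\AA$ is intersecting, every pair of circles crosses, so by Fact~\ref{fact:Krupp_is_inside} each triple of planes meets in a single point; the triple is a Krupp precisely when that point lies inside $\SS$ and a NonKrupp precisely when it lies outside. Because $\AA$ is \emph{not} an arrangement of great-pseudocircles, at least one triple is a NonKrupp, so its triple-intersection point is strictly outside $\SS$, and in particular not all planes pass through the origin.

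Next I would run the controlled deformation $\EE_t := \{\,\nicefrac{1}{t}\cdot E : E \in \EE(\CC)\,\}$ for $t \ge 1$, exactly as in the Great-Circle Theorem but now shrinking toward the origin. As $t$ grows, every plane that does not pass through the origin eventually moves past the origin and, once it has crossed to the other side, its triple intersections with other planes migrate; the key point is that a NonKrupp triple, whose intersection point starts outside $\SS$, cannot stay outside forever as the planes contract toward the origin, so some combinatorial event must occur. As in the earlier proof, the isomorphism class of $\CC(t) = \EE_t \cap \SS$ can only change through a triangle flip, a digon flip, or a circle disappearing. Since $\AA$ is intersecting, no circle ever becomes disjoint from another and none can vanish, and no digon can be created (an intersecting arrangement has every pair crossing throughout), so the first event must be a triangle flip.

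The crux is then the contradiction at this first triangle flip. A triangle flip happens exactly when three planes become concurrent \emph{on} $\SS$, i.e.\ their common point crosses the sphere. By hypothesis every triple of pseudocircles forming a triangle is a Krupp, so by Fact~\ref{fact:Krupp_is_inside} the corresponding triple-intersection point lies strictly in the interior of $\SS$. Since the deformation is continuous and the interior is open, a point that is strictly interior cannot be the one that lands on $\SS$ to cause the flip; hence no triangle flip among the Krupp triples is possible, and these are the only triples that form triangles. This rules out the first event and yields the contradiction, so $\AA$ is not circularizable.

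The main obstacle I anticipate is justifying rigorously that an event \emph{must} occur under the inward deformation, since the outward version in Theorem~\ref{thm:nkt2nc} relied on planes losing contact with $\SS$, whereas here planes contracting toward the origin might plausibly keep intersecting $\SS$ indefinitely. The resolution is that the NonKrupp triple guaranteed by the non-great-pseudocircle hypothesis has its triple point outside $\SS$ at $t=1$; as $t \to \infty$ the three contracting planes all approach the origin (interior to $\SS$), so their common point is forced inside $\SS$ in the limit, meaning the NonKrupp triple must convert to a Krupp along the way, which can only happen through a triangle flip on $\SS$. That flip is the event whose impossibility we have just argued, so the argument closes.
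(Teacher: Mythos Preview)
Your proposal is correct and follows exactly the approach the paper outlines: move the planes toward the origin via $\EE_t=\{\nicefrac{1}{t}\cdot E\}$, use the existence of a NonKrupp triple (from ``not great-pseudocircles'') to force an event, rule out digon events because the arrangement is intersecting, and then derive a contradiction since every triangle has a Krupp triple whose intersection point stays inside~$\SS$. One small fix: the reason a Krupp triple-point never reaches $\SS$ is not that ``the interior is open'' (a continuous motion can certainly carry an interior point to the boundary) but the specific form of the deformation---under $p\mapsto p/t$ a point with $|p|<1$ has $|p/t|<1$ for all $t\ge 1$; also, planes approaching the origin never ``move past'' it, they converge to planes through the origin.
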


We outline the proof: Suppose a realization of $\AA$ exists on the sphere.
Continuously move the planes spanned by the circles towards the origin.  The
induced arrangement will eventually become isomorphic to an arrangement of
great-circles.  Now consider the first event that occurs.  As the planes move
towards the origin, there is no digon collapse.  Since $\AA$ is intersecting, no
digon is created, and, since all triangles are Krupp, the corresponding
intersection points of their planes is already inside~$S$.  Therefore, no
event can occur -- a contradiction.

\begin{proof}
[Proof (first proof of non-circularizability of $\AAsixB$)]
The arrangement $\AAsixB$ is intersecting but not an arrangement of
great-pseudocircles 
($\AAsixB$ contains a NonKrupp)
and each triangle in $\AAsixB$ is Krupp.
Hence, Theorem~\ref{thm:nkt2nc_dual} implies that $\AAsixB$ is not circularizable.
\end{proof}

Besides $\AAsixB$, there is exactly one other arrangement of $6$ pseudocircles
(with digons, see Figure~\ref{fig:n6_nonr_all_triangles_krupp_number2}) where
Theorem~\ref{thm:nkt2nc_dual} implies non-circularizability.  For $n=7$ there
are eight arrangements where the theorem applies; but each of them has
one of the two $n=6$ arrangements as a subarrangement.

\begin{figure}[htb]
  \centering
  
    \includegraphics[page=1,width=0.3\textwidth]{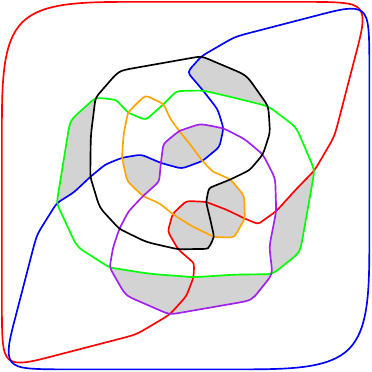}
  
  \caption{
  Another intersecting arrangement of $6$ pseudocircles (with digons) 
  where Theorem~\ref{thm:nkt2nc_dual} applies.
  The arrangement is minimal non-circularizable and has symmetry 2 
  (the colors red-orange, blue-green, and purple-black can be exchanged).
  Triangles are colored gray.
  }
    \label{fig:n6_nonr_all_triangles_krupp_number2}  
\end{figure}

\bigskip
Our second proof of non-circularizability of $\AAsixB$ is based
on Theorem~\ref{thm:Incid-3C3L}.

\begin{proof}
[Proof (second proof of non-circularizability of $\AAsixB$)]
Suppose that $\AAsixB$ has a
representation as a circle arrangement~$\CC$. 
We refer to circles
and intersection points via the labels of
the corresponding objects in Figure~\ref{fig:non-real-6er-Cb}(b).

   \calc_figscale{30}
    \begin{figure}[htb]
    \centerline{\input{\path/non-real-6er-Cb.pstex_t}}
    \caption{\label{fig:non-real-6er-Cb}}
    \end{figure}
    VC
{(a) The non-circularizable arrangement $\AAsixB$.\\
(b) An illustration for the second proof of non-circularizability of~$\AAsixB$.}

Let $\ell_i$ be line spanned by $p_i$ and $p_i'$ for $i=1,2,3$.  The directed
line $\ell_1$ intersects $C_{4}$ and $C_{5}$ in the points $p_1,p_1'$ and has
its second intersection with the yellow circle $C_2$ between these points.
After $p_1'$, the line has to cross $C_3$ (magenta), $C_1$ (black), and $C_6$
(red) in this order, i.e., the line behaves as shown in the figure.  Similarly
$\ell_2$ and $\ell_3$ behave as shown.  Let $T$ be the triangle
spanned by the intersection points of the three lines $\ell_1,\ell_2,\ell_3$.
Observe that the gray interior triangle $T'$ of $\CC$ is fully
contained in~$T$.  By applying Theorem~\ref{thm:Incid-3C3L} to the
circles $C_4,C_5,C_6$, we obtain that $\ell_1,\ell_2,\ell_3$ meet in a common
point, and therefore, $T$ and $T'$ are degenerated.  This
contradicts the assumption that $\CC$ is a realization of $\AAsixB$, whence
this arrangement is non-circularizable.
\end{proof}

\subsection{Non-circularizability of \texorpdfstring{$\AAsixC$}{N63}}

The arrangement  $\AAsixC$ is shown in
Figure~\ref{fig:n6_nonr_number3}
and Figure~\ref{fig:non-real-6er-B}(b).
To prove its non-circularizability, we again use an incidence lemma.
The following lemma is mentioned by Richter-Gebert as a relative of
Pappos's Theorem, cf.~\cite[page 26]{R-G11}.
Figure~\ref{fig:non-real-6er-B}(a) gives an illustration.

\begin{lemma}\label{lem:inzid4}
  Let $\ell_1,\ell_2,\ell_3$ be lines, $C_1',C_2',C_3'$ be circles, and
  $p_{1},p_{2},p_{3},\allowbreak q_{1},q_{2},q_{3}$ be points, such that for
  $\{i,j,k\}=\{1,2,3\}$ point~$p_i$ is incident to line~$\ell_i$,
  circle~$C_j'$, and circle~$C_k'$, while point $q_i$ is incident to
  circle~$C_i'$, line~$\ell_j$, and line~$\ell_k$.  Then $C_1'$, $C_2'$, and
  $C_3'$ have a common point of intersection.
\end{lemma}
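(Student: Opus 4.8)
The plan is to recognize this configuration as Miquel's pivot theorem for a triangle, and to prove the concurrence by a directed‑angle chase. First I would extract the combinatorial skeleton from the incidences. Since $q_2,q_3\in\ell_1$, $q_1,q_3\in\ell_2$, and $q_1,q_2\in\ell_3$, the three lines meet pairwise in $\ell_2\cap\ell_3=q_1$, $\ell_1\cap\ell_3=q_2$, and $\ell_1\cap\ell_2=q_3$; hence $\ell_1,\ell_2,\ell_3$ bound a triangle with vertices $q_1,q_2,q_3$. The incidences $p_i\in\ell_i$ then place $p_1$ on the side $\overline{q_2q_3}$, $p_2$ on $\overline{q_1q_3}$, and $p_3$ on $\overline{q_1q_2}$, i.e.\ exactly one marked point on each side. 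Reading off the circle incidences, $C_1'$ passes through $q_1,p_2,p_3$ (the vertex $q_1$ together with the marked points on the two sides incident to $q_1$), and likewise $C_2'$ through $q_2,p_1,p_3$ and $C_3'$ through $q_3,p_1,p_2$. These are precisely the three Miquel circles of the triangle $q_1q_2q_3$ with side points $p_1,p_2,p_3$, so the claim is exactly that these three circles share a common point (the Miquel point).

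To establish the concurrence I would argue with directed angles of lines modulo $\pi$, writing $\angle(g,h)$ for the directed angle from a line $g$ to a line $h$; this formalism handles all position cases uniformly and covers gracefully the degenerate situations where two of the $\ell_i$ are parallel (a vertex $q_i$ at infinity) or a circle degenerates. Let $M$ be the second intersection point of $C_2'$ and $C_3'$, which is well defined since these circles already meet in $p_1$; the goal is to show $M\in C_1'$. Concyclicity of $\{q_2,p_3,p_1,M\}$ on $C_2'$ gives $\angle(Mp_3,Mp_1)=\angle(q_2p_3,q_2p_1)$, and concyclicity of $\{q_3,p_1,p_2,M\}$ on $C_3'$ gives $\angle(Mp_1,Mp_2)=\angle(q_3p_1,q_3p_2)$. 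Adding these and using that $q_2p_3$ and $q_3p_2$ both lie along triangle sides while $q_2p_1$ and $q_3p_1$ lie along the side $\overline{q_2q_3}$, the right‑hand side collapses, through the telescoping identity $\angle(\overline{q_1q_2},\overline{q_2q_3})+\angle(\overline{q_2q_3},\overline{q_1q_3})=\angle(\overline{q_1q_2},\overline{q_1q_3})$, to $\angle(q_1p_3,q_1p_2)$. Hence $\angle(Mp_3,Mp_2)=\angle(q_1p_3,q_1p_2)$, which is exactly the condition that $q_1,p_2,p_3,M$ be concyclic, i.e.\ $M\in C_1'$. Thus all three circles pass through $M$.

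I expect the main obstacle to be bookkeeping rather than depth: one must translate the symmetric incidence hypotheses into the asymmetric triangle‑with‑marked‑points picture correctly, and then carry out the angle cancellation while tracking which lines coincide with which triangle sides. A secondary point requiring care is the genuine nondegeneracy of the data (distinct circles, $M\neq p_1$, and the parallel‑side cases), all of which the directed‑angle framework absorbs but which should be stated explicitly. As an alternative consistent with the paper's recurring technique, one could instead lift the lines and circles by stereographic projection to circles on a sphere $S$ (lines becoming circles through the projection point) and argue about the six spanning planes in the spirit of Theorem~\ref{thm:Incid-3C3L} and Fact~\ref{fact:Krupp_is_inside}; however, the plane incidences do not collapse as cleanly here, so I would prefer the Miquel angle chase as the primary proof.
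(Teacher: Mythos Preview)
Your identification of the configuration as Miquel's pivot theorem for the triangle $q_1q_2q_3$ with side points $p_1,p_2,p_3$ is correct, and the directed-angle chase goes through exactly as you describe. In fact the simplification is even cleaner than you indicate: since $q_2,p_3\in\ell_3$ and $q_2,p_1\in\ell_1$, the first angle is $\angle(\ell_3,\ell_1)$; since $q_3,p_1\in\ell_1$ and $q_3,p_2\in\ell_2$, the second is $\angle(\ell_1,\ell_2)$; their sum is $\angle(\ell_3,\ell_2)=\angle(q_1p_3,q_1p_2)$, which is precisely the concyclicity condition for $M,q_1,p_2,p_3$ on~$C_1'$.

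You should be aware, however, that the paper does not give its own proof of this lemma at all: it is stated with a reference to Richter-Gebert~\cite[page~26]{R-G11} as a relative of Pappos's Theorem, and then used directly in the non-circularizability argument for~$\AAsixC$. So there is no ``paper's proof'' to compare against; your argument supplies what the paper delegates to the citation. The recognition that this is nothing more than Miquel's pivot theorem is exactly the right observation, and the directed-angle proof is the standard and appropriate one.
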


   \calc_figscale{42}
    \begin{figure}[htb]
    \centerline{\input{\path/non-real-6er-B.pstex_t}}
    \caption{\label{fig:non-real-6er-B}}
    \end{figure}
    VC
{(a) An illustration for Lemma~\ref{lem:inzid4}.\\
(b) The arrangement $\AAsixB$ with 3 dashed pseudolines
illustrating the proof of non-circularizability.
}

\begin{proof}[Proof (non-circularizability of $\AAsixC$)]
  Suppose that $\AAsixC$ has a representation $\CC$ as a circle arrangement in
  the plane.  We refer to circles and intersection points via the label of the
  corresponding object in Figure~\ref{fig:non-real-6er-B}(b).  As in the
  figure, we assume without loss of generality that the triangular cell
  spanned by $C_4$, $C_5$, and $C_6$ is the outer cell of the arrangement.

Consider the region $R := R_{24} \cup R_{35}$ where 
$R_{ij}$ denotes the intersection of the respective interiors of $C_i$ and~$C_j$.
The two straight-line segments $p_1p'_1$  and $p_3p'_3$ 
are fully contained in~$R_{35}$ and~$R_{24}$, respectively, and
have alternating end points along the boundary of~$R$, 
hence they cross inside the region~$R_{24} \cap R_{35}$. 

From rotational symmetry we obtain that the three straight-line segments 
$p_1p'_1$, $p_2p'_2$, and $p_3p'_3$ intersect pairwise.

For $i =1,2,3$, 
let $\ell_i$ denote the line spanned by $p_i$ and~$p_i'$,
let $q_i$ denote the intersection-point of $\ell_{i+1}$ and $\ell_{i+2}$,
and let $C_i'$ denote the circle spanned by $q_i,p_{i+1},p_{i+2}$ (indices modulo~3).
Note that $\ell_i$ contains $p_i,q_{i+1},q_{i+2}$.
These are precisely the conditions for the incidences of points, lines, and circles in
Lemma~\ref{lem:inzid4}. Hence, the three circles $C_1'$, $C_2'$, and~$C_3'$
intersect in a common point (cf.\ Figure~\ref{fig:non-real-6er-B}(a)). 

Let $T$ be the triangle with corners $p_1,p_2,p_3$. 
Since $p_2$ and $p_3$ are on~$C_1$, and 
$q_1$ lies inside of~$C_1$,
we find that the intersection of the interior of $C_1'$ with~$T$ 
is a subset of the intersection of the interior of $C_1$ with~$T$. 
The respective containments also hold for $C_2'$ and~$C_2$
and for $C_3'$ and~$C_3$.
Moreover, since  $C_1'$, $C_2'$, and $C_3'$ intersect in a common point, the union of the 
interiors of $C_1'$, $C_2'$, and $C_3'$ contains~$T$. 
Hence, the union
of interiors of the $C_1$, $C_2$, and $C_3$ also contains~$T$. 
This shows that in~$\CC$ there is no face corresponding to the gray triangle; 
see Figure~\ref{fig:non-real-6er-B}(b). 
This contradicts the assumption that $\CC$ is a
realization of~$\AAsixC$, whence the arrangement is non-circularizable.
\end{proof}

\section{Additional Arrangements with \texorpdfstring{$n=6$}{n=6}}
\label{sec:noncirc_misc}

In the previous two sections we have exhibited all non-circularizable
arrangements with $n\leq 5$ and all non-circularizable intersecting
digon-free arrangements with $n=6$. With automatized procedures we
managed to find circle representations of 98\% of the connected
digon-free arrangements and of 90\% of the intersecting arrangements
of 6 pseudocircles. Unfortunately, the numbers of remaining candidates
for non-circularizability are too large to complete the classification
by hand (cf.\ Section~\ref{ssec:discussion_circularizability_6}). 
In this section we show non-circularizability of a few
of the remaining examples which we consider to be interesting.
As a criterion for being interesting we used the order of the symmetry
group of the arrangement. 
The symmetry groups have been determined as the automorphism groups 
of the primal-dual graphs using SageMath~\cite{sagemath_website,sagemath_graphtheory_manual}. 

In Subsection~\ref{ssec:3inters+digon} we show non-circularizability
of the three intersecting arrangements of $n=6$ pseudocircles (with digons) 
depicted in Figure~\ref{fig:misc_int6_sym6}. The symmetry group of these three
arrangements is of order~6. All the remaining examples of intersecting
arrangements with $n=6$, where we do not know about
circularizability, have a symmetry group of order at most~3.

\begin{figure}[htb]
  \centering
  
  \hbox{}\hfill
    \begin{subfigure}[t]{.3\textwidth}
    \centering
    \includegraphics[page=1,width=0.95\textwidth]{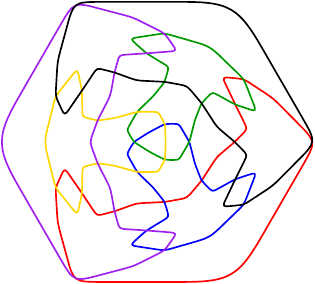}
    \caption{}
    \label{fig:misc_int6_sym6_1_ER}
  \end{subfigure}
  \hfill
    \begin{subfigure}[t]{.3\textwidth}
    \centering
    \includegraphics[page=1,width=0.95\textwidth]{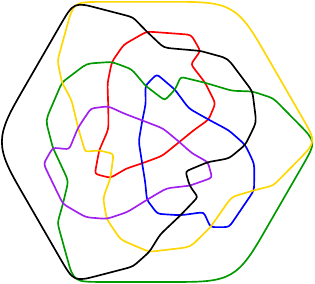}
    \caption{}
    \label{fig:misc_int6_sym6_2}
  \end{subfigure}
  \hfill
    \begin{subfigure}[t]{.3\textwidth}
    \centering
    \includegraphics[page=1,width=0.95\textwidth]{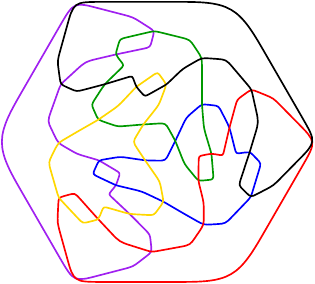}
    \caption{}
    \label{fig:misc_int6_sym6_3}
  \end{subfigure}
  \hfill\hbox{}
  
  \caption{Three intersecting arrangements of $n=6$ pseudocircles with symmetry 6.\\
  (a)~$\AAsixER$ (b)~$\AAsixMiscIntB$ (c)~$\AAsixMiscIntC$.}
  \label{fig:misc_int6_sym6}
\end{figure}

In Subsection~\ref{ssec:3conn-digon} we show non-circularizability of
the three connected digon-free arrangements of $6$ pseudocircles
depicted in Figure~\ref{fig:misc_con6_high_sym}. The symmetry group of
these three arrangements is of order~24 or~8. In
Subsection~\ref{ssec:extra2} we show non-circularizability of
two additional connected digon-free arrangements of $6$ pseudocircles.
The examples are shown in Figure~\ref{fig:extra2}, the symmetry group
of these two arrangements is of order~4.
All the remaining examples of
connected digon-free arrangements with $n=6$, where we do not know about
circularizability, have a symmetry group of order 2 or~1.

\subsection{Non-circularizability of three intersecting arrangements
  with \texorpdfstring{$n=6$}{n=6}}\label{ssec:3inters+digon}

In this subsection we prove non-circularizability of the three
arrangements $\AAsixER$, $\AAsixMiscIntB$, and $\AAsixMiscIntC$
shown in Figure~\ref{fig:misc_int6_sym6}. The non-circularizability
of $\AAsixER$ was already shown by Edelsbrunner and Ramos~\cite{EdelsbrunnerRamos1997},
the name of the arrangement reflects this fact. The other names
are built, such that the subscript of the~$\cal N$ is the number
of pseudocircles, the first part of the superscript indicates that the
arrangement is intersecting with a symmetry group of order~6, and the number after
the colon is the counter. Accordingly, the arrangement~$\AAsixER$ can 
also be denoted as ${{\cal N}_{6}^{i6:1}}$.

\subsubsection{Non-circularizability of the Edelsbrunner--Ramos example \texorpdfstring{$\AAsixER$}{N6ER}}
\label{sssec:noncirc_edelsbrunner_ramos}

The arrangement $\AAsixER$ is shown in Figure~\ref{fig:misc_int6_sym6_1_ER}.
As in the original proof~\cite{EdelsbrunnerRamos1997} the argument is based on
considerations involving angles.

   \calc_figscale{22}
    \begin{figure}[htb]
    \centerline{\input{\path/edelsb-ramos.pstex_t}}
    \caption{\label{fig:edelsb-ramos}}
    \end{figure}
    VC
{(a) The Edelsbrunner--Ramos example $\AAsixER$.\\
 (b) Comparing the angle at $a$ and the corresponding angle of $T$.\\
 (c) Labels for the vertices of the inner subarrangement $\AA_I$.
 }	

Figure~\ref{fig:edelsb-ramos}(a) shows a representation of the
arrangement $\AAsixER$ consisting of a subarrangement $\AA_O$ formed
by the three \textit{outer pseudocircles} $C_1,C_2,C_3$ and a second
subarrangement $\AA_I$ formed by the three \textit{inner circles}
$C_4,C_5,C_6$.

Suppose that there is a circle representation $\CC$ of $\AAsixER$.
Let $\CC_O$ and $\CC_I$ be the subarrangements of~$\CC$ which
represent $\AA_O$ and $\AA_I$, respectively.  
For each outer circle $C_i$ from $\CC_O$ consider 
a straight-line segment $s_i$ that connects two
points from the two digons which are formed by $C_i$ with inner circles.  
The segment $s_i$ is fully contained in $C_i$.  
Let $\ell_i$ be the line supporting $s_i$ and
let $T$ be the triangle bounded by $\ell_1$, $\ell_2$, and $\ell_3$.

We claim that $T$ contains the inner triangle of $\CC_O$.
Indeed, if three circles form a NonKrupp where the outer face is a
triangle and with each circle we have
a line which intersects the two digons incident to the circle, then
the three lines form a triangle containing the inner triangular cell
of the NonKrupp arrangement.

The inner triangle of $\CC_O$ contains the four inner triangles
of~$\CC_I$.  Let $a,b,c$ be the three crossing points on the outer
face of the subarrangement $\CC_I$.  Comparing the inner angle at $a$,
a crossing of $C_4$ and $C_5$, and the corresponding angle of $T$, 
i.e., the angle formed by $\ell_1$ and $\ell_2$, we
claim that the inner angle at $a$ is smaller. 
To see this let us assume that the common tangent $h$ of $C_4$ and $C_5$
on the side of $a$ is horizontal. Line $\ell_1$ has both crossings 
with $C_5$ above $a$ and also intersects with $C_6$. This implies that the 
slope of $\ell_1$ is positive but smaller than the slope of the tangent
at $C_5$ in $a$. Similarly, the slope of $\ell_2$ is negative but
larger than the slope of the tangent
at $C_4$ in $a$. This is the claim, see Figure~\ref{fig:edelsb-ramos}(b). 

The respective statements hold for the inner angles
at $b$ and $c$, and the corresponding angles of $T$.  Since the
sum of angles of $T$ is $\pi$, we conclude that the sum of the inner
angles at $a$, $b$, and $c$ is less than~$\pi$.

The sum of inner angles at $a$, $b$, $c$ equals the sum of inner
angles at $a'$, $b'$, $c'$, see Figure~\ref{fig:edelsb-ramos}(c). This
sum, however, clearly exceeds $\pi$.  The contradiction shows that
$\AAsixER$ is not circularizable.

\subsubsection{Non-circularizability of \texorpdfstring{$\AAsixMiscIntB$}{N6i6:2}}

The arrangement $\AAsixMiscIntB$ is shown in
Figure~\ref{fig:misc_int6_sym6_2} and again in
Figure~\ref{fig:Ni62}(a).  This figure also shows some shaded
triangles, three of them are gray and three are pink.  

   \calc_figscale{48}
    \begin{figure}[htb]
    \centerline{\input{\path/Ni62.pstex_t}}
    \caption{\label{fig:Ni62}}
    \end{figure}
    VC
{(a) The arrangement $\AAsixMiscIntB$ with some triangle faces emphasized.\\
 (b) After collapsing the shaded triangles. \\
 (c) After moving the point $p_1$ to infinity.
}

Suppose that
$\AAsixMiscIntB$ has a circle representation~$\CC$.  Each of $C_1$,
$C_2$, and~$C_3$ has two triangles and no digon on its interior
boundary. One of the two triangles is gray the other
pink. Lemma~\ref{lem:sl2} allows to shrink the three circles $C_1$,
$C_2$, $C_3$ of $\CC$ into their respective interiors such that in
each case the shrinking makes a pink triangle collapse.
Let $p_i$ be the point of triple intersection of $C_i$, for $i=1,2,3$.  
Further shrinking $C_i$ towards $p_i$ makes another triangle collapse.  
At this second collapse two triangles disappear, one of them a gray one,
and $C_i$ gets incident to~$p_{i-1}$ (with the understanding that $1-1 =
3$). Having done this for each of the three circles yields a circle
representation for the (non-simple) arrangement shown in 
Figure~\ref{fig:Ni62}(b).

To see that this arrangement has no circle representation apply a M\"obius
transformation that maps the point $p_1$ to the point $\infty$ of the extended
complex plane.  This transforms the four circles $C_1$, $C_2$, $C_4$, $C_5$,
which are incident to $p_1$, into lines.  The two remaining circles $C_3$ and
$C_6$ intersect in $p_2$ and~$p_3$.  The lines of $C_2$ and $C_5$ both have
their second intersections with $C_3$ and $C_6$ separated by $p_2$, hence,
they both avoid the lens formed by $C_3$ and $C_6$. The line of $C_1$ has its
intersections with $C_2$ and $C_5$ in the two components of the gray
double-wedge of $C_2$ and $C_5$, see Figure~\ref{fig:Ni62}(c).  Therefore, the
slope of~$C_1$ belongs to the slopes of the double-wedge. However, the line
of~$C_1$ has its second intersections with~$C_3$ and~$C_6$ on the same side of
$p_3$ and, therefore, it has a slope between the tangents of $C_3$ and $C_6$
at~$p_3$.  These slopes do not belong to the slopes of the gray double-wedge.
This contradiction shows that a circle representation $\CC$ of
$\AAsixMiscIntB$ does not exist.

\subsubsection{Non-circularizability of \texorpdfstring{$\AAsixMiscIntC$}{N6i6:3}}

The arrangement $\AAsixMiscIntC$ is shown in
Figure~\ref{fig:misc_int6_sym6_3}
and again in Figure~\ref{fig:Ni63}(a).
This figure also shows some shaded triangles, three of them are gray
and three are pink.  

   \calc_figscale{42}
    \begin{figure}[htb]
    \centerline{\input{\path/Ni63.pstex_t}}
    \caption{\label{fig:Ni63}}
    \end{figure}
    VC
{(a) The arrangement $\AAsixMiscIntC$ with some triangle faces emphasized.\\
 (b) After collapsing the pink shaded triangles. \\
 (c) A detail of the arrangement after the second phase of collapses.
}	

Suppose that $\AAsixMiscIntB$ has a circle
representation~$\CC$.  Each of $C_1$, $C_2$ and~$C_3$ has two
triangles and no digon on its exterior boundary. One of the two
triangles is gray the other is pink (we disregard the exterior
triangle because it will not appear in a first flip when expanding a
circle).  The dual form of Lemma~\ref{lem:sl2} allows to expand the
three circles $C_1$, $C_2$, $C_3$ of $\CC$ into their respective
exteriors such that in each case the expansion makes a pink triangle
collapse. 

Figure~\ref{fig:Ni63}(b) shows a pseudocircle
representation of the arrangement after this first phase of
collapses. In the second phase, we modify the circles $C_i$, for
$i=4,5,6$. We explain what happens to $C_5$, the two other circles are
treated similarly with respect to the rotational symmetry.
Consider the circle $C_5'$, which contains $p_1$ and 
shares $p_3$ and the red point with~$C_5$. 
This circle is obtained by
shrinking $C_5$ on one side of the line containing $p_3$ and the red point, and 
by expanding $C_5$ on the other side of the line. It is easily verified
that the collapse of the gray triangle at $p_1$ is the first event in this
process. 

Figure~\ref{fig:Ni63}(c) shows the inner triangle formed by $C_1$,
$C_2$, and~$C_3$ together with parts of $C_4'$, $C_5'$, and~$C_6'$. At
each of the three points, the highlighted red angle is smaller than the
highlighted gray angle. However, the red angle at $p_i$ is formed by
the same two circles as the gray angle at $p_{i+1}$, whence, the two
angles are equal. This yields a contradictory cyclic chain of
inequalities.  The contradiction shows that a circle representation
$\CC$ of $\AAsixMiscIntC$ does not exist.

\subsection{Non-circularizability of three connected digon-free arrangements
  with \texorpdfstring{$n=6$}{n=6}}\label{ssec:3conn-digon}

In this subsection we prove non-circularizability of the three
arrangements $\AAsixMiscConVierundZwanzig$, $\AAsixMiscConAchtA$, and
$\AAsixMiscConAchtB$ shown in Figure~\ref{fig:misc_con6_high_sym}. 

\begin{figure}[htb]
  \centering
  
  \hbox{}\hfill
    \begin{subfigure}[t]{.26\textwidth}
    \centering
    \includegraphics[page=1,width=0.95\textwidth]{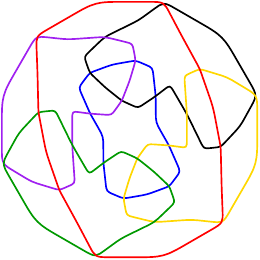}
    \caption{}
    \label{fig:misc_con6_sym24_1}
  \end{subfigure}
  \hfill
    \begin{subfigure}[t]{.26\textwidth}
    \centering
    \includegraphics[page=1,width=0.95\textwidth]{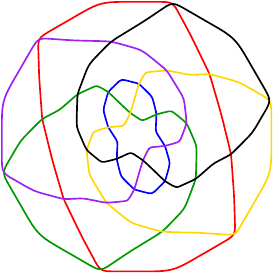}
    \caption{}
    \label{fig:misc_con6_sym8_2}
  \end{subfigure}
  \hfill
    \begin{subfigure}[t]{.26\textwidth}
    \centering
    \includegraphics[page=1,width=0.95\textwidth]{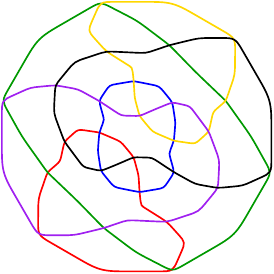}
    \caption{}
    \label{fig:misc_con6_sym8_1}
  \end{subfigure}
  \hfill\hbox{}
  
  \caption{A digon-free connected arrangement of $n=6$ pseudocircles
    with symmetry group of order~24, and two with a symmetry group of order~8:
  (a)~$\AAsixMiscConVierundZwanzig$ (b)~$\AAsixMiscConAchtA$ 
  (c)~$\AAsixMiscConAchtB$.}
  \label{fig:misc_con6_high_sym}
\end{figure}

\subsubsection{Non-circularizability of \texorpdfstring{$\AAsixMiscConVierundZwanzig$}{N6c24}
  and \texorpdfstring{$\AAsixMiscConAchtA$}{N6c8:1}}

The proof of non-circularizability of the two arrangements 
is based on Miquel's Theorem. For proofs of 
the theorem we refer to~\cite{R-G11}.

\begin{theorem}[Miquel's Theorem]
\label{thm:Miquels}
Let $C_1,C_2,C_3,C_4$ be four circles
and let $C_1\cap C_2 = \{a,w\}$, $C_2\cap C_3 = \{b,x\}$,
$C_3\cap C_4 = \{c,y\}$,  and $C_4\cap C_1 = \{d,z\}$.
If there is a circle $C$ containing
$a,b,c,d$, then there is a circle $C'$ containing $w,x,y,z$. 
\end{theorem}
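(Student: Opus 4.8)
The plan is to reduce the four-circle statement to the triangle (``pivot'') form of Miquel's Theorem by the same inversive technique used for Lemmas~\ref{lem:inzid1}, \ref{lem:inzid2}, and~\ref{lem:inzid3}: apply a M\"obius transformation $\Gamma$ that sends the intersection point $a$ of $C_1$ and $C_2$ to the point $\infty$ of the extended complex plane. Writing $u'=\Gamma(u)$ for images, the three circles through $a$ -- namely $C_1$, $C_2$, and the hypothesised circle $C$ -- become three lines $L_1=\Gamma(C_1)$, $L_2=\Gamma(C_2)$, and $L=\Gamma(C)$, while $C_3$ and $C_4$ (which avoid $a$) become circles $\kappa_3=\Gamma(C_3)$ and $\kappa_4=\Gamma(C_4)$. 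Tracking the incidences gives $L_1\ni w',d',z'$, $L_2\ni w',b',x'$, $L\ni b',c',d'$, $\kappa_3\ni b',x',c',y'$, and $\kappa_4\ni c',y',d',z'$. Since M\"obius transformations preserve the class of circles-and-lines, it suffices to show that $w',x',y',z'$ are concyclic (or collinear); then $C'=\Gamma^{-1}$ of that circle passes through $w,x,y,z$.

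First I would read off the triangle hidden in this picture. The three lines $L_1,L_2,L$ bound a triangle with vertices $w'=L_1\cap L_2$, $b'=L_2\cap L$, and $d'=L_1\cap L$, and $c'$ lies on the side $b'd'$, $x'$ on the side $w'b'$, and $z'$ on the side $w'd'$. With this labelling $\kappa_3$ is exactly the circle through the vertex $b'$ and the two adjacent side points $c',x'$, while $\kappa_4$ is the circle through the vertex $d'$ and the two adjacent side points $c',z'$. The triangle pivot theorem asserts that these two circles, together with the circle through $w'$ and its two adjacent side points $x',z'$, share a common point. The circles $\kappa_3$ and $\kappa_4$ already meet in $\{c',y'\}$, and $c'$ lies on the triangle and so cannot be the pivot point in the non-degenerate case; hence the pivot point is $y'$, and the circle $(w'x'z')$ passes through $y'$. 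This is precisely the sought concyclicity of $w',x',y',z'$.

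It then remains to prove the pivot theorem, for which I would use directed angles between lines taken modulo $\pi$, written $\angle(\cdot,\cdot)$; this keeps the argument uniform over all configurations of the side points and automatically covers the degenerate case in which the conclusion ``circle'' is a line. Let $m$ be the second common point of $\kappa_3$ and $\kappa_4$. Cocyclicity of $b',x',c',m$ on $\kappa_3$ gives $\angle(mx',mc')=\angle(b'x',b'c')$, and since $b'x'$ lies along $L_2$ and $b'c'$ along $L$, the right-hand side equals $\angle(L_2,L)$. Likewise $\kappa_4$ gives $\angle(mc',mz')=\angle(L,L_1)$. Adding, $\angle(mx',mz')=\angle(L_2,L_1)$; and because $w'$ lies on both $L_1$ and $L_2$ we also have $\angle(w'x',w'z')=\angle(L_2,L_1)$. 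Hence $w',x',z',m$ are concyclic, and substituting $m=y'$ completes the reduction.

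The main obstacle I anticipate is bookkeeping rather than ideas: one must verify that the hypotheses genuinely place $c',x',z'$ on the three sides, so that the triangle version applies, and must rule out the degenerate identification $m=c'$, which is exactly where the non-triviality of Miquel's configuration enters. The cases where $\Gamma$ turns a conclusion circle into a line (parallel images, i.e.\ a point at $\infty$) are absorbed by working throughout in the inversive plane, just as in the proofs of Lemmas~\ref{lem:inzid2} and~\ref{lem:inzid3}. A more algebraic alternative would mimic the proof of Theorem~\ref{thm:8points}, lifting the eight points to a sphere and seeking a determinant-cancellation identity for the coplanarity pattern $\{a,b,c,d\},\{a,w,b,x\},\{b,x,c,y\},\{c,y,d,z\},\{a,w,d,z\}\Rightarrow\{w,x,y,z\}$. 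I expect this to be workable but less clean here, since in this pattern $a,b,c,d$ are forced to be coplanar, so the convenient embedding of Theorem~\ref{thm:8points} with $a,b,c,d$ affinely independent is unavailable and a fresh algebraic setup would be required.
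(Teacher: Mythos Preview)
The paper does not actually prove Theorem~\ref{thm:Miquels}; it only states the result and refers the reader to Richter-Gebert~\cite{R-G11} for a proof (``For proofs of the theorem we refer to~\cite{R-G11}''). So there is no in-paper argument to compare against.

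Your argument is a correct, self-contained proof and fits the style of the surrounding incidence lemmas well. Sending $a$ to $\infty$ turns the three circles through $a$ into lines and reduces the statement exactly to the triangle pivot form of Miquel, and your directed-angle chase for the pivot theorem is the standard clean way to make it configuration-independent. Two small remarks. First, the sentence ``$c'$ lies on the triangle and so cannot be the pivot point'' is not really needed and is not quite an argument; your subsequent paragraph already makes this precise by taking $m$ to be the \emph{second} common point of $\kappa_3$ and $\kappa_4$, which is $y'$ by definition, and the angle computation then yields the concyclicity of $w',x',z',y'$ directly. I would drop the informal clause and let the directed-angle proof stand on its own. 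Second, your cautionary remark about degeneracies is apt: the theorem as stated in the paper does not require the eight points to be distinct, so the boundary case $c'=y'$ (tangent $\kappa_3,\kappa_4$) deserves a one-line mention; the directed-angle identity still goes through with $m=c'=y'$, interpreting the inscribed-angle relation via the tangent line at the point of tangency.

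Your closing comment about a determinant-cancellation alternative in the spirit of Theorem~\ref{thm:8points} is accurate: the coplanarity of $a,b,c,d$ on the sphere blocks the specific embedding used there, so that route would require a different normalisation. The inversive proof you give is the more natural one here.
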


   \calc_figscale{46}
    \begin{figure}[htb]
    \centerline{\input{\path/N24.pstex_t}}
    \caption{\label{fig:N24}}
    \end{figure}
    VC
{(a) The arrangement $\AAsixMiscConVierundZwanzig$.
 (b) $\AAsixMiscConVierundZwanzig$ after collapsing four triangles.
 (c) The arrangement $\AAsixMiscConAchtA$.}	

The arrangement $\AAsixMiscConVierundZwanzig$ is shown in
Figure~\ref{fig:misc_con6_sym24_1} and again in
Figure~\ref{fig:N24}(a).  Suppose that $\AAsixMiscConVierundZwanzig$
has a circle representation~$\CC$.  Circle $C_5$ has six triangles in
its exterior. These triangles are all incident to either the crossing
of $C_1$ and $C_4$ or the crossing of $C_2$ and $C_3$. Hence,
by Lemma~\ref{lem:sl2} we can grow~$C_5$ into its exterior
to get two triple intersection points $p_1=C_1\cap C_4 \cap C_5$ and
$p_2=C_2\cap C_3 \cap C_5$. 
The situation in the interior of $C_6$ is
identical to the situation in the exterior of $C_5$. Hence, by
shrinking~$C_6$ we get two additional triple intersection points
$p_3=C_1\cap C_2 \cap C_6$ and $p_4=C_3\cap C_4 \cap C_6$.  This
yields the (non-simple) arrangement shown in Figure~\ref{fig:N24}(b).
Now grow the circles $C_1,C_2,C_3,C_4$ to the outside while keeping
each of them incident to its two points~$p_i$, this makes them shrink
into their inside at the `short arc'.  Upon this growth process, the
gray crossings $p_{12}$, $p_{23}$, $p_{34}$, and $p_{14}$ move away
from the blue circle $C_6$.  Hence the process can be continued
until the upper and the lower triangles collapse, i.e., until $p_{12}$
and $p_{34}$ both are incident to $C_5$. Note that we do not care
about $p_{23}$ and $p_{14}$, they may have passed to the other side of~$C_5$.  
The collapse of the upper and the lower triangle yields two
additional triple intersection points $C_1\cap C_2 \cap C_5$ and
$C_3\cap C_4 \cap C_5$. The circles $C_1,C_2,C_3,C_4$ together with
$C_5$ in the role of $C$ form an instance of Miquel's Theorem
(Theorem~\ref{thm:Miquels}).  
Hence, there is a circle $C'$ traversing the four points $p_3,p_{14},p_4,p_{23}$.
The points $p_3,p_4$ partition $C'$ into two arcs, 
each containing one of the gray points $p_{14}, p_{23}$.
Now $C'$ shares the points $p_3$ and $p_4$ with~$C_6$
while the gray points $p_{14}, p_{23}$ are outside of~$C_6$.
This is impossible, whence, there is
no circle representation of~$\AAsixMiscConVierundZwanzig$.
\smallskip

The arrangement $\AAsixMiscConAchtA$ is shown in
Figure~\ref{fig:misc_con6_sym8_2} and again in
Figure~\ref{fig:N24}(c). The proof of non-circularizability of
this arrangement is exactly as the previous proof,
just replace  $\AAsixMiscConVierundZwanzig$ by $\AAsixMiscConAchtA$
and think of an analog of Figure~\ref{fig:N24}(b).
\qed

\bigskip

Originally, we were aiming at deriving the non-circularizability of
$\AAsixMiscConVierundZwanzig$ as a corollary to the following theorem.
Turning things around we now prove it as a corollary to the
non-circularizability of $\AAsixMiscConVierundZwanzig$. We say that a polytope
$P$ has the combinatorics of the cube if $P$ and the cube have isomorphic face
lattices. The graph of the cube is bipartite, hence, we can speak of the
white and black vertices of a polytope with the combinatorics of the cube.

\begin{theorem}
Let $S$ be a sphere.
There is no polytope $P$ with the combinatorics of the cube 
such that the black vertices of $P$ are inside $S$ and the white vertices
of $P$ are outside~$S$.
\end{theorem}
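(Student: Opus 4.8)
The plan is to deduce this theorem as a corollary of the already-established non-circularizability of the arrangement $\AAsixMiscConVierundZwanzig$. The key observation is the dictionary set up in Section~\ref{sec:prelim}: a circle arrangement on a sphere $S$ corresponds to an arrangement of planes in $\RR^3$, and by Fact~\ref{fact:Krupp_is_inside} the combinatorial type of the arrangement is governed by whether triple-intersection points of the planes lie inside or outside $S$. So I would aim to convert a hypothetical cube-combinatorics polytope $P$ with the stated inside/outside coloring of its vertices into a circle representation of $\AAsixMiscConVierundZwanzig$, contradicting the previous subsection.

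First I would make precise the geometric picture of $\AAsixMiscConVierundZwanzig$ established in its second proof: the non-simple arrangement $\AA^\bullet$ obtained by collapsing triangles is realized by taking a cube inscribed in $S$ and intersecting $S$ with the six planes spanned by pairs of diagonally opposite edges of the cube (this is exactly the construction described for $\AAsixA$ and reused here, and it is the source of the ``combinatorics of the cube'' terminology in the theorem). A polytope $P$ with the combinatorics of the cube has six quadrilateral facets; each facet spans a plane $E_i$, giving six planes, i.e.\ a candidate arrangement $\EE = \{E_1,\dots,E_6\}$. Each $E_i$ cuts $S$ in a circle precisely when it meets $S$, and the crucial point is to read off the Krupp/NonKrupp type of each triple of circles from whether the triple-intersection of the corresponding facet-planes lands inside or outside $S$. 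The bipartition into black (inside) and white (outside) vertices is engineered so that, vertex by vertex, Fact~\ref{fact:Krupp_is_inside} forces exactly the pattern of triple intersections that defines $\AAsixMiscConVierundZwanzig$ (or its collapsed form).

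Concretely, I would argue as follows. Assume such a $P$ exists. The eight vertices of $P$ are the eight points where triples of facet-planes meet. By hypothesis the four black vertices are inside $S$ and the four white vertices are outside $S$; by Fact~\ref{fact:Krupp_is_inside} the corresponding triples of circles form Krupp configurations exactly at the black vertices and NonKrupp (or the degenerate two-point) configurations at the white vertices. I would check that these six planes all do meet $S$ (so that all six circles actually exist), and that the resulting circle arrangement is isomorphic to $\AAsixMiscConVierundZwanzig$. This last isomorphism check is the substantive step: one matches the inside/outside pattern of the eight cube-vertices against the eight triple intersections appearing in the realization of $\AA^\bullet$ described above, using the symmetry of the cube to pin down the correspondence. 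Having produced a circle representation of $\AAsixMiscConVierundZwanzig$, we contradict its established non-circularizability, and the theorem follows.

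The main obstacle I anticipate is the bookkeeping of the correspondence between the two colorings and the Krupp/NonKrupp pattern. One must verify that a cube-combinatorial polytope with the prescribed inside/outside split of vertices yields precisely the arrangement $\AAsixMiscConVierundZwanzig$ and not some other arrangement, and in particular that every relevant facet-plane actually intersects $S$ so that no circle degenerates or disappears. A secondary subtlety is handling the non-simple (collapsed) nature of $\AA^\bullet$: the eight triangles of $\AAsixMiscConVierundZwanzig$ correspond to the eight cube-vertices under contraction, so strictly speaking the polytope realizes the collapsed arrangement, and I would need the fact (already used in the $\AAsixMiscConVierundZwanzig$ proof) that a circle realization of the collapsed arrangement can be perturbed back to a simple realization of $\AAsixMiscConVierundZwanzig$, or else phrase the contradiction directly at the level of $\AA^\bullet$.
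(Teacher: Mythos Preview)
Your approach is essentially the same as the paper's: assume such a polytope $P$ exists, take the six facet planes, intersect with $S$ to obtain a circle arrangement, invoke Fact~\ref{fact:Krupp_is_inside} to read off the Krupp/NonKrupp pattern at the eight vertices, and conclude that this arrangement is (isomorphic to) $\AAsixMiscConVierundZwanzig$, contradicting its non-circularizability.

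A couple of remarks. First, your reference to a ``second proof'' of $\AAsixMiscConVierundZwanzig$ and the inscribed-cube construction is a mix-up: that construction and the arrangement $\AA^\bullet$ pertain to $\AAsixA$, not to $\AAsixMiscConVierundZwanzig$; however this misattribution does not affect your argument. Second, your caution about the collapsed arrangement is unnecessary here: since the black and white vertices are \emph{strictly} inside, respectively outside, $S$, none of the eight triple points of facet planes lies on $S$, so the resulting circle arrangement is simple. Your observation that one must check that every facet plane actually meets $S$ is well taken (and easy: each facet of $P$ is a $4$-cycle with alternating colours, hence contains both an interior and an exterior point of $S$); the paper omits this. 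Finally, you are right that the isomorphism check is the substantive step---the paper, like your proposal, essentially asserts it, pointing to the Krupp/NonKrupp pattern of the eight distinguished triangles in Figure~\ref{fig:N24}(b) as the justification.
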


\begin{proof}
Suppose that there is such a polytope $P$. Let $\EE$ be the arrangement of
planes spanned by the six faces of $P$ and let $\CC$ be the arrangement
of circles obtained from the intersection of $\EE$ and~$S$. 
This arrangement is isomorphic to $\AAsixMiscConVierundZwanzig$.
To see this, consider the eight triangles of 
$\AAsixMiscConVierundZwanzig$ corresponding to the
black and gray points of Figure~\ref{fig:N24}(b).
Triangles corresponding to black points are Krupp and
triangles corresponding to gray points are NonKrupp.
By Fact~\ref{fact:Krupp_is_inside} this translates to
corners of $P$ being outside and inside~$S$, respectively.
\end{proof}

\subsubsection{Non-circularizability of \texorpdfstring{$\AAsixMiscConAchtB$}{N6c8:2}}

The arrangement $\AAsixMiscConAchtB$ is shown in
Figure~\ref{fig:misc_con6_sym8_1} and again in
Figure~\ref{fig:Nc82}(a). 

   \calc_figscale{41}
    \begin{figure}[htb]
    \centerline{\input{\path/Nc82.pstex_t}}
    \caption{\label{fig:Nc82}}
    \end{figure}
    VC
{(a) The arrangement $\AAsixMiscConAchtB$ with four gray triangles.\\
 (b) $\AAsixMiscConAchtB$ after collapsing the gray triangles.\\
 (c) After moving the point $p$ to infinity.}	

Suppose that $\AAsixMiscConAchtB$ has a
circle representation~$\CC$. Circle $C_1$ only has two triangles in
the exterior, in the figure they are gray.  Circle $C_6$ only has two
triangles in the interior.  With Lemma~\ref{lem:sl2} these four
triangles can be collapsed into points of triple intersection. 
This results in a (non-simple) arrangement as shown in Figure~\ref{fig:Nc82}(b).
Note that we do not care, whether the circles $C_1$ and $C_6$ cross or not.

Apply a M\"obius transformation that maps the point
$p = C_4 \cap C_5 \cap C_6$ to the point $\infty$ of the extended
complex plane. This maps $C_4$, $C_5$, and $C_6$ to lines, while
$C_1$, $C_2$, and $C_3$ are mapped to circles. From the order of
crossings, it follows that the situation is essentially as shown in
Figure~\ref{fig:Nc82}(c).  This Figure also shows the line $\ell_1$
through the two intersection points of $C_1$ and $C_2$, the line
$\ell_2$ through the two intersection points of $C_2$ and $C_3$, and
the line~$\ell_3$ through the two intersection points of $C_3$ and
$C_1$.  The intersection points of $\ell_3$ with the other two are
separated by the two defining points of $\ell_3$. According to
Theorem~\ref{thm:Incid-3C3L}, however, the three lines should share a
common point.  The contradiction shows that there is no circle
representation of $\AAsixMiscConAchtB$.

\subsubsection{Non-circularizability of  \texorpdfstring{$\AAsixMiscConVierA$}{N6c4:1} and 
  \texorpdfstring{$\AAsixMiscConVierB$}{N6c4:2}}
\label {ssec:extra2}

The arrangements $\AAsixMiscConVierA$ and $\AAsixMiscConVierB$ are
shown in Figure~\ref{fig:extra2} and again in
Figure~\ref{fig:extra2-arrs}. These are the only
two connected digon-free arrangements of 6 pseudocircles with a
symmetry group of order 4 which are not circularizable.
The two proofs of non-circularizability are very similar.

\begin{figure}[htb]
  \centering
  
  \hbox{}
  \hfill
    \begin{subfigure}[t]{.26\textwidth}
    \centering
    \includegraphics[page=1,width=0.95\textwidth]{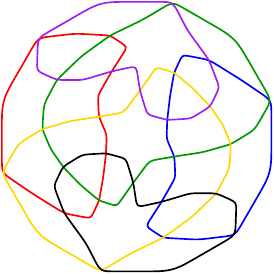}
    \caption{}
    \label{fig:misc_con6_sym4_1}
  \end{subfigure}
  \hfill
    \begin{subfigure}[t]{.26\textwidth}
    \centering
    \includegraphics[page=1,width=0.95\textwidth]{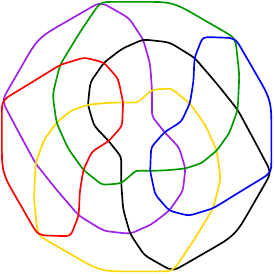}
    \caption{}
    \label{fig:misc_con6_sym4_2}
  \end{subfigure}
  \hfill
  \hbox{}
  
  \caption{Two non-circularizable arrangements of $n=6$ pseudocircles
      with a symmetry group of order 4.\\ The arrangements are denoted as
      (a)~$\AAsixMiscConVierA$ and (b)~$\AAsixMiscConVierB$.
      }
  \label{fig:extra2}
\end{figure}

   \calc_figscale{43}
    \begin{figure}[htb]
    \centerline{\input{\path/extra2-arrs.pstex_t}}
    \caption{\label{fig:extra2-arrs}}
    \end{figure}
    VC
{(a) The arrangement $\AAsixMiscConVierA$ with four gray triangles.\\
 (b) The arrangement $\AAsixMiscConVierB$ with four gray triangles.}	

Suppose that $\AAsixMiscConVierA$ has a circle representation~$\CC$.
In Figure~\ref{fig:extra2-arrs}(a) the pseudocircles $C_5$ and $C_6$
each have two gray triangles on the outside and these are the only
triangles on the outside of the two pseudocircles.  With
Lemma~\ref{lem:sl2} the respective circles in $\CC$ can be grown until
the gray triangles collapse into points of triple intersection or
until a digon flip occurs. In the case of a digon flip $C_5$ and $C_6$
become intersecting, and no further triangles incident to $C_5$ and $C_6$
are created. Therefore, it is possible to continue the growing
process until the four triangles collapse.  In the following, we do
not care whether the digon flip occured during the growth process,
i.e., whether $C_5$ and $C_6$ intersect.  The four points of triple
intersection are the points $p_1$, $p_2$, $p_3$, and $p_4$ in
Figure~\ref{fig:extra2-bew}(a).

   \calc_figscale{43}
    \begin{figure}[htb]
    \centerline{\input{\path/extra2-bew.pstex_t}}
    \caption{\label{fig:extra2-bew}}
    \end{figure}
    VC
{(a) Illustration of the non-circularizability proof for $\AAsixMiscConVierA$.\\
 (b) Illustration of the non-circularizability proof for $\AAsixMiscConVierB$.}	

There is a circle $C'_3$ which shares the points $p_1$ and $p_3$ with
$C_3$ and also contains the point $q_1$, which is defined as the
intersection point of $C_1$ and $C_5$ inside $C_3$. Similarly there is
a circle $C'_4$ which shares the points $p_2$ and $p_4$ with $C_4$ and
also contains the point $q_2$, which is defined as is the intersection
point of $C_2$ and~$C_5$ inside~$C_4$.  
By construction $C_5$ is incident to 
one intersection point of each of the pairs $C_1,C'_3$, and~$C'_3,C_2$,
and~$C_2,C'_4$ and $C'_4,C_1$.  Miquel's Theorem
(Theorem~\ref{thm:Miquels}) implies that there is a circle $C^*$
through the second intersection points of these pairs.  It can be
argued that on $C^*$ the two points $p_3$ and $p_4$ separate~$q_3$ and
$q_4$. The circle $C_6$ shares the points $p_3$ and $p_4$ with $C^*$
and contains the crossing of the pairs~$C_1,C_3$ and $C_2,C_4$ which
are `close to' $q_3$ and $q_4$ in its interior (the two points are
emphasized by the arrows in the figure). Hence $p_3$ and $p_4$ are
separated by $q_3$ and $q_4$. 
This is impossible, whence,
$\AAsixMiscConVierA$ is not circularizable.
\medskip

Suppose that  $\AAsixMiscConVierB$ has a circle representation~$\CC$.
In Figure~\ref{fig:extra2-arrs}(b) the pseudocircles $C_3$ and $C_4$ each have
two gray triangles on the outside and these are the only triangles
on the outside of the two pseudocircles. By Lemma~\ref{lem:sl2} the
respective circles in $\CC$ can be grown to make the gray triangles collapse
into points of triple intersection (we do not care whether during the
growth process $C_3$ and $C_4$ become intersecting). 
The four points of triple intersection are the points $p_1$, $p_2$,
$p_3$, and $p_4$ in Figure~\ref{fig:extra2-bew}(b).

There is a circle $C'_3$ which shares the points $p_1$ and $p_3$ with
$C_3$ and also contains the point $q_1$, this point $q_1$ is the
intersection point of $C_2$ and $C_5$ inside $C_3$. Similarly there is
a circle $C'_4$ which shares the points $p_2$ and $p_4$ with $C_4$ and
also contains the point $q_2$, this point $q_2$ is the intersection
point of~$C_1$ and~$C_5$ inside~$C_4$.  By construction $C_5$ contains
one intersection point of the pairs $C_1,C'_3$, and $C'_3,C_2$, and
$C_2,C'_4$ and $C'_4,C_1$.  Miquel's Theorem
(Theorem~\ref{thm:Miquels}) implies that there is a circle $C^*$
through the second intersection points of these pairs.  It can be
argued that on $C^*$ the points $q_3$ and $q_4$ belong to the same of
the two arcs defined by the pair $p_3,p_4$.  The circle $C_6$ shares the points
$p_3$ and $p_4$ with $C^*$ and has the crossing of the pair~$C_2,C_3$
which is outside $C^*$ in its inside and the crossing of the pair
$C_1,C_4$ which is inside $C^*$ in its outside (the two points are
emphasized by the arrows in the figure).  This is impossible, whence,
$\AAsixMiscConVierB$ is not circularizable.

\section{Enumeration and Asymptotics}
\label{sec:computer-part}

Recall from Section~\ref{sec:introduction}, that the \emph{primal graph} of
a connected arrangement of $n \ge 2$ pseudocircles is the plane graph whose
vertices are the crossings of the arrangement and edges are pseudoarcs, i.e.,
pieces of pseudocircles between consecutive crossings.  The primal graph of an
arrangement is a simple graph if and only if the arrangement is digon-free.

The \emph{dual graph} of a connected arrangement of $n \ge 2$ pseudocircles is
the dual of the primal graph, i.e., vertices correspond to the faces of the
arrangement and edges correspond to pairs of faces which are adjacent along a
pseudoarc.  The dual graph of an arrangement is simple if and only if the
arrangement remains connected after the removal of any pseudocircle.  In
particular, dual graphs of intersecting arrangements are simple.

The \emph{primal-dual graph} of a connected arrangement of $n \ge 2$
pseudocircles has three types of vertices; the vertices correspond to
crossings, pseudoarcs, and faces of the arrangement.  Edges represent
incident pairs of a pseudoarc and a crossing, and of a pseudoarc and a face\footnote{
In retrospect,
we believe that adding the face-crossing edges to the primal-dual graph 
would have simplified the approach with respect to theory as well as to implementation.
}.

Figure~\ref{fig:example_p-d-pd-graph} shows the primal graph, the dual graph, and the primal-dual graph of the NonKrupp arrangement.

\begin{figure}[htb]
  \centering
  
  \begin{subfigure}[t]{.23\textwidth}
    \centering
    \includegraphics[page=1]{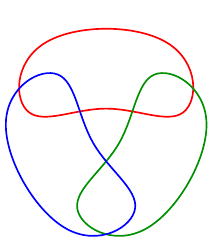}
    \caption{}
    \label{fig:example_arrangement}
  \end{subfigure}
  \hfill
  \begin{subfigure}[t]{.23\textwidth}
    \centering
    \includegraphics[page=3]{example}
    \caption{}
    \label{fig:example_p-graph}  
  \end{subfigure}
    \hfill
  \begin{subfigure}[t]{.23\textwidth}
    \centering
    \includegraphics[page=4]{example}
    \caption{}
    \label{fig:example_d-graph}  
  \end{subfigure}  
  \hfill
  \begin{subfigure}[t]{.23\textwidth}
    \centering
    \includegraphics[page=2]{example}
    \caption{}
    \label{fig:example_pd-graph}  
  \end{subfigure}
  
  \caption{(a) arrangement (b) 
    primal graph (c) dual graph (d) primal-dual graph.}
  \label{fig:example_p-d-pd-graph}
\end{figure}

We now show that crossing free embeddings of dual graphs and primal-dual
graphs on the sphere are unique.  This will allow us to disregard the
embedding and work with abstract graphs.

If $G$ is a subdivision of a 3-connected graph~$H$, then we call $G$
\emph{almost 3-connected}. If $H$ is planar and, hence, has a unique
embedding on the sphere, then the same is true for~$G$.

In the dual and the primal-dual graph of an arrangement of pseudocircles the
only possible 2-separators are the two neighboring vertices of a vertex
corresponding to a digon.  It follows that these graphs are
almost 3-connected. We conclude the following.

\begin{proposition}
\label{prop:dual_graph_unique_embedding}
  The dual graph of a simple intersecting arrangement of $n\ge 2$
  pseudocircles has a unique embedding on the sphere.
\end{proposition}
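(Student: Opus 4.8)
The plan is to deduce the statement from Whitney's classical uniqueness theorem, which guarantees that a $3$-connected planar graph has an essentially unique embedding on the sphere (unique up to a reflection, which yields the same cell decomposition up to homeomorphism). Since the dual graph $G$ of a simple intersecting arrangement is typically not $3$-connected but only almost $3$-connected, I would first reduce to this theorem exactly along the lines set up above: let $H$ be the graph obtained from $G$ by suppressing all degree-$2$ vertices, so that $G$ is a subdivision of $H$. As already noted, if $H$ is planar and $3$-connected, then its unique sphere embedding determines the embedding of $G$, because the suppressed degree-$2$ vertices can only be reinserted on the corresponding edges of $H$ without any combinatorial freedom. Hence the entire proof reduces to showing that $H$ is $3$-connected.

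First I would collect the structural facts that are available here. Because the arrangement is intersecting, the dual graph $G$ is simple, and since every pseudocircle is a cycle through all of its crossings, every arc lies on a cycle, so the primal graph is bridgeless and $G$ is $2$-connected. The degree-$2$ vertices of $G$ are precisely the duals of the digons: a digon is bounded by exactly two arcs and is therefore adjacent to exactly two other cells, while every $k$-cell with $k \ge 3$ gives a dual vertex of degree at least~$3$. Thus $H$ is obtained by suppressing exactly the digon-duals, and these are the only vertices removed.

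The heart of the argument, and the step I expect to be the main obstacle, is the $2$-separator analysis recorded in the discussion preceding the proposition: $G$ has no cut vertex, and its only $2$-separators are the pairs formed by the two neighbours of a degree-$2$ vertex. I would prove this by a topological tracing argument, showing that a separating pair $\{F,F'\}$ of cells must enclose part of the arrangement on either side, and that following the pseudocircle arcs along the shared boundary of $F$ and $F'$ forces, in an intersecting arrangement, that $F$ and $F'$ are exactly the two cells flanking a single digon. Granting this, suppressing the degree-$2$ vertices destroys precisely the forced $2$-separators, so $H$ is $2$-connected and has no $2$-separator at all; hence $H$ is $3$-connected whenever it has at least four vertices, which holds for all $n \ge 3$. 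The degenerate case $n=2$ I would treat by hand: there $G$ is a $4$-cycle, whose sphere embedding is trivially unique. Combining the $3$-connectivity of $H$ with Whitney's theorem and the subdivision remark then gives the unique embedding of $G$, and I expect the topological bookkeeping in the $2$-separator step to be the only genuinely nontrivial work.
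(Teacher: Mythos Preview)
Your proposal is correct and follows essentially the same approach as the paper: the paper's ``proof'' is precisely the discussion preceding the proposition, which asserts that the only possible $2$-separators of the dual graph are the two neighbours of a digon-vertex, concludes that the graph is almost $3$-connected (a subdivision of a $3$-connected graph), invokes Whitney's uniqueness theorem, and handles $n=2$ separately as the $4$-cycle. Your plan reproduces exactly this line of reasoning, adding only a sketch of the topological argument for the $2$-separator claim that the paper leaves implicit.
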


\begin{proposition}
  The primal-dual graph of a simple connected arrangement of $n\ge 2$
  pseudocircles has a unique embedding on the sphere.
\end{proposition}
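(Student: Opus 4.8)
The plan is to reduce the statement to a classical uniqueness result by exploiting the notion of almost 3-connectivity introduced above. Write $G$ for the primal-dual graph of a simple connected arrangement $\AA$ of $n\ge 2$ pseudocircles. First I would record that $G$ is planar: placing the vertex of each arc at an interior point of that arc and the vertex of each face at an interior point of the corresponding cell, and routing the incidence edges inside the two cells flanking the arc, realizes $G$ without crossings on the sphere. Thus $G$ admits at least one embedding, and everything reduces to showing there is no other.

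Next I would pin down the degree-2 vertices of $G$. Crossing-vertices and arc-vertices have degree~$4$ (each arc joins two crossings and separates two cells, and each crossing meets four arcs), while the face-vertex of a $k$-cell has degree~$k$; hence the only degree-2 vertices are the face-vertices of digons. As recorded in the preceding discussion, the pairs of neighbours of these vertices account for the only possible $2$-separators of $G$. Suppressing each such degree-2 vertex -- replacing it and its two incident edges by a single edge between its two arc-neighbours -- therefore produces a graph $H$ with no $2$-separator, that is, a $3$-connected graph of which $G$ is a subdivision (if $\AA$ is digon-free then already $G=H$). This is exactly the statement that $G$ is almost 3-connected.

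I would then invoke Whitney's theorem: a $3$-connected planar graph has a unique embedding on the sphere (up to reflection). Since $G$ is planar and is a subdivision of $H$, the graph $H$ is planar as well, so Whitney's theorem applies to $H$. It remains to transfer uniqueness from $H$ to $G$. Combinatorial embeddings on the sphere are encoded by rotation systems taken up to a global reflection; subdividing an edge of $H$ into a path only inserts vertices of degree~$2$, whose rotations are forced, and leaves the cyclic order of edges around every original vertex of $H$ unchanged. Consequently the rotation systems of $G$ correspond bijectively to those of $H$, and uniqueness for $H$ yields uniqueness for $G$.

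The only delicate point is this last transfer step, namely that passing to a subdivision creates no new combinatorial embedding; it is where a little care is needed, though it becomes routine once phrased in terms of rotation systems. The structural input -- that digon face-vertices are the sole source of $2$-separators, so that $H$ is genuinely $3$-connected -- has already been supplied above, and the very same argument proves Proposition~\ref{prop:dual_graph_unique_embedding} for the dual graph verbatim.
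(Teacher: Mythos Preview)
Your proposal is correct and follows essentially the same route as the paper: the paper's argument is precisely that the primal-dual graph is almost 3-connected (the only 2-separators being the neighbour pairs of digon face-vertices), and that subdivisions of 3-connected planar graphs inherit unique sphere embeddings via Whitney's theorem. You have simply unpacked these two sentences---the degree analysis singling out digon face-vertices as the only degree-2 vertices, the explicit suppression to obtain $H$, and the rotation-system transfer from $H$ back to $G$---but the skeleton is identical.
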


Note that the statement of Proposition~\ref{prop:dual_graph_unique_embedding} 
clearly holds for $n=2$, where the dual graph is the 4-cycle.

\subsection{Enumeration of Arrangements}
\label{ssec:enumeration_of_arrangements}

The database of all intersecting arrangements of up to $n=7$ pseudocircles was
generated with a recursive procedure.  Arrangements of pseudocircles were
represented by their dual graphs.  The recursion was initiated with the unique
arrangement of two intersecting pseudocircles. Given the dual of an
arrangement we used a procedure which generates all possible extensions by one
additional pseudocircle.  The procedure is based on the observation that a
pseudocircle in an intersecting arrangement of $n$ pseudocircles corresponds to
a cycle of length $2n-2$ in the dual graph. 
A problem is that an arrangement of $n$
pseudocircles is generated up to $n$ times.  Since the embedding of the dual
graph is unique, we could use the canonical labeling provided by the
Graph-package of SageMath~\cite{sagemath_website} to check whether an
arrangement was found before\footnote{We recommend the Sage Reference Manual
  on Graph Theory~\cite{sagemath_graphtheory_manual} and its collection of
  excellent examples.}.

Another way for obtaining a database of all intersecting
arrangements of $n$ pseudocircles for a fixed value of $n$, is to start with
an arbitrary intersecting arrangement of $n$ pseudocircles and then perform a
recursive search in the flip-graph using the triangle flip operation (cf.\
Subsection~\ref{ssec:flipgraph_results}).

Recall that the dual graph of a connected arrangement contains multiple edges
if the removal of one of the pseudocircles disconnects the arrangement.
Hence, to avoid problems with non-unique embeddings, we modeled connected
arrangements with their primal-dual graphs. To generate the database of all
connected arrangements for $n\leq 6$, we used the fact that the flip-graph is
connected, when both triangle flips and digon flips are used
(cf.~Subsection~\ref{ssec:HS-flips}). The arrangements were created with a
recursive search on the flip-graph.  

\subsection{Generating Circle Representations}
\label{ssec:representations}

Having generated the database of arrangements of pseudocircles, we were then
interested in identifying the circularizable and the non-circularizable ones.

Our first approach was to generate arrangements of circles $C_1,\ldots,C_n$
with centers $(x_i,y_i)$ and radius $r_i$ by choosing triples $x_i,y_i,r_i$ at
random from $\{1,\ldots,K\}$ for a fixed constant $K \in \mathbb{N}$. In the
database the entries corresponding to the generated arrangements were marked
circularizable. Later we used known circle representations to find new ones by
perturbing values.  In particular, whenever a new circle arrangement was
found, we tried to locally modify the parameters to obtain further new
ones. With these quantitative approaches we managed to break down the list for
$n=5$ to few ``hard'' examples, which were then treated ``by hand''.

For later computations on $n=6$ (and $n=7$), we also used the information from
the flip-graph on all arrangements of pseudocircles.  In particular, to find
realizations for a ``not-yet-realized'' arrangement, we used neighboring
arrangements which had already been realized for perturbations. This approach
significantly improved the speed of realization.

Another technique to speedup our computations was to use floating point
operations and, whenever a solution suggested that an additional arrangement
is circularizable, we verified the solution using exact arithmetics.  Note
that the intersection points of circles, described by integer coordinates and
integer radii, have algebraic coordinates, and can therefore be represented by
minimal polynomials.  All computations were done using the computer algebra
system SageMath~\cite{sagemath_website}\footnote{
For more details, we refer to the
Sage Reference Manual on Algebraic Numbers and Number
Fields~\cite{sagemath_algebraicnumbers_manual}.}.

As some numbers got quite large during the computations, we took efforts to
reduce the ``size'' of the circle representations, i.e., the maximum over all
parameters $|x_i|,|y_i|,r_i$.  It turned out to be effective to scale circle
arrangements by a large constant, perturb the parameters, and divide all
values by the greatest common divisor.  This procedure allowed to reduce the
number of bits significantly when storing the circle $(x-a)^2+(y-b)^2 = r^2$
with $a,b,r \in \mathbb{Z}$.

\subsection{Counting Arrangements}

Projective arrangements of pseudolines are also known as projective abstract
order types or oriented matroids of rank~3. The precise numbers of such
arrangements are known for $n\leq 11$, 
see~\cite{Knuth1992,AichholzerAurenhammerKrasser2001,Krasser2003,AichholzerKrasser2006}. 
Hence the
numbers of great-pseudocircle arrangements given in Table~\ref{table:numbers}
are not new.  Moreover, it is well-known that there are $2^{\Theta(n^2)}$
arrangements of pseudolines and only $2^{\Theta(n \log n)}$ arrangements of
lines~\cite{Goodman1993,FelsnerGoodman2016}.  Those bounds directy translate
to arrangements of great-pseudocircles, and, in particular, 
there are at least $2^{\Theta(n^2)}$
and $2^{\Theta(n \log n)}$ arrangements of pseudocircles and circles, respectively.
In this subsection we show that the
number of such arrangements is also bounded from above by $2^{O(n^2)}$
and $2^{O(n \log n)}$, respectively.

\begin{proposition}
\label{prop:number_pseudocircle_arrangements}
There are $2^{\Theta(n^2)}$ arrangements of $n$ pseudocircles.
\end{proposition}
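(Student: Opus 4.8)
The plan is to establish matching upper and lower bounds. For the upper bound, observe that an arrangement of $n$ pseudocircles is determined by its dual (or primal-dual) graph, which is a planar graph on $O(n^2)$ vertices, since there are at most $\binom{n}{2}$ pairs of pseudocircles, each contributing at most two crossings, so the number of vertices, edges, and faces is $O(n^2)$. The number of planar graphs on $m$ vertices is $2^{O(m)}$, hence the number of such combinatorial structures is $2^{O(n^2)}$, giving the upper bound $2^{O(n^2)}$ on the number of arrangements.

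For the lower bound I would exploit the fact, already recorded in the excerpt, that arrangements of great-pseudocircles are in bijection with projective arrangements of pseudolines, together with the well-known estimate that there are $2^{\Theta(n^2)}$ arrangements of $n$ pseudolines (cited in the counting subsection via \cite{Goodman1993,FelsnerGoodman2016}). First I would recall that every projective arrangement of $n$ pseudolines yields a distinct arrangement of $n$ great-pseudocircles. Since great-pseudocircles are in particular pseudocircles, this immediately embeds the $2^{\Theta(n^2)}$ pseudoline arrangements into the class of pseudocircle arrangements, and one checks that non-isomorphic pseudoline arrangements give non-isomorphic arrangements of great-pseudocircles (the antipodal structure lets one recover the pseudoline arrangement from the great-pseudocircle arrangement). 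This yields the lower bound $2^{\Omega(n^2)}$.

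Combining the two bounds gives $2^{\Theta(n^2)}$, as claimed. The main obstacle I anticipate is making the upper bound bookkeeping precise: one must verify that the combinatorial type of a simple connected arrangement of pseudocircles really is captured by a planar graph of size $O(n^2)$ whose embedding is essentially unique, so that counting graphs overcounts by only a polynomial factor. The uniqueness of embeddings is exactly what Proposition~\ref{prop:dual_graph_unique_embedding} and its primal-dual analogue provide, so I would lean on those: the arrangement is recoverable from the abstract dual/primal-dual graph, and the number of labeled planar graphs on $O(n^2)$ vertices is $2^{O(n^2)}$, with the labeling and the choice of which vertices are crossings versus faces absorbed into the constant in the exponent. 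The lower-bound direction is comparatively routine given the bijection with pseudoline arrangements; the only subtlety there is confirming that distinct pseudoline arrangements remain distinguishable as pseudocircle arrangements, which follows from the periodic (antipodal) crossing structure of great-pseudocircles noted in Section~\ref{sec:apgc}.
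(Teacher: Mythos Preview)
Your proposal is correct and follows essentially the same approach as the paper: the upper bound encodes an arrangement by a planar graph on $O(n^2)$ vertices and appeals to the $2^{O(m)}$ count of planar graphs (the paper carries this out slightly more concretely by noting that the primal-dual graph is a quadrangulation, completing it to a triangulation, and citing Tutte's enumeration), while the lower bound comes from the $2^{\Theta(n^2)}$ pseudoline arrangements via the bijection with great-pseudocircles. The only extra step in the paper that you do not mention is a one-line reduction from arbitrary to connected arrangements---extend any arrangement by $n$ further pseudocircles to make it connected---which you should add if the statement is intended to cover the disconnected case as well.
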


\begin{proof}
  The primal-dual graph of a connected arrangement of $n$~pseudocircles is a
  plane quadrangulation on $O(n^2)$ vertices.  A quadrangulation can be
  extended to a triangulation by inserting a diagonal edge in every
  quadrangular face.  It is well-known that the number of triangulations on
  $s$ vertices is $2^{\Theta(s)}$~\cite{Tutte62}.  Hence, the number of
  connected arrangements of $n$~pseudocircles is bounded by $2^{O(n^2)}$.

Since every not necessarily connected arrangements~$\AA$ on $n$ pseudocircles
can by extended by $n$ further pseudocircles to a connected arrangement~$\AA$,
and since $O((2n)^2) = O(n^2)$, the bound $2^{O(n^2)}$ also applies to the 
number of (not necessarily connected) arrangements on $n$ pseudocircles.
\end{proof}

\begin{proposition}
There are $2^{\Theta( n \log n )}$ arrangements of $n$ circles.
\end{proposition}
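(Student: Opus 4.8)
The plan is to establish matching bounds $2^{\Omega(n\log n)}$ and $2^{O(n\log n)}$. The lower bound is essentially immediate from the preceding discussion: arrangements of $n$ great-circles are a special case of arrangements of $n$ circles, and they are in bijection with arrangements of $n$ lines, of which there are $2^{\Theta(n\log n)}$ (this is exactly the bound quoted for line arrangements, which transfers to great-circles). Distinct isomorphism classes of great-circle arrangements remain distinct as isomorphism classes of circle arrangements, so the number of arrangements of $n$ circles is at least $2^{\Omega(n\log n)}$. All the real work is therefore in the upper bound, which is the opposite situation to Proposition~\ref{prop:number_pseudocircle_arrangements}, where the bound came from counting triangulations.

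For the upper bound I would parametrize arrangements of $n$ circles in the plane by the $3n$ real coordinates $(x_i,y_i,r_i)$, giving a parameter space $\RR^{3n}$. The key claim is that the isomorphism type of the induced arrangement is determined by the sign pattern of a family $\mathcal{P}$ of polynomials of bounded degree in these $3n$ variables. By the flip analysis of Section~\ref{ssec:HS-flips}, as the parameters vary continuously the isomorphism type can only change at a digon flip (two circles becoming tangent) or a triangle flip (three circles acquiring a common point). Both events are polynomial conditions of constant degree: tangency of $C_i,C_j$ is the vanishing of $\big((x_i-x_j)^2+(y_i-y_j)^2-(r_i\pm r_j)^2\big)$, and concurrency of $C_i,C_j,C_k$ is the vanishing of a polynomial $R_{ijk}$ obtained by substituting a crossing point of two of the circles into the equation of the third and rationalizing. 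Thus $\mathcal{P}$ consists of $O(n^2)$ tangency polynomials and $O(n^3)$ concurrency polynomials, each of degree $O(1)$.

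Two parameter points lying in the same connected component of $\RR^{3n}\setminus\bigcup_{P\in\mathcal{P}}\{P=0\}$ can be joined by a path that meets no degeneracy, so along this path no digon or triangle flip occurs and the two circle arrangements are isomorphic. Hence the number of simple arrangements is bounded by the number of such components, and the lower-dimensional strata of the stratification account for the (fewer) non-simple arrangements. By the Milnor--Thom/Warren bound on the number of cells determined by real polynomials (see e.g.~\cite{Matousek2014}), a family of $m=O(n^3)$ polynomials of degree $O(1)$ in $k=3n$ variables determines at most $(O(m)/k)^{k}=(O(n^2))^{3n}=2^{O(n\log n)}$ cells. Combined with the lower bound this yields $2^{\Theta(n\log n)}$, as claimed.

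The main obstacle is the claim that the sign pattern of $\mathcal{P}$ genuinely pins down the isomorphism type. The flip analysis does most of this work by certifying that tangencies and triple concurrences are the \emph{only} type-changing events; the one point requiring care is that a crossing point of two circles has coordinates involving a square root, so the concurrency condition ``a crossing of $C_i,C_j$ lies on $C_k$'' must be rationalized into an honest polynomial $R_{ijk}$, which raises but still bounds its degree. Once this is done, the remaining estimate is a routine application of the sign-pattern bound.
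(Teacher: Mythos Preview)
Your proof is correct and follows essentially the same approach as the paper: parametrize circle arrangements by finitely many real parameters per circle, observe that isomorphism type can only change along the zero sets of $O(n^3)$ bounded-degree polynomials encoding tangencies and triple concurrences, and then apply the Milnor--Thom/Warren bound on sign patterns. The paper differs only in implementation: it works on the sphere, parametrizing each circle by the four coefficients of its supporting plane (so $\RR^{4n}$ rather than $\RR^{3n}$), which lets the concurrency polynomial $P_{ijk}$ be written directly via a $3\times 3$ Cramer-type determinant without the square-root rationalization step you describe; it also computes the degrees explicitly (6 and 8) rather than leaving them as $O(1)$.
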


The proof relies on a bound for the number of cells in an arrangement
of zero sets of polynomials (the underlying theorem is associated with
the names Oleinik-Petrovsky, Milnor, Thom, and Warren). The
argument is similar to the one given by Goodman and Pollack~\cite{Goodman1986}
to bound the number of arrangements of lines, see also
Matou\v{s}ek~\cite[Chapter~6.2]{Matousek:LDG}.

\begin{proof}
  An arrangement $\CC$ of $n$ circles on the unit sphere~$\SS$ is
  induced by the intersection of $n$ planes $\EE= \{E_1,\ldots,E_n\}$
  in 3-space with~$\SS$.  Plane $E_i$ can be described by the linear
  equation $a_ix+b_iy+c_iz+d_i = 0$ for some reals $a_i,b_i,c_i,d_i$;
  we call them the parameters of $E_i$.  Below we define a polynomial
  $P_{ijk}$ of degree 6 in the parameters of the planes, such that
  $P_{ijk}=0$ iff the three circles $E_i \cap \SS$, 
  $E_j \cap \SS$, and $E_k \cap \SS$ have a common point of
  intersection. We also define a polynomial $Q_{ij}$ of degree 8 in
  the parameters of the planes, such that $Q_{ij}=0$ iff
  the circles $E_i \cap \SS$ and $E_j \cap \SS$ touch.

  Transforming an arrangement $\CC$ into an arrangement $\CC'$ in a
  continuous way corresponds to a curve~$\gamma$ in $\RR^{4n}$ from the
  parameter vector of $\EE$ to the parameter vector of $\EE'$.
  If a triangle flip or a digon flip occurs when transforming 
  $\CC$ to~$\CC'$, then $\gamma$ intersects the zero set of a polynomial
  $P_{ijk}$ or $Q_{ij}$. Hence, all the points in a fixed cell of the
  arrangement defined by the zero set of the polymials
  $P_{ijk}$ or $Q_{ij}$ with $1\leq i< j < k\leq n$ are parameter
  vectors of isomorphic arrangements of circles. 
 
  The number of cells in $\RR^{d}$ induced 
  by the zero sets of $m$ polynomials of degree at most $D$ is
  upper bounded by $(50Dm/d)^d$ (Theorem~6.2.1 in~\cite{Matousek:LDG}).
  Consequently the number of non-isomorphic arrangements of $n$ circles,
  is bounded by $(50\cdot8\cdot2{n \choose 3}/4n)^{4n}$ which is $n^{O(n)}$. 

  \medskip 
  
  For the definition of the polynomial $P_{ijk}$ we first note (see
  e.g.~\cite[Sect.~12.3]{R-G11}) that the homogeneous coordinates of the point
  $I_{ijk}$ of intersection of the three planes $E_i,E_j,E_k$ are given
  by\footnote{ If the three planes $E_i,E_j,E_k$ intersect in a common line,
    we still take the expression as a definition for $I_{ijk}$, i.e., the
    homogeneous coordiantes are all zero.}
$$
\hbox{$\bigotimes_4$}((a_i,b_i,c_i,d_i), (a_j,b_j,c_j,d_j), (a_k,b_k,c_k,d_k)),
$$
where $\bigotimes_n$ denotes the $(n-1)$-ary analogue of the cross product in~$\RR^n$
$$
\hbox{$\bigotimes_n$}(\mathbf{v}_1,\ldots,\mathbf{v}_{n-1}) := 
\begin{vmatrix}
v_1^{(1)} & \ldots & v_{n-1}^{(1)} & \mathbf{e}_1 \\
\vdots    & \ddots & \vdots        & \vdots       \\
v_1^{(n)} & \ldots & v_{n-1}^{(n)} & \mathbf{e}_n \\
\end{vmatrix}.
$$
Each component of $I_{ijk}$ is a cubic polynomial in the
parameters of the three planes.
Since a homogeneous point
$(x,y,w,\lambda)$ lies on the unit sphere~$\SS$ if and only if 
$x^2+y^2+z^2 - \lambda^2 =0$, we get a polynomial
${P}_{ijk}$ of degree~$6$ in the parameters of the planes such
that ${P}_{ijk} = 0$ iff $I_{ijk} \in \SS$.

To define the polynomial $Q_{ij}$ we need some geometric
considerations.  Note that the two circles $E_i \cap \SS$ and $E_j \cap \SS$ touch if
and only if the line $L_{ij} = E_i\cap E_j$ is tangential to
$\SS$. Let $E^*_{ij}$ be the plane normal to $L_{ij}$ which contains
the origin. The point $I^*_{ij}$ of intersection of the three planes
$E_i,E_j,E^*_{ij}$ is on $\SS$ if and only if $E_i \cap \SS$ and
$E_j \cap \SS$ touch.

A vector $N_{ij}$ which is parallel to $L_{ij}$ can be obtained as
$\bigotimes_3((a_i,b_i,c_i) , (a_j,b_j,c_j))$.  The components of
$N_{ij}$ are polynomials of degree 2 in the parameters of the planes
in $\EE$. The components of $N_{ij}$ are the first three parameters
of $E^*_{ij}$; the fourth parameter is zero. The homogeneous
components of $I^*_{ij}$ are obtained by a $\bigotimes_4$ from the
parameters of the three planes $E_i$, $E_j$, and $E^*_{ij}$. Since the
parameters of $E^*_{ij}$ are polynomials of degree 2, 
the components of $I^*_{ij}$
are polynomials of degree 4. Finally we have to test whether $I^*_{ij}\in\SS$,
this makes $Q_{ij}$ a polynomial of degree 8 in the parameters of the
planes.
\end{proof}

\subsection{Connectivity of the Flip-Graph}
\label{ssec:flipgraph_results}

Given two arrangements of $n$ circles, 
one can continuously transform one arrangement into the other.
During this transformation (the combinatorics of) the arrangement changes 
whenever a triangle flip or a digon flip occurs. 

Snoeyink and Hershberger~\cite{SH91} showed an analog for arrangements
of pseudocircles: Given two arrangements on $n$ pseudocircles, a
sequence of digon flips and triangles flips can be applied to
transform one arrangement into the other.  In other words, they have
proven connectivity of the \emph{flip-graph} of arrangements of $n$
pseudocircles, which has all non-isomorphic arrangements as vertices
and edges between arrangements that can be transformed into each other
using single flips.

For arrangements of (pseudo)lines, it is well-known that the triangle
flip-graph is connected (see e.g.~\cite{FeWe00}).  A triangle flip in an
arrangement of (pseudo)lines corresponds to an operation in the corresponding
arrangement of great-(pseudo)circles where two ``opposite'' triangles are fliped simultaneously.  

With an idea as in the proof of the Great-Circle Theorem
(Theorem~\ref{thm:PGC_theorem}) we can show that the flip-graph for
arrangements of circles is connected.

\begin{theorem}\label{thm:intersecting_circle_flipsgraph_connected}
  The triangle flip-graph on the set of all intersecting
  (digon-free) arrangements of $n$ circles is connected 
  for every $n\in\mathbb{N}$. 
\end{theorem}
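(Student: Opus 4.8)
The plan is to phrase the statement in the parameter space $\RR^{4n}$ of plane arrangements used in the counting proof, where a triangle flip corresponds to crossing a wall $\{P_{ijk}=0\}$ and a digon flip to crossing a wall $\{Q_{ij}=0\}$. Let $U\subseteq\RR^{4n}$ be the set of parameter vectors whose circle arrangement on $\SS$ is intersecting and digon-free. A tangency $Q_{ij}=0$ makes two circles non-intersecting and a digon-creating triple point makes the arrangement non-digon-free, so both kinds of event only occur on the boundary of $U$; the walls met in the interior of $U$ are therefore triangle-flip walls between two intersecting digon-free types. Consequently it suffices to prove that $U$ is connected: any two chambers are then joined by a generic path inside $U$ which crosses interior walls one at a time, that is, by a sequence of triangle flips through intersecting digon-free arrangements.

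To connect an arbitrary point of $U$ to a great-circle arrangement I would use the shrinking deformation from the proof of the Great-Circle Theorem (Theorem~\ref{thm:PGC_theorem}), replacing every plane $E\in\EE(\CC)$ by $\tfrac1t E$ for $t\ge 1$. Scaling towards the centre maps the interior of $\SS$ into itself, and this controls both obstructions. The distance from the origin to each line $E_i\cap E_j$ is multiplied by $\tfrac1t$, hence stays below $1$, so every pair of circles remains intersecting and no tangency wall $Q_{ij}=0$ is ever crossed. For digons, Fact~\ref{fact:digon_characterization} says that a pair $(C_i,C_j)$ is non-digon precisely when some triple point $I_{ijk}$ lies inside $\SS$; since $I_{ijk}$ moves to $\tfrac1t I_{ijk}$ and a point inside $\SS$ stays inside under scaling towards the centre, every witnessing triple point is preserved and the arrangement stays digon-free. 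Thus the whole deformation runs inside $U$, and as $t\to\infty$ the planes converge to their parallels through the origin, so the path ends at a great-circle arrangement, which for $n\ge 3$ is itself intersecting and digon-free and hence lies in $U$.

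It then remains to connect the great-circle arrangements to one another inside $U$. These are exactly the arrangements of $n$ planes through the origin, parametrized by the connected space of $n$-tuples of directions, so any two are joined by a continuous motion of the planes through the origin; such a motion consists of great-circle arrangements, which stay intersecting and digon-free, except at finitely many moments where three great circles become concurrent. At each such moment I would tilt the three planes slightly off the origin so that the two antipodal triple concurrencies are resolved one at a time; by Fact~\ref{fact:Krupp_is_inside} the resolved triple then has its intersection point just inside $\SS$ and is Krupp, the perturbation is local, and all other non-digon pairs keep their inside-$\SS$ witnesses, so the detour stays in $U$. This shows that all great-circle arrangements lie in one component of $U$, and together with the previous step that $U$ is connected. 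The main obstacle throughout is to rule out digon creation under these continuous deformations, and the device that makes it work is exactly the characterization of digons by triple points relative to $\SS$ (Facts~\ref{fact:Krupp_is_inside} and~\ref{fact:digon_characterization}): digon-freeness is certified by triple points strictly inside $\SS$, and both the shrinking map and the small off-origin perturbations keep such certificates inside $\SS$.
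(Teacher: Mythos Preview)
Your overall strategy coincides with the paper's: shrink towards the origin to land on an arrangement of great-circles, then move between great-circle arrangements. Your parameter-space framing and the digon-freeness argument via Fact~\ref{fact:digon_characterization} along the shrinking path are exactly what the paper does (more tersely). The only real difference is the second step: the paper simply invokes the known connectivity of the triangle flip-graph of line arrangements to connect any two great-circle types, while you try to carry this out by hand with a continuous motion of planes through the origin and local ``tilts'' to split each antipodal double-flip into two single flips.

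That hand-made step is where your write-up is imprecise. When you tilt the three planes $E_1,E_2,E_3$ to separate the double concurrency at $\{q,-q\}$ into two single triangle flips, the triple intersection point $I_{123}$ does \emph{not} stay ``just inside $\SS$ and Krupp'' as you claim: between the two flips it necessarily lies \emph{outside} $\SS$, i.e., the triple $(C_1,C_2,C_3)$ is NonKrupp there. Consequently the pairs $(1,2)$, $(1,3)$, $(2,3)$ lose $I_{123}$ as their inside-$\SS$ witness, and your sentence ``all other non-digon pairs keep their inside-$\SS$ witnesses'' does not cover them. What actually saves digon-freeness of the intermediate arrangement is that for $n\ge 4$ each such pair still has a witness $I_{1,2,k}$, $I_{1,3,k}$, $I_{2,3,k}$ with $k\notin\{1,2,3\}$: since $E_k$ passes through the origin and the tilts are small, these triple points remain near the origin and hence inside $\SS$. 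You should make this explicit (and note that for $n\le 3$ the statement is trivial). With this fix your argument goes through and matches the paper's; the paper's shortcut of citing the line-arrangement result avoids having to spell out this local analysis.
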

\begin{proof}
  Consider an intersecting arrangement of circles $\CC$ on the unit sphere~$\SS$.
  Imagine the planes of $\EE(\CC)$ moving towards the origin. 
  To be precise, for time $t \ge 1$ let $\EE_t := \{ \nicefrac{1}{t} \cdot E : E \in \EE(\CC) \}$.
  During this process only triangle flips occur as the arrangement is
  already intersecting and  
  eventually the point of intersection of any
  three planes of $\EE_t$ is in the interior of the unit sphere~$\SS$.
  Thus, in the circle arrangement obtained by intersecting
  the moving planes $\EE_t$ with~$\SS$ 
  every triple of circles forms a Krupp,
  that is, the arrangement becomes a great-circle arrangement.  
  Since the triangle flip-graph of line arrangements is connected, 
  we can use triange flips to get to any other great-circle arrangement. 
  Note that, due to Fact~\ref{fact:digon_characterization},  
  no digon occurs in the arrangement during the whole process
  in the setting of digon-free arrangements.

  Consequently, any two arrangements of circles $\CC$ and $\CC'$ can be
  flipped to the same great-circle arrangement without digons to
  occur, and the statement follows.
\end{proof}

Based on the computational evidence for $n\leq 7$, we 
conjecture that the following is true.

\begin{conjecture}\label{conj:connected_flipgraph_intersecting_digonfree}
  The triangle flip-graph on the set of all intersecting
  (digon-free) arrangements of $n$ pseudocircles is connected 
  for every $n\in\mathbb{N}$.
\end{conjecture}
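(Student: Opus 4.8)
The plan is to attack this pseudocircle statement by imitating, as far as possible, the proof of its circle counterpart Theorem~\ref{thm:intersecting_circle_flipsgraph_connected}, while replacing the continuous deformation of planes by a combinatorial reduction, since arrangements that are not circularizable admit no plane representation to deform. As in that proof, the target class would be the arrangements of great-pseudocircles. These are in bijection with projective arrangements of pseudolines (Section~\ref{sec:apgc}), and the triangle flip-graph of arrangements of pseudolines is known to be connected (see e.g.~\cite{FeWe00}); a single pseudoline triangle flip corresponds to the simultaneous flip of a pair of antipodal triangles in the great-pseudocircle arrangement. Performing the two flips of such a pair one after the other keeps the arrangement intersecting (a triangle flip does not change which pairs of pseudocircles cross) and digon-free (a triangle flip turns a triangle into a triangle and creates no digon), so all arrangements of great-pseudocircles on $n$ pseudocircles lie in a single connected component of the triangle flip-graph. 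It would then suffice to connect an arbitrary intersecting digon-free arrangement $\AA$ to some great-pseudocircle arrangement by admissible (intersecting, digon-free) triangle flips.

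To drive $\AA$ towards a great-pseudocircle arrangement I would introduce the monovariant $\nu(\AA)$, the number of triples of pseudocircles that form a NonKrupp; by definition $\nu(\AA)=0$ holds precisely for arrangements of great-pseudocircles. This mirrors the circle proof: there the flow $\EE_t$ pushes every triple point inside $\SS$, turning every triple into a Krupp (Fact~\ref{fact:Krupp_is_inside}) without ever producing a digon (Fact~\ref{fact:digon_characterization}), and each passage of a triple point through $\SS$ is exactly a triangle flip that lowers $\nu$ by one. The combinatorial goal would be to show that whenever $\nu(\AA)>0$ there exists a triangular cell whose bounding triple is a NonKrupp and whose flip makes that triple a Krupp, so that one admissible triangle flip strictly decreases $\nu$; iterating reaches $\nu=0$ and hence a great-pseudocircle arrangement.

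The hard part, and the reason the statement is only conjectured, is precisely this reduction step. For circularizable arrangements the geometric flow guarantees it, but the class here also contains non-circularizable arrangements such as $\AAsixA$, for which no plane arrangement and hence no flow exists; one therefore needs a purely combinatorial certificate that a $\nu$-decreasing admissible triangle flip is always available. Two difficulties stand in the way: a NonKrupp triple need not bound an actual triangular cell of $\AA$ (further pseudocircles may cut through it), so the relevant flip may simply be absent; and even granting availability, it is unclear that a triangle-flip path to a great-pseudocircle arrangement exists that never has to pass through an arrangement with a digon. An alternative route would start from the Snoeyink--Hershberger theorem~\cite{SH91} that the full flip-graph, using both triangle and digon flips, is connected, and try to normalize a path between two intersecting digon-free arrangements so as to cancel digon flips in pairs while staying within the subclass; but rerouting a connecting path to suppress all digon creations is itself exactly the obstruction. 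Failing one of these two steps, the evidence for the conjecture remains the exhaustive computer verification for $n\le 7$.
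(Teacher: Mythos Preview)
The statement is a \emph{conjecture} in the paper; there is no proof to compare against. The paper only records computational evidence for $n\le 7$ and leaves the assertion open. Your write-up correctly recognises this and does not claim a proof either: you outline the natural strategy (drive every arrangement to a great-pseudocircle arrangement via triangle flips, then use connectivity of the pseudoline flip-graph) and you isolate the two genuine obstructions --- a NonKrupp triple need not bound an actual triangular cell, and a triangle-flip path may be forced through arrangements with digons. That diagnosis matches the state of affairs in the paper.

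One technical correction. Your parenthetical assertion that ``a triangle flip turns a triangle into a triangle and creates no digon'' is false in general: flipping any triangle of the Krupp (the unique intersecting digon-free arrangement for $n=3$) yields the NonKrupp, which has digons. What \emph{is} true, and what your argument actually needs, is the weaker statement that in a great-pseudocircle arrangement with $n\ge 4$ a single triangle flip creates no digon, and neither does the subsequent flip of its antipodal twin. This holds because the crossing sequence along each pseudocircle has period $n-1\ge 3$, so no pattern $\ldots,C_j,C_k,C_j,\ldots$ is present to be turned into $\ldots,C_j,C_j,\ldots$. With this repaired, your claim that all great-pseudocircle arrangements lie in one component of the triangle flip-graph on intersecting digon-free arrangements is correct (trivially for $n\le 3$, by the above for $n\ge 4$); the missing ingredient remains exactly the $\nu$-decreasing step you identify.
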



\section{Further Results and Discussion}
\label{sec:further-results}
%
\label{ssec:discussion_circularizability_6}

In course of this paper, we generated circle representations or proved
non-circularizability for all connected arrangements of $n \le 5$
pseudocircles (cf.~Section~\ref{sec:non-circ5}) and for all digon-free
intersecting arrangements of $n \le 6$ pseudocircles
(cf.~Section~\ref{sec:non-circ5}).  Besides that, we also investigated the
next larger classes and found
\begin{itemize}
 \item 
about 4\,400 connected digon-free arrangements of 6 circles (which is about 98\%), 
 \item 
about 130\,000 intersecting arrangements of 6 circles (which is about 90\%), and
 \item 
about 2 millions intersecting digon-free arrangements of 7 circles (which is about 66\%).
\end{itemize}
For our computations (especially the last two additional items), 
we had up to 24 CPUs running over some months
with the quantitative realization approaches described in Section~\ref{ssec:representations}.

We further investigated arrangements that were not realized by our computer
program and have high symmetry or other interesting properties.
Non-circularizability proofs for some of these candidates were presented in
Section~\ref{sec:noncirc_misc}.  Since we have no automated procedure for
proving non-circularizability, these proofs had to be done by hand.

\begin{problem}
Find an algorithm for deciding circularizability which is practical.
\end{problem}

In the case of stretchability of arrangements of pseudolines, the method based
on final polynomials, i.e., on the Gra\ss{}mann-Pl\"ucker relations, would
qualify as being practical (cf.~\cite{BOKOWSKI199021}).

\small
\advance\bibitemsep-0pt
\let\sc=\scshape
\bibliographystyle{my-siam}
\bibliography{bibliography}

\end{document}